\PassOptionsToPackage{prologue,dvipsnames}{xcolor}

\documentclass{article}
\let\SUP\textsuperscript
\usepackage[T1]{fontenc}
\usepackage{graphicx} 
\usepackage{subcaption}
\usepackage{amsmath}
\usepackage{amssymb}
\usepackage{relsize}
\usepackage{tikz-qtree}
\usepackage{tikz}
\usetikzlibrary{trees,arrows}
\usepackage{verbatim}
\usepackage{braket}
\usepackage{xargs}
\usepackage{xcolor}
\usepackage[colorinlistoftodos,prependcaption,textsize=small]{todonotes}
\usepackage{soul}    % For \st command
\usepackage{xifthen} 

\usepackage{etoolbox} % For conditional commands
\usepackage[title]{appendix}
\usepackage{afterpage}
\usepackage{cite}
\usepackage[numbers, sort&compress]{natbib}
\usepackage{flafter}
\usepackage{float}
\usepackage{algorithm}
\usepackage{algorithmic}
\usepackage{physics}
\usepackage{floatflt}

\usetikzlibrary{calc, shapes, backgrounds}
\usetikzlibrary{shapes.geometric, arrows}

\usetikzlibrary{calc, shapes, backgrounds}
\usetikzlibrary{shapes.geometric, arrows}
\usetikzlibrary{automata, arrows.meta, positioning}
\usepackage{tabu}
\usepackage[hidelinks]{hyperref}
\usepackage[utf8]{inputenc}
\usepackage[acronym]{glossaries}
\usepackage{csquotes}
\usepackage{orcidlink}
\usepackage[export]{adjustbox}
\usepackage[percent]{overpic}

\newcolumntype{L}{>{\centering\arraybackslash}m{5cm}}
\newcolumntype{Z}{>{\centering\arraybackslash}m{1.6cm}}

\usepackage[textwidth=17cm,textheight=21.7cm]{geometry}

\usepackage[shortlabels]{enumitem}
\usepackage{diagbox}
\usepackage{graphicx}
\usepackage{subcaption}
\usepackage{booktabs}
\usepackage{placeins}
\usepackage{amsthm}
\theoremstyle{definition}
\newtheorem{definition}{Definition}[section]
\theoremstyle{lemma}
\newtheorem{proposition}{Proposition}
\theoremstyle{remark}

\usepackage{appendix}
\usepackage{wrapfig}

\title{Designing a Satellite Serviced Quantum Network Backbone for Concurrent Global Connectivity}
\date{September 2025}
\author{Prateek Mantri\orcidlink{0009-0005-3223-9731}\SUP{1}, Stav Haldar\SUP{2}, Albert Williams\SUP{3}, and Don Towsley\SUP{4}\\ \small
\{\SUP{1}pmantri, 
\SUP{2}stavhaldar,
\SUP{3}abwillia,
\SUP{4}towsley\}@cs.umass.edu}
\date{\textit{Robert and Donna Manning College of Information and Computer Sciences \\ University of Massachusetts, Amherst, USA, 01002}}

\begin{document}

\maketitle
\begin{abstract}
Satellite-serviced quantum networks pose an architectural problem distinct from classical satellite networking: because entanglement cannot be copied, and long-lived buffering is technologically constrained for near-term devices, useful end-to-end service requires fixed optical ground infrastructure and simultaneous multi-hop path availability. We investigate the design of a satellite-serviced quantum backbone aimed at supporting \emph{concurrent} global connectivity across a traffic matrix of major population and financial centers under finite waiting-time constraints. Using a discrete-time simulator that combines dynamic orbital motion, a Gaussian-beam optical link-efficiency model, and explicit per-satellite service policies, we evaluate performance using two architecture-level metrics: (i) time-to-connectivity, and (ii) latency-conditioned average active-link strength. Across a broad parameter sweep, we identify three dominant architectural effects. First, \emph{anisotropic} ground-station lattices reduce time-to-connectivity relative to longitudinally collapsed and isotropic baselines by aligning ground infrastructure with latitude-dependent satellite access. Second, \emph{multi-inclination} LEO constellations reduce waiting times for strong connectivity compared to single-inclination constellations at fixed satellite budgets by providing additional visibility for a diverse latitude set. Third, \emph{multi-party} satellite service policies alleviate per-satellite concurrency bottlenecks and substantially reduce time-to-connectivity at stringent traffic-matrix thresholds, including when normalized by total optical-terminal count. We further show that satellite altitude is the dominant physical lever shaping the visibility--loss trade-off, strongly affecting both connectivity latency and achievable link strength, while orbital plane count and satellite packing provide secondary refinements at fixed altitude. Together, these results delineate the architectural conditions required for scalable, concurrent entanglement connectivity in satellite-serviced quantum networks.
\end{abstract}
\section{Introduction}

Satellite-assisted quantum communication offers a promising route to overcoming the fundamental distance limitations of terrestrial optical fiber networks~\cite{pirandola_satellite_2021, Dolinar2011, yin2017satellite, Pan_exp_crypto, Pan_exp_teleport, Pan_satellite_QCS, goswami2025satellites}. Long-distance optical-fiber entanglement distribution requires repeater chains with multiplexing and long-lived quantum memories~\cite{muralidharan2014ultrafast, guha2015rate, muralidharan_optimal_2016, mantri20241-2-wayquantum}. Free-space optical links to satellites, by contrast, experience substantially lower loss over long distances, enabling global entanglement distribution with comparatively modest physical-layer requirements~\cite{Vasylyev2019, sidhu2021advances, haldar2024bell, pirandola_satellite_2021, goswami_satellite-relayed_2023, anipeddi2025optical, Lanning2022}. Satellite platforms therefore enable global entanglement distribution without dense terrestrial repeater chains~\cite{khatri_spooky_2021, yehia_connecting_2024, shao_hybrid_2025}.

Satellite-serviced quantum networks differ fundamentally from classical satellite communication systems. Classical Low-Earth orbit (LEO) satellite networks are shaped by throughput, coverage, handover efficiency, user access, latency variation, and transient congestion~\cite{chowdhury2006handover, li2020user, ccelikbilek2022survey, liu2024handover}. These challenges are significant, but classical systems can mitigate connectivity variation through routing, buffering, retransmission, transport-layer adaptation, beam steering, and store-and-forward operation~\cite{valentine2025leotcp, perrin2025measurement}. In quantum networks, however, the transported resource is entanglement: it cannot be copied~\cite{wootters1982single}, cannot be buffered indefinitely~\cite{sangouard_quantum_2009, Childress_Hanson_2013, sangouard_quantum_2011, schafer2018fast, semenenko2022entanglement, azuma2022quantum}, and must be established across simultaneous multi-hop paths within finite memory-coherence windows~\cite{li2021effective, semenenko2022entanglement, azuma2022quantum, pant_routing_2019}. As a result, the spatial relationship between ground-station placement, orbital visibility, and per-satellite servicing capability becomes a first-order architectural constraint rather than a secondary deployment detail~\cite{pirandola2021satellite, anipeddi2025optical}.

Despite this, existing satellite quantum-network proposals often inherit ground-station layouts and constellation structures from classical or link-centric studies~\cite{yehia_connecting_2024}. For example, Khatri \emph{et al.}~\cite{khatri_spooky_2021} consider a global grid with equal separation in latitude and longitude, while Shao \emph{et al.}~\cite{shao_hybrid_2025} use approximately uniform Euclidean spacing over continental regions. Other studies have looked into demand-centric grids or random locations~\cite{brito2021satellite, yehia_connecting_2024, anipeddi2025optical}. These choices are natural baselines, but they do not explicitly account for the latitude-dependent structure of LEO satellite visibility. In circumpolar constellations, satellite pass density increases with latitude, creating more orbital access near the poles than near the equator. A ground-station lattice that is uniform in geometric spacing or angular coordinates therefore does not necessarily match the spatial distribution of satellite service opportunities. This mismatch can leave some regions connectivity-limited while concentrating ground-station resources in regions where satellite access is already abundant.

The constellation architecture introduces a coupled design dimension. Prior work has studied regimes ranging from single high-altitude satellites serving regional networks~\cite{shao_hybrid_2025, yehia_connecting_2024} to modest single-shell LEO constellations with a few hundred satellites~\cite{khatri_spooky_2021, yehia_connecting_2024}. However, several network-level questions remain insufficiently explored: how ground-station geometry affects global connectivity latency; how orbital diversity mitigates visibility gaps; how altitude trades visibility against link quality; and how per-satellite servicing capability interacts with ground-station placement to determine achievable concurrent connectivity under finite waiting-time constraints.

These questions are becoming increasingly relevant as satellite optical-access technologies mature~\cite{cubesat:2020, oi_cubesat_2017, Nano_sat}. Recent progress in chip-scale entanglement sources~\cite{mahmudlu_fully_2023, simmons_scalable_2024, chapman_-chip_2025} and multi-terminal optical-access technologies~\cite{aguilar_multiple_nodate, goorjian_using_nodate} suggests that future satellites may be able to host multiple entanglement sources and serve several ground stations concurrently. Commercial LEO satellites already operate multiple simultaneous optical links~\cite{starlink_technology, AmazonLeo}, and compact terminals such as TESAT's SCOT20 support multi-terminal architectures even on CubeSat-scale platforms~\cite{tesat_scot20}. Such capabilities can convert excess geometric visibility into network concurrency, motivating a re-examination of ground-station placement, constellation structure, and satellite operation.

In this work, we study a global satellite-assisted quantum backbone in which LEO satellites distribute photonic Bell pairs to a lattice of optical ground stations connected to terrestrial fiber networks. We isolate infrastructure-level design principles from protocol-layer considerations~\cite{caleffi2022_1} and focus on a conservative operating regime: concurrent global connectivity under finite waiting-time constraints, with minimal reliance on long-lived quantum memories. This perspective emphasizes architectural bottlenecks that persist even under favorable physical-layer assumptions. We evaluate architectures using network-level metrics including time-to-connectivity, traffic-matrix connectivity, and latency-conditioned active-link strength.

Our study is organized around three design principles. First, we introduce an \emph{anisotropic ground-station lattice}, in which station spacing varies with latitude according to a tunable parameter~$\alpha$, compensating for both longitudinal compression and latitude-dependent satellite visibility. Second, we study \emph{multi-shell LEO constellations}, which distribute satellites across inclination shells to reduce synchronized coverage gaps and improve temporal continuity. Third, we consider \emph{multi-point satellite service}, where a single satellite can distribute bipartite entanglement links to multiple ground stations concurrently.

In summary, we make the following contributions:
\begin{itemize}[itemsep=0em]
    \item We introduce an anisotropic ground-station grid design that accounts explicitly for latitude-dependent satellite visibility and reduces global connectivity waiting times relative to isotropic and longitudinal baselines.
    \item We evaluate satellite constellation parameters, including orbital structure, satellites per plane, and altitude, with the objective of minimizing time-to-connectivity under finite waiting-time constraints. We show that coupling a near-polar shell ($98^\circ$) with a mid-inclination shell ($53^\circ$) mitigates visibility gaps and enables reliable high-threshold global connectivity at fixed satellite budgets.
    \item We show that multi-point satellite service substantially reduces time-to-connectivity by alleviating per-satellite concurrency limits, including when normalized by total terminal budget.
\end{itemize}

This manuscript is organized as follows: Section~\ref{sec:system} introduces the system design including the ground station grid architecture, satellite constellations, and on-board and on-ground technology requirements. Section~\ref{sec:Perf_Eval} introduces the strategy for evaluating performance of different architectures, including the metrics considered in this manuscript. Section~\ref{sec:Results} presents the results comparing the different designs, and outlines design insights from the evaluation. This is followed by Section~\ref{sec:Conclusion} outlining the conclusions of this study, and a discussion on future directions.

%%%%%%%%%%%%%%%%%%%%%%%%%%%%%%%%%%%
%%%%%% Sec: System Description %%%%
%%%%%%%%%%%%%%%%%%%%%%%%%%%%%%%%%%%
\section{System Description}
\label{sec:system}

We begin by introducing the key architectural parameters that govern satellite constellation geometry and temporal access patterns. Because these parameters recur throughout the system model and performance evaluation, we first summarize them at a conceptual level and illustrate their geometric meaning in Fig.~\ref{fig:constellation_cartoon}, before specifying the full system assumptions used in our simulations.

\subsection{Background: Satellite constellation}
\label{subsec:background_orbital_model}

Satellite-serviced networks are governed by a small set of orbital design parameters that determine where, when, and how satellites can interact with ground infrastructure. Since, satellite motion follows well-established orbital mechanics, many system-level performance characteristics can be traced to a limited number of architectural degrees of freedom. We first define these parameters precisely and then describe their operational implications. Fig.~\ref{fig:constellation_cartoon} provides a schematic illustration of the geometry introduced below.

\vspace{-0.5em}
\paragraph{Orbital altitude.}
The \emph{orbital altitude} $h$ is defined as the height of a satellite above the Earth's surface, measured relative to a fixed mean Earth radius $R_{\oplus}$, which is treated as constant throughout this work. Together with $R_{\oplus}$, the altitude determines the satellite's orbital period and its geometric distance to ground stations. Operationally, lower-altitude satellites experience reduced free-space path loss and shorter propagation delay, but cover a smaller region of the Earth's surface at any given time, requiring larger constellations to achieve continuous or near-continuous global access.

\vspace{-0.5em}
\paragraph{Orbital inclination.}
The \emph{orbital inclination} $i$ is the angle between the satellite's orbital plane and the Earth's equatorial plane. Inclination determines the maximum and minimum latitudes reached by the satellite ground track and therefore controls which latitude bands can be served by a constellation. For a given latitude, the orbital altitude $h$, and the orbital inclination determines the temporal frequency for a satellite overpass.

\vspace{-0.5em}
\paragraph{Elevation and zenith angles.}
At a ground station, the \emph{elevation angle} $\theta$ of a satellite is defined as the angle between the satellite line-of-sight vector and the local horizontal plane. The corresponding \emph{zenith angle} is defined as the angle between the line-of-sight vector and the local vertical and is given by $z = 90^\circ - \theta$. Throughout this work, satellite visibility is determined by imposing a minimum elevation constraint $|\theta| \geq \theta_{\min}$ (or equivalently $|z| \leq z_{\max} $). Note that throughout this work $z_{\max}$ is introduced as a fixed constraint and not an independent optimization parameter.

\vspace{-0.5em}
\paragraph{Satellite footprint.}
For a satellite at orbital altitude $h$ and a prescribed minimum elevation constraint $\theta \ge \theta_{\min}$, its \emph{footprint} is defined as the geographic region on the Earth's surface consisting of all points on ground from which the satellite is visible under this constraint at a given instant. The footprint size determines the number of ground stations that are geometrically visible at any given instant during a satellite pass and therefore sets the geometric upper bound on achievable spatial concurrency. Formally, the satellite footprint is the set
\begin{align*}
    \mathcal{F}(h, \theta_{\min}, t) &= \left\{ x \in \mathcal{S} \quad : \quad \theta(x; h,t) \ge \theta_{\min} \right\},
\end{align*}
\noindent where $\theta(x;h,t)$ denotes the elevation angle of the satellite relative to ground location $x$ at time $t$, defined only for ground locations with a direct line-of-sight (LOS) to the satellite, and $\mathcal{S} \subset \mathbb{S}^2$ denotes the set of ground locations under consideration. For fixed $\theta_{\min}$, the footprint diameter increases with altitude admitting more candidate ground stations per pass, but at the cost of longer optical path lengths and correspondingly weaker individual links due to increased free-space loss. 

\vspace{-0.5em}
\paragraph{Orbital planes and RAAN.}\label{subsec:planes_RAAN}
Satellites are organized into \emph{orbital planes}, each corresponding to a distinct great-circle trajectory around the Earth. The orientation of an orbital plane is specified by its \emph{right ascension of the ascending node} (RAAN, denoted by $\Omega$, with distinct planes indexed by $\{\Omega_p\})$, which defines the longitude at which the orbit crosses the equatorial plane in the northbound direction. The number of orbital planes and their relative RAAN spacing determine how coverage is distributed across longitudes and how uniformly visibility opportunities are spread as the Earth rotates beneath the orbital planes.

\vspace{-0.5em}
\paragraph{Satellite constellation and shells.}
A satellite constellation can be viewed as a collection of orbital planes, each characterized by an orbital altitude $h_p$, an inclination $i_p$, and a distinct RAAN $\Omega_p$ for each distinct member of the set $(h,i)$. We denote by $P$ the total number of orbital planes in the constellation, with planes typically spaced uniformly in RAAN within groups of similar orbital parameters to distribute coverage across longitudes. Formally, a constellation is defined as a set of $P$ orbital planes,
\begin{align*}
    \mathcal{P} = \{ \Pi_1, \ldots, \Pi_P \},
\end{align*}

\noindent where each plane $\Pi_p$ is specified by a triple $(h_p, i_p, \Omega_p)$. Planes with common $(h_p,i_p)$ values form a shell. Single-shell constellations correspond to the special case in which all planes share a common altitude $h$ and inclination $i$.

\vspace{-0.5em}
\paragraph{In-plane phase offsets.}
Within a single orbital plane, satellites are placed at fixed \emph{in-plane phase offsets} $\phi_{p,s}$, which specify their relative angular separation along the orbit. For a plane containing $S$ satellites, these offsets are typically chosen uniformly as,

\begin{align*}
    \phi_{p,s} = \frac{2\pi s}{S}, \quad s = 0,\ldots,S-1.
\end{align*}

\noindent These phase offsets determine the temporal spacing between successive satellite passes over a given region and therefore directly affect revisit times and short-term connectivity availability at individual ground stations.

\vspace{-0.5em}
\paragraph{Terminal concurrency constraints.}
Each satellite is equipped with a finite number of optical terminals or telescopes (denoted by $T$), limiting the number of ground stations that can be served concurrently during a pass. Similarly, ground stations must support sufficient receiver capacity to sustain parallel links. When more ground stations are visible within a footprint than can be simultaneously served, per-satellite concurrency rather than geometric visibility becomes the binding constraint.

\vspace{-0.5em}
\paragraph{Multi-shell constellations.}
As defined above, a \emph{shell} is defined as a group of orbital planes sharing a common altitude and inclination. We index shells by $\texttt{k}$ and denote by $\mathcal{P}^{(\texttt{k})} = \{ \Pi^{(\texttt{k})}_1, \ldots, \Pi^{(\texttt{k})}_{P_\texttt{k}} \} \subset \mathcal{P}$, the subset of orbital planes comprising shell $\texttt{k}$, where each plane $\Pi^{(\texttt{k})}_p$ is specified by a RAAN value $\Omega^{(\texttt{k})}_p$. A satellite constellation may comprise multiple shells with distinct orbital parameters $\{(h_\texttt{k},i_\texttt{k})\}_{\texttt{k}=1}^\texttt{K}$. Multi-shell constellations introduce diversity in coverage patterns, reducing global visibility gaps and improving robustness against long waiting times for large-scale connectivity.

In this work, we treat these parameters as high-level, tunable architectural variables and examine how their interaction with ground-station placement and satellite service policies governs end-to-end connectivity and latency in satellite-serviced quantum networks.

\begin{figure}[!htb]
    \centering
    \includegraphics[width=0.75\linewidth, trim = 5cm 1.2cm 5cm 2cm, clip]{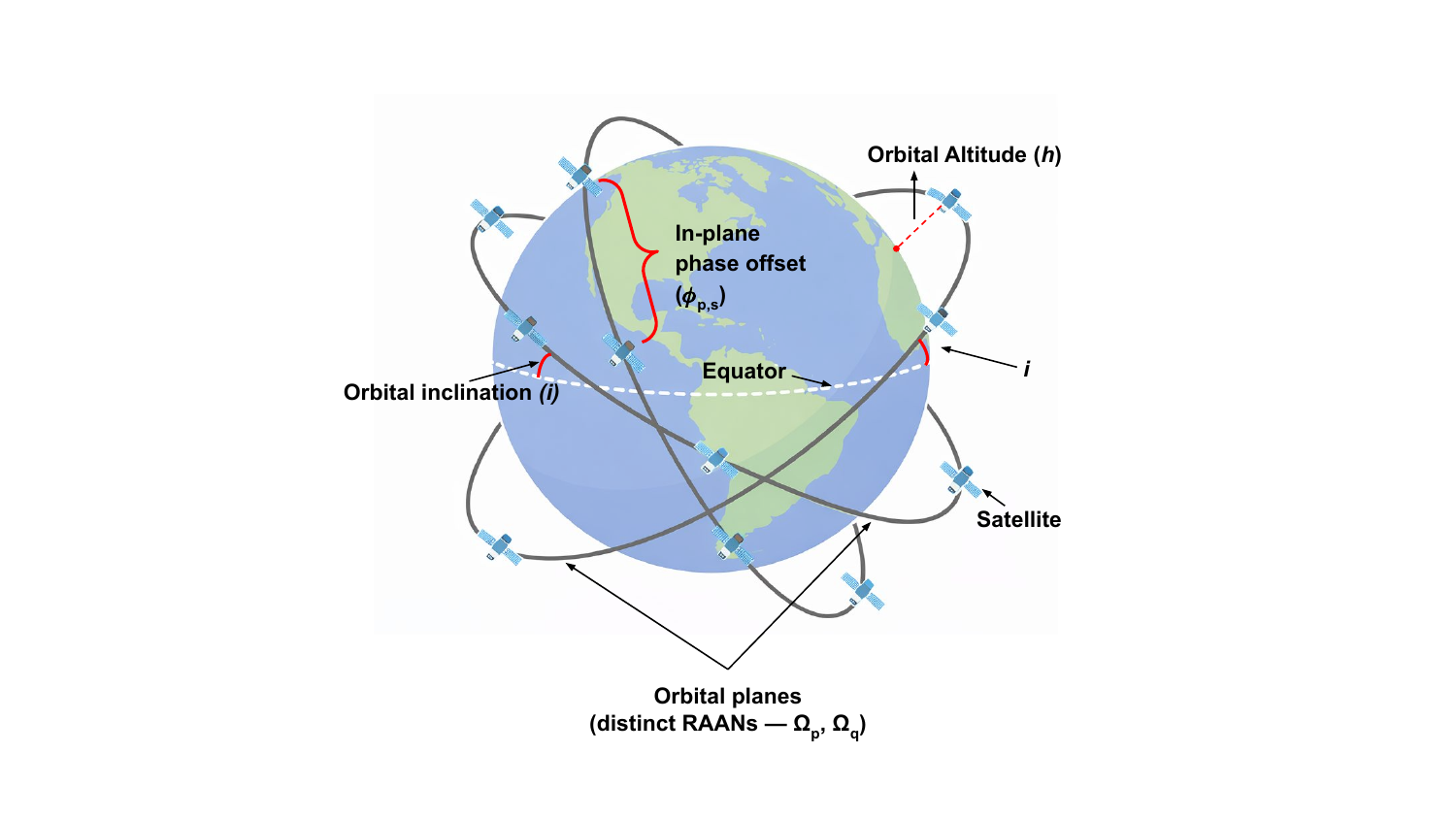}
    \caption{
    \textbf{Schematic illustration of the orbital parameters governing satellite-serviced network geometry.}
    Satellites are arranged in inclined orbital planes distinguished by their right ascension of the ascending node (RAAN), which sets the longitudinal orientation of each plane.
    Within a plane, satellites are separated by in-plane phase offsets that determine along-track spacing and revisit times.
    Orbital altitude controls footprint size and link loss, while orbital inclination determines the latitude range over which ground stations are served. Adapted from~\cite{zhang2024depth}.
    }
    \label{fig:constellation_cartoon}
\end{figure}

\subsection{System Introduction}
\label{subsec:system_intro}

In this work, we view the satellite--ground quantum backbone as an interaction of three design choices: (i) the ground-station layout, (ii) the orbital configuration that governs when and where stations are visible, and (iii) the per-satellite service capability that determines how many links can be supported within a footprint. This section specifies the system model and default assumptions used in the evaluation, with the aim of characterizing network-level concurrent connectivity under finite waiting-time constraints.

\vspace{-0.5em}
\paragraph{Ground segment.}
Ground-station spacing determines how many candidate stations fall within a typical satellite footprint. If stations are too dense, many compete for the same limited set of simultaneous downlinks; if stations are too sparse, footprints contain too few candidates and satellite service capacity is under-exploited. We therefore target a \emph{balanced} regime in which each footprint contains multiple candidates, but not so many that per-satellite concurrency dominates performance.

\vspace{-0.5em}
\paragraph{Space segment.}
Constellation geometry determines the temporal and geographic availability of satellite service. Altitude sets a footprint--loss trade-off: higher orbits increase contact duration and footprint size but reduce per-link efficiency, whereas lower orbits strengthen links but require denser constellations to avoid long outages. Orbital inclination controls latitudinal access. Finally, single-shell deployments can exhibit synchronized visibility gaps that delay the emergence of large connected components at high thresholds; we therefore also consider multi-shell designs that introduce orbital diversity across inclinations.

\vspace{-0.5em}
\paragraph{Hardware capabilities.}
Hardware determines how much of the available visibility can be converted into usable links. Multiple optical terminals per satellite increase the number of concurrent links supportable within a footprint, while larger ground apertures improve collection efficiency. These capabilities are constrained by on-board size, weight, and power (SWaP) limits and by ground-station deployment cost. Accordingly, we treat the optical terminal count per satellite and the satellite allocation across shells as explicit architectural parameters and evaluate their impact on connectivity and waiting-time metrics.

\vspace{-0.5em}
\paragraph{Network architecture.}
We model satellites as serving multiple ground stations concurrently via parallel downlinks. Ground stations connect to nearby metropolitan or regional fiber infrastructure, which enables local aggregation and delivery to end users. While our focus is the satellite-serviced backbone, the model is compatible with hybrid architectures that include higher-altitude layers (e.g., MEO/GEO) for coordination or buffering, and inter-orbit integration when required.

%%%%%%%%%%%%%%%%%%%%%%%%%%%%%%%%%%%%%%%%%%%%%%%%%%%%
%%% subsec: Ground Station Grid Design %%%%%%%%%%%%%
%%%%%%%%%%%%%%%%%%%%%%%%%%%%%%%%%%%%%%%%%%%%%%%%%%%%
\subsection{Ground Station Grid Design}
\label{subsec:GS_grid_design}

Ground-station placement directly shapes how effectively satellite visibility converts to network-wide connectivity. 
Prior work from the classical satellite networking setting motivates treating ground-station infrastructure as a shared, architectural resource rather than as mission-specific endpoints. In particular, recent work on \emph{Ground Stations as a Service} (GSaaS) explores models in which ground stations are owned and operated by multiple providers and are dynamically integrated into satellite missions to reduce deployment cost and improve utilization~\cite{eddy_optimal_2025}. Eddy \emph{et al.} study an existing global ground-station network comprising 91 stations operated by six independent providers, and formulate the problem of ground-station selection and assignment under resource and visibility constraints. While their focus is on operational optimization rather than geometric design, their model underscores the practical relevance of shared, multi-stakeholder ground infrastructure.

Prior proposals on satellite-assisted quantum networks have typically relied on either coordinate-aligned uniform grids~\cite{khatri_spooky_2021, shao_hybrid_2025}---for example, equal-angular (fixed-longitude) placements or uniformly spaced distance-based grids---or demand-driven layouts concentrated around major population centers~\cite{anipeddi2025optical, yehia_connecting_2024}. Each of these approaches exhibits structural limitations at global scale.

Equal-angular (longitudinal) grids suffer from longitudinal convergence: physical inter-station distances shrink with latitude, producing excessive ground-station density near the poles---where satellite visibility is already abundant---and comparatively sparse coverage near the equatorial belt. Uniform distance-based grids avoid this geometric compression but remain visibility-agnostic: they allocate stations evenly in physical space without accounting for the latitude-dependent variation in satellite pass frequency, leading to inefficient use of ground infrastructure in high-visibility regions. Demand-driven, population-based layouts further sacrifice geometric regularity, potentially leaving large geographic regions under-served while complicating routing symmetry and redundancy planning. A scalable backbone therefore requires a placement strategy that simultaneously preserves structural regularity, maintains balanced spatial coverage, and aligns station density with satellite visibility patterns. Such regular layout also supports augmentation through additional ground stations at high demand locations, however such analysis is beyond the scope of this work.

From a connectivity perspective, regular lattices offer structural advantages. Results from bond and site percolation theory show that triangular lattices exhibit lower percolation thresholds than square grids, enabling large connected components to emerge more reliably under intermittent link availability. While other regular lattice pattern such as square or hexagonal grids are also good candidates, their bond percolation thresholds are known to higher than the triangular lattice ~\cite{das_robust_2018}. This property is particularly relevant for satellite-based quantum networks, where connectivity fluctuates in time due to orbital motion and environmental effects. Triangular lattices also provide near-isotropic local connectivity with six equidistant neighbors. In light of the above, only triangular lattices have been considered in our analysis.

Na\"ively projecting such lattices onto the Earth, however, introduces geometric distortion. Fixed angular spacing in longitude leads to excessive crowding near the poles in the northern and southern latitudes---precisely where satellite pass are the densest. To correct for this mismatch, we introduce a latitude-dependent scaling of East--West spacing that counteracts longitudinal compression while preserving the regular structure of the lattice. We formalize these ideas in what follows in this section.

\begin{figure}[t]
\vspace{-1.3cm}
\begin{center}
    \includegraphics[width=1\linewidth, trim={1cm 2.8cm 0cm 3cm},clip]{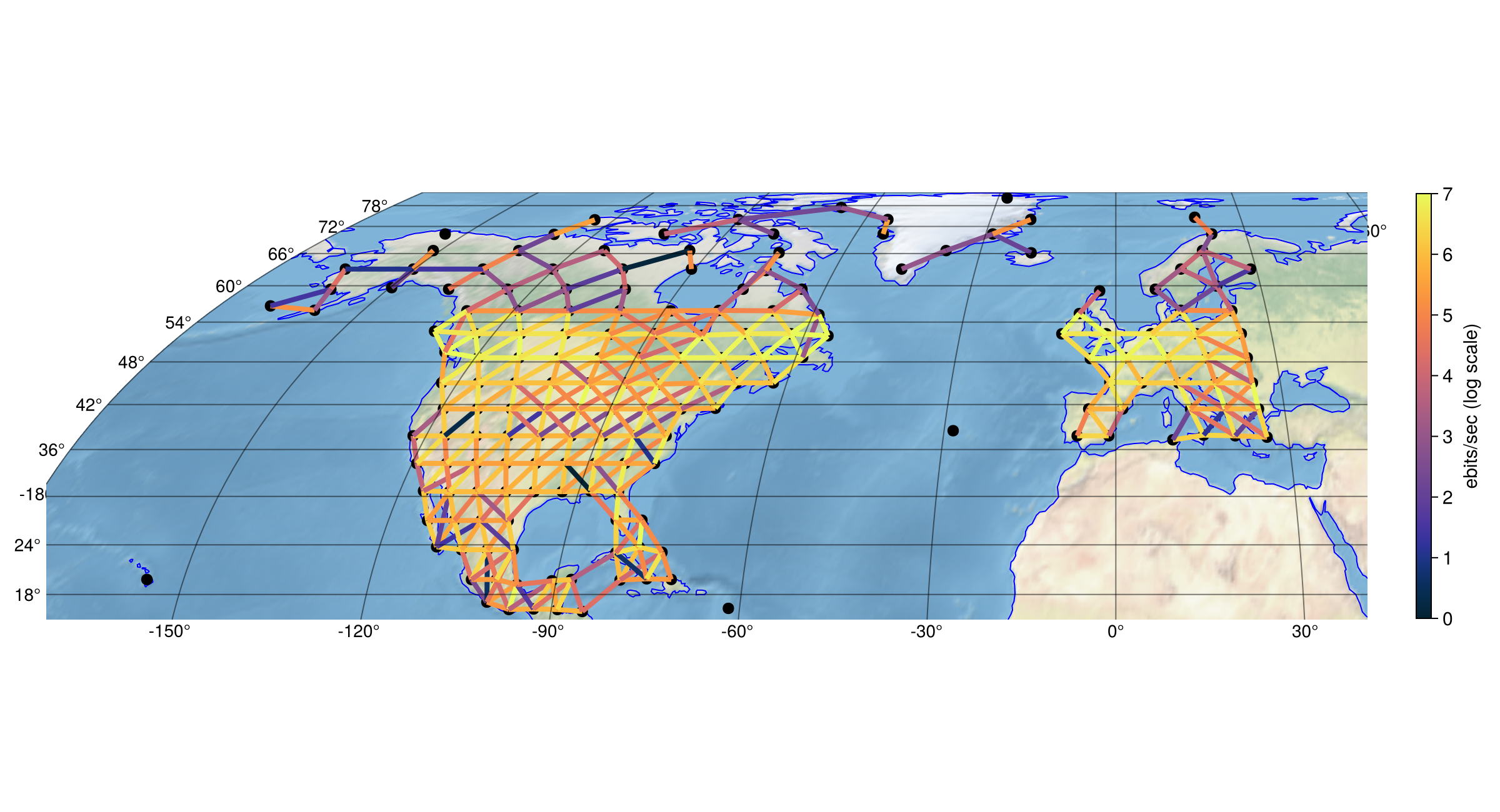}
    \vspace{-9mm}
    \caption{\small
        \textbf{Snapshot of bipartite connectivity (ebits delivered per unit time) for a ground-station grid with 176 stations across North America and Western Europe.} The constellation consists of satellites at an altitude of 500\,km, arranged in 360 orbital planes with 18 satellites per plane. Ninety percent of planes form a primary shell at inclination $53^\circ$, with the remainder allocated to a secondary polar shell. Ground stations are placed on an anisotropic grid with North--South spacing of $3.6^\circ$ and latitude exponent $\alpha=0.8$ in the East--West direction. If a lattice point falls on a large water-body it is snapped to the nearest land location if available within 100 km radius, otherwise the lattice point is dropped. Each satellite carries seven entanglement sources operating at 100\,MHz and supports up to seven concurrent bipartite links. Visibility is restricted to zenith angles $|z|\le 57^\circ$ to limit turbulence-related loss. Ground-station spacing increases toward higher latitudes, illustrating the effect of anisotropic placement.
    }
    \label{fig:triangle-grid}
\end{center}
\vspace{-0.5cm}
\end{figure}

%%%%%%%%%%%%%%%%%%%%%%%%%%%%%%%%%%%%%%%%%%%%%%%%%%%%
%%% subsubsec: ground stations lattice %%%%%%%%%%%%%
%%%%%%%%%%%%%%%%%%%%%%%%%%%%%%%%%%%%%%%%%%%%%%%%%%%%
\subsubsection{Ground-station lattice}
\label{subsubsec:GSlattice}

A first design choice for the ground segment is whether to impose a regular lattice structure. Compared to ad-hoc or demand-driven placement, a structured grid provides uniform geographic coverage and a reproducible baseline for analyzing network-level properties such as connectivity, percolation thresholds, and resilience under intermittent link availability. While demand-driven layouts may reflect present-day usage, a lattice-based backbone avoids embedding assumptions about current traffic patterns and remains robust to long-term shifts in demand.

Within this class of designs, lattice geometry directly affects connectivity and resilience. Square lattices are simple to construct but exhibit strong directional bias: nearest neighbors lie along cardinal directions, while diagonal separations are significantly larger. This particular anisotropy increases the percolation threshold and reduces robustness when links fluctuate in time, due to node/link failure, congestion, or attacks~\cite{das_robust_2018, brito2021satellite}.

Because near-term quantum memories have finite coherence times, and are expected to be an expensive resource, end-to-end connectivity in quantum networks cannot rely on the long-term accumulation of entangled links. Instead, useful connectivity must arise from the concurrent availability of links across neighboring nodes, which via entanglement swapping can produce reliable links between arbitrary points on the graph. This in turn implies that the rapid formation of large connected components critical in networks with time-varying link availability. We, therefore, adopt triangular lattices as the ground-station geometry of choice. This choice of lattice geometry concerns only the local neighbor structure of the grid and is conceptually distinct from the latitude-dependent anisotropy introduced next, which corrects global distortions arising from Earth's curvature.

%%%%%%%%%%%%%%%%%%%%%%%%%%%%%%%%%%%%%%%%%%%%%%%%%%%%
%%% subsubsec: alpha %%%%%%%%%%%%%%%%%%%%%%%%%%%%%%%
%%%%%%%%%%%%%%%%%%%%%%%%%%%%%%%%%%%%%%%%%%%%%%%%%%%%
\subsubsection{Anisotropy parameter \texorpdfstring{$\alpha$}{alpha}}
\label{subsubsec:alpha}
Uniform-angular ground-station grids place stations at equal longitude--latitude increments. When projected onto the Earth's surface, this construction leads to strong longitudinal convergence: physical inter-station distances shrink toward the poles and expand near the equator. The resulting polar crowding is particularly inefficient in satellite-serviced networks, where orbital pass density already increases with latitude, while equatorial regions---despite receiving fewer passes---are comparatively under-sampled.

We address this geometric mismatch using a tunable anisotropy parameter $\alpha$ that controls East--West station spacing as a function of latitude according to,
\begin{align*}
    d(\lambda) = \frac{d_{\mathrm{eq}}}{\cos^\alpha \lambda},
\end{align*}
where $d_{\mathrm{eq}}$ is the equatorial spacing between nearest neighbor ground stations. The exponent $\alpha$ determines how rapidly longitudinal spacing expands with latitude and therefore determines how station density is redistributed across the globe. 

Setting $\alpha=0$ recovers the uniform Euclidean (equi-distant) grid commonly used in prior work. Negative values (e.g., $\alpha=-1$) further concentrate stations at high latitudes, reproducing the longitudinal placement style that appears in earlier satellite-assisted network proposals. Positive values increase East--West spacing with increasing latitude and suppress polar crowding.

While any $\alpha \neq 0$ produces a latitude-dependent (anisotropic) grid, we distinguish between between two qualitatively different regimes. For $\alpha < 0$---which we refer to as \emph{longitudinal densification}---polar over-density is exacerbated as $\alpha$ becomes more negative. For $\alpha > 0$ (termed \emph{anisotropic suppression}) longitudinal compression is actively counteracted and polar station density is reduced with increasing values of $\alpha$. Throughout the remainder of the manuscript, we use the term \emph{anisotropic grid} specifically to refer to the $\alpha > 0$ regime, which corrects longitudinal compression and reduces polar over-density.

Figure~\ref{fig:triangle-grid} illustrates an anisotropic triangular lattice deployed across North America and Western Europe, with equatorial spacing $d_{\mathrm{eq}} = 400 ~\mathrm{km} (\equiv 3.6^\circ$ longitudinal spacing), and latitude exponent $\alpha=0.8$, chosen to balance satellite footprint size and polar crowding. If a lattice point falls on a large water-body it is snapped to the nearest land location if available within 100 km radius, otherwise the lattice point is dropped. Stations are denser at lower latitudes and progressively spaced farther apart towards the poles, reflecting both geometric considerations and satellite pass statistics. Despite this non-uniformity, the lattice retains strong global connectivity, consistent with the favorable percolation properties of triangular tilings. In Section~\ref{sec:Perf_Eval}, we show that such grids achieve substantially shorter connectivity wait times than uniform or longitudinal designs under identical satellite resources.

Figure~\ref{fig:grid_on_sphere} shows the difference between qualitatively different values of $\alpha$, viz., longitudinal densification, equidistant and anistropic suppression. Figure~\ref{fig:lattice_alpha} illustrates the effect of varying $\alpha$ continuously on the ground station density. Increasing $\alpha$ progressively sparsifies polar regions while preserving equatorial density. Number of land ground stations retained after snapping candidate lattice points to the nearest land location within a 100~km tolerance, plotted as a function of $\alpha$. Also as shown in Fig.~ \ref{fig:gs_num} moving from $\alpha=-1$ to $\alpha=0$ substantially reduces the number of ground stations, while increasing $\alpha>0$ further reduces this number. The shaded region highlights the anisotropic design regime used to reduce both the raw lattice density and the number of land-feasible stations needed to cover the globe, while avoiding excessive high-latitude oversampling. Operationally, $\alpha$ provides a single control parameter that trades off longitudinal compression, latitude-dependent satellite visibility, and heterogeneous global demand (typically demand is concentrated between $\approx \pm 60^\circ$ latitudes). Based on these considerations, we focus on $\alpha>0$, and in particular on the range $0.5 \le \alpha \le 1.5$ in our evaluations.

To avoid singular behavior near the poles where $\cos(\lambda)\to 0$, we impose a small numerical floor on the latitude-dependent density factor. Specifically, East--West inter-station spacing is capped below by a minimum physical separation of $50$\,km, preventing pathological oversampling at high latitudes. As a result, polar latitude bands contain at most one, or a small number of, longitude samples. This regularization is applied consistently in all simulations. 

\begin{figure}[!htb]
    \begin{subfigure}{\linewidth}
        \centering
        \caption{Schematic}
        \begin{overpic}[
            width=0.9\linewidth,
            trim=0cm 0cm 0cm 1cm,
            clip,
        ]{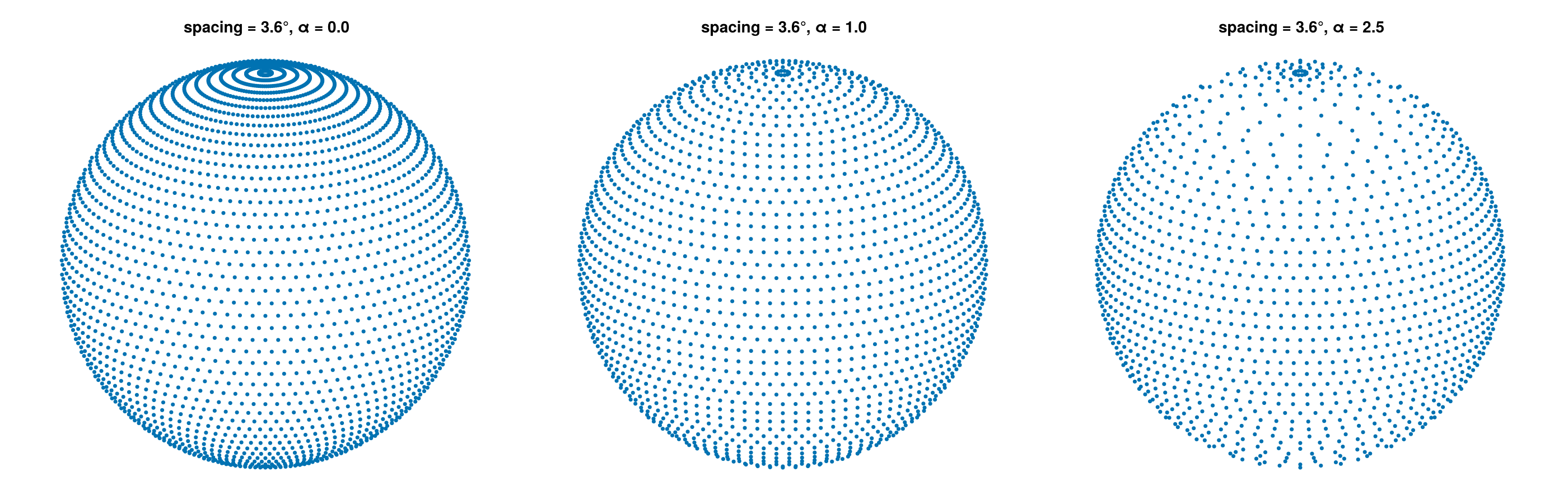}
            \put(12,13){\color{red}\Large $\alpha=-1$}
            \put(46,13){\color{blue}\Large $\alpha=0$}
            \put(78,13){\color{black}\Large $\alpha=1.5$}
        \end{overpic}

        \label{fig:grid_on_sphere}
    \end{subfigure}
\begin{subfigure}{0.48\linewidth}
    \centering    
        \caption{}
    \includegraphics[height = 4.5cm]{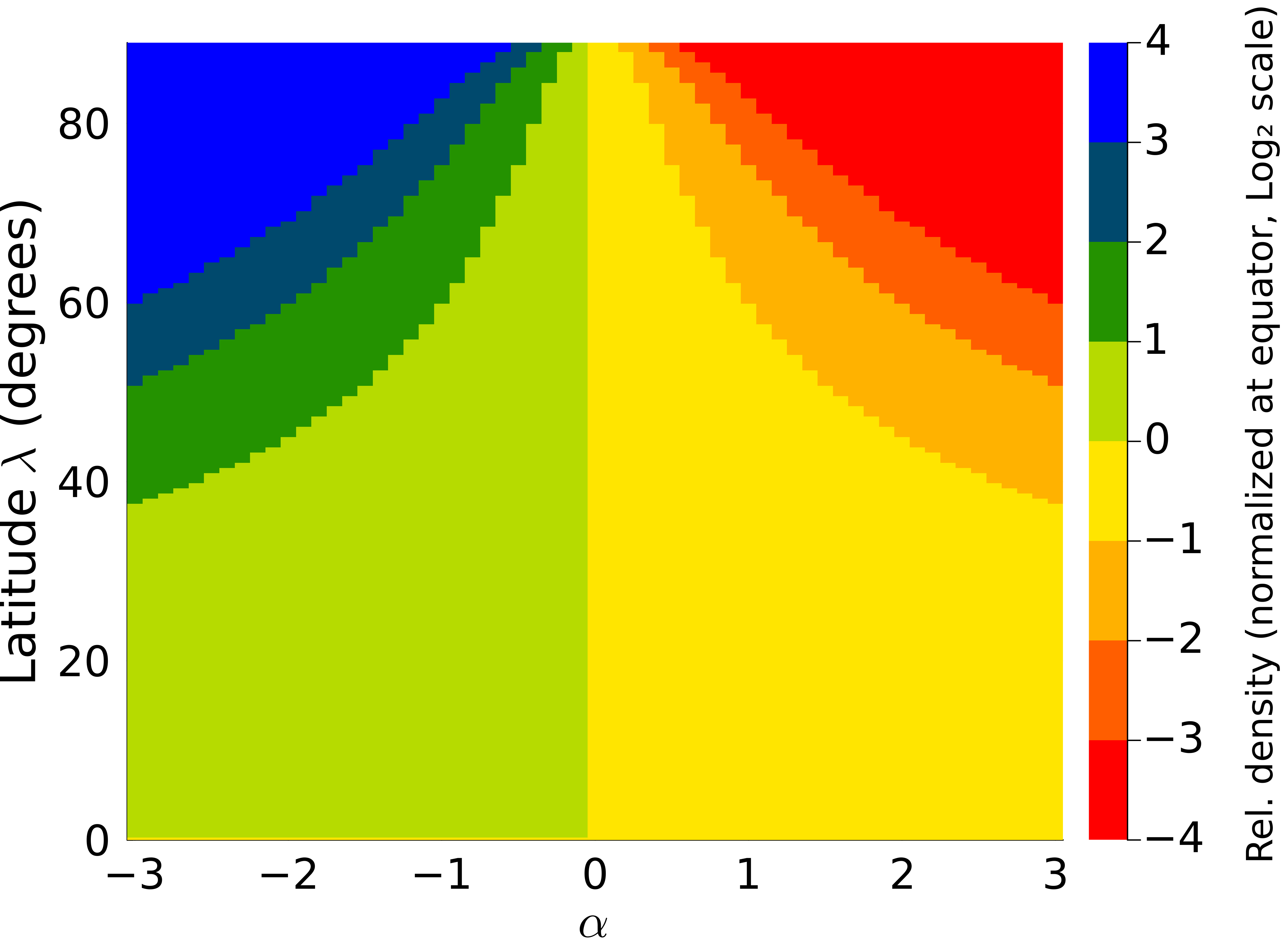}
    \label{fig:lattice_alpha}
\end{subfigure}
\begin{subfigure}{0.48\linewidth}
    \centering  
        \caption{}
    \includegraphics[height = 4.5cm]{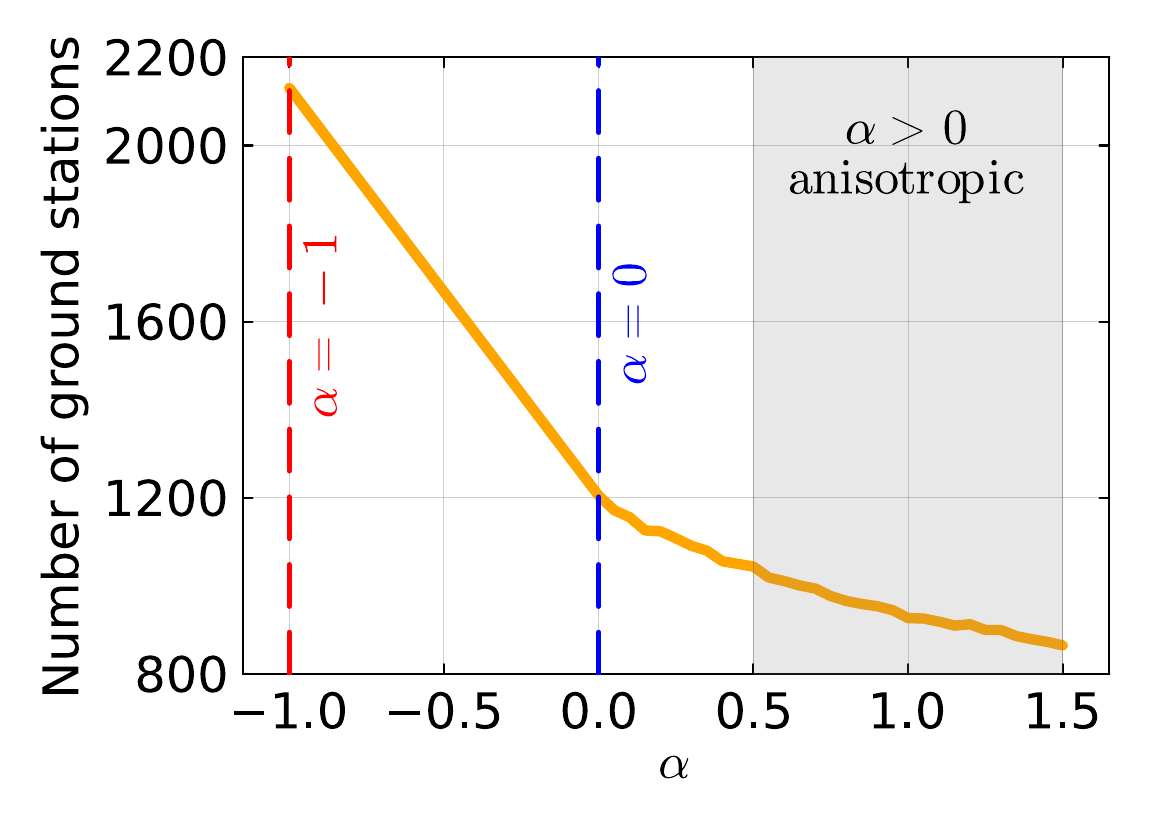}
    \label{fig:gs_num}
\end{subfigure}
\caption{
    \textbf{Ground-station density induced by the anisotropy parameter $\mathbf{\alpha}$.}
    (a) Schematic illustrations of candidate spherical lattice layouts for different values of the anisotropy parameter $\alpha$, showing how increasing $\alpha$ progressively suppresses excessive high-latitude longitudinal densification.
    (b) Relative longitudinal ground-station density as a function of latitude $\lambda$ and anisotropy parameter $\alpha$, with colors showing $\log_2 \rho(\lambda,\alpha)$ for $\rho(\lambda,\alpha)\propto \cos^\alpha\lambda$, normalized to the equatorial density.
    The case $\alpha=-1$ corresponds to longitudinal/equal-angular densification, $\alpha=0$ to an isotropic uniform-distance baseline, and $\alpha>0$ to anisotropic layouts that increase East--West spacing near the poles.
    (c) Number of land ground stations retained after snapping candidate lattice points to the nearest land location within a 100~km tolerance, plotted as a function of $\alpha$.
    Moving from $\alpha=-1$ to $\alpha=0$ substantially reduces the number of feasible land stations by removing polar overpopulation, while increasing $\alpha>0$ further sparsifies the grid.
    The shaded region highlights the anisotropic design regime used to reduce both the raw lattice density and the number of land-feasible stations needed to cover the globe, while avoiding excessive high-latitude oversampling.
}
\end{figure}

%%%%%%%%%%%%%%%%%%%%%%%%%%%%%%%%%%%%%%%%%%%%%%%%%%%%
%%% subsubsec: ground station distance at equator %%
%%%%%%%%%%%%%%%%%%%%%%%%%%%%%%%%%%%%%%%%%%%%%%%%%%%%

\subsubsection{Equatorial spacing \texorpdfstring{$d_{\mathrm{eq}}$}{deq}}
\label{subsubsec:equatorial_spacing}

While the anisotropy parameter $\alpha$ controls how ground-station density redistributes with latitude, the absolute scale of the grid is set by the equatorial spacing $d_{\mathrm{eq}}$. This parameter determines the total number of ground stations and directly affects both deployment cost and the level of concurrency achievable during satellite passes. Excessively small $d_{\mathrm{eq}}$ leads to redundant coverage within individual satellite footprints, while overly large spacing reduces the number of stations simultaneously visible to a satellite and limits parallel entanglement distribution.

We relate $d_{\mathrm{eq}}$ to satellite geometry through a unit-cell coverage condition. Fix a minimum elevation angle $\theta_{\min}$ (equivalently, a zenith cutoff $|z|\le z_{\max}=90^\circ-\theta_{\min}$), and let $D_{\mathrm{fp}}(h,\theta_{\min})$ denote the resulting satellite footprint diameter at altitude $h$. We define the minimum operational LEO altitude $h_{\min}$ as the smallest height satisfying
\begin{align*}
    D_{\mathrm{fp}}(h_{\min},\theta_{\min}) \;\ge\; 2 d_{\mathrm{eq}}.
\end{align*}

This condition ensures that, at equator, a single satellite footprint can fully cover one hexagonal unit cell of the ground-station lattice, consisting of a central hub station and its six nearest neighbors (See Figure~\ref{fig:hub-spoke-ring}).  At higher latitudes, anisotropy distorts the unit cell geometry but satellites in our operating regime typically have sufficiently large footprints (achieved by choosing a altitude higher than needed for a 2$d_{eq}$ footprint) to cover the hub-centered service neighborhood in a single pass. Because $D_{\mathrm{fp}}(h,\theta_{\min})$ increases monotonically with altitude for fixed $\theta_{\min}$, any satellite at altitude $h \ge h_{\min}$ automatically satisfies this concurrency condition. This equatorial calibration provides a conservative baseline for grid scaling.

In all our evaluations, we impose a minimum elevation angle of $\theta_{\min}=33^\circ$ (equivalently, a zenith cutoff $|z|\leq 57^\circ$). Under this constraint, a satellite at altitude $h\approx 285$\,km has a footprint diameter of approximately $800$\,km. Choosing $d_{\mathrm{eq}}=400$\,km therefore ensures that a single satellite footprint can cover one complete hexagonal unit cell. Satellites at higher altitudes only increase the footprint size and continue to satisfy this requirement. We therefore focus on LEO constellations with altitudes in the range $500$--$1400$\,km in the evaluations that follow.

Together, $\alpha$ and $d_{\mathrm{eq}}$ fully specify the ground-station lattice: $\alpha$ controls latitude-dependent redistribution, while $d_{\mathrm{eq}}$ sets the global scale. The next section examines how this ground geometry interacts with satellite constellation design to determine network-level connectivity and waiting times.

\subsubsection{Hub--spoke--ring connectivity model.}

\begin{wrapfigure}{r}{0.5\textwidth}
    \vspace{-1.2em}
    \centering
    \includegraphics[height = 5cm, trim = 8cm 5cm 8cm 2.5cm, clip]{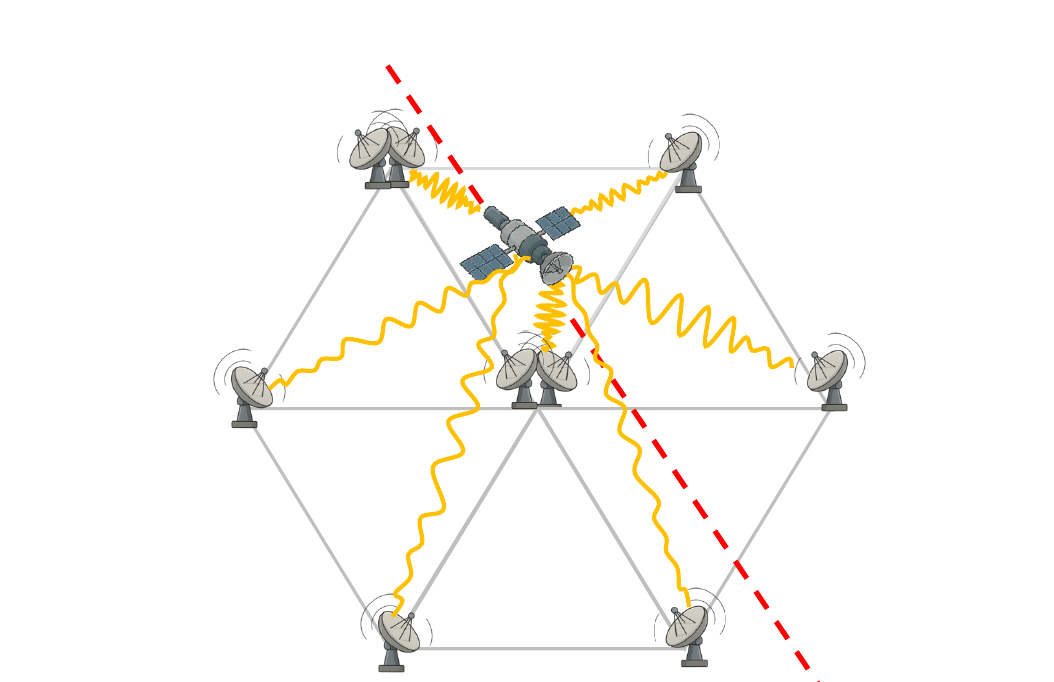}
    \caption{\textbf{Illustration of a hub–spoke–ring service model for a satellite-serviced quantum backbone.} During each satellite pass, the satellite designates a single ground station as a \emph{hub} (nearest visible station) and services a fixed local neighborhood consisting of the hub and its six nearest terrestrial neighbors. Entanglement distribution is restricted to \emph{hub–spoke} links and \emph{ring} links among the neighboring stations, rather than forming a fully connected clique. Ground stations are embedded in a regular terrestrial lattice (gray edges), which supports short-range routing and aggregation across successive satellite passes. As satellites move along their orbits (red dashed tracks), the identity of the hub and its associated neighborhood changes over time, inducing a time-varying but spatially local service pattern that preserves scalability while enabling concurrent global connectivity.}
    \label{fig:hub-spoke-ring}
        \vspace{-1.2em}
\end{wrapfigure}

At each epoch, satellite-induced connectivity among ground stations is constructed using a local hub--spoke--ring topology. For a given satellite, we first identify the set of ground stations that are simultaneously visible under zenith-angle constraints. From this visible set, the station with the highest instantaneous transmissivity (equivalently, the ground-station nearest to the satellite in its footprint) is designated as a local \emph{hub}. The satellite then selects a small neighborhood of geographically nearby ground stations around this hub, based on ground-station--to--ground-station distance, forming a local service region (see Fig.~\ref{fig:hub-spoke-ring}). This hub-based model is supported by prior research~\cite{brito2021satellite} indicating a small-worlds phenomenon in satellite--ground-station networks with the ground-station nearest to the satellite acting as a hub.

Connectivity is instantiated in two stages: (i) \emph{spokes}, connecting the hub to each selected neighbor, and (ii) a \emph{ring}, connecting neighboring stations in the cycle around the hub. Feasible links are weighted by the expected entanglement-generation rate induced by the satellite's instantaneous geometry. This construction yields a sparse, structured local subgraph that reflects the physical broadcast footprint of a satellite pass while supporting concurrent multi-party connectivity. In this regime, inducing richer local connectivity improves resource utilization and increases instantaneous node degree, accelerating the formation of large connected components that are necessary for backbone-style operation. Restricting a satellite with $T$ active terminals to servicing only a small number of disjoint ground-station pairs would significantly under utilize the available geometric configuration; for example, using seven optical terminals to support only three independent pairs discards the regular spatial structure of the footprint, increases swap depth for nearby nodes, and may introduce avoidable visibility or timing conflicts. At the opposite extreme, enabling all $\binom{T}{2}$ pairwise links is neither physically realistic nor protocol-efficient. The hub--spoke--ring topology strikes a balance between these extremes by producing a bounded-degree local subgraph in which any two stations in the service region are at most one hop apart, while remaining compatible with finite optical resources and near-term protocol constraints.

%%%%%%%%%%%%%%%%%%%%%%%%%%%%%%%%%%%%%%%%%%%%%%%%%%%%
%%% subsec: Satellite constellation design %%%%%%%%%
%%%%%%%%%%%%%%%%%%%%%%%%%%%%%%%%%%%%%%%%%%%%%%%%%%%%

\subsection{Satellite Constellation Design}
\label{subsec:Sat_constellation_design}

The second key component of the backbone is the satellite constellation. We focus on low-Earth-orbit (LEO) satellites, motivated by two practical considerations. LEO platforms support substantially stronger satellite--ground optical links than higher-altitude alternatives due to shorter propagation distances and reduced diffraction and atmospheric loss. These stronger links directly improve entanglement generation rates and reduce connectivity waiting times, which are central to network-level performance.

The feasibility of deploying large LEO constellations is well established. Contemporary classical communication systems routinely operate thousands of satellites in orbit, demonstrating that global coverage at scale is technically and economically viable~\cite{noauthor_global_2025, noauthor_global_nodate}. For quantum networking, however, constellation design must balance coverage against entanglement delivery rates, connectivity latency, and on-board hardware constraints. Individual LEO satellites operate under stringent size, weight, and power (SWaP) limits, which constrain the number of entanglement sources, optical terminals, and detectors that can be supported per platform. As a result, simply increasing satellite count does not guarantee improved network performance; constellation geometry must be chosen so that limited on-board resources are converted efficiently into usable, concurrent connectivity.

In this work, we do not attempt to optimize constellation parameters from first principles. Instead, we adopt a \emph{deployment-realistic} baseline inspired by widely deployed Walker-style LEO constellations, such as those used by \emph{Starlink}. Specifically, we use a primary shell at a moderate inclination of $53^\circ$, representative of constellations designed to provide high capacity over densely populated mid-latitude regions, together with a secondary near-polar shell at $98^\circ$ to ensure frequent access at high latitudes. This inclination pair mirrors the structure of existing mega-constellation deployments; for example, Starlink employs mid-inclination shells near $53^\circ$ alongside a near-polar shell at $97.6^\circ$ to balance mid-latitude capacity with robust high-latitude coverage. This choice is not claimed to be optimal; rather, it provides a concrete and interpretable reference configuration against which architectural questions---ground-station geometry, dual-shell deployment, and per-satellite service capability---can be evaluated.

Constellation choices interact strongly with ground-station placement. Increasing the number of orbital planes or introducing additional inclination shells generally reduces connectivity wait times by improving temporal overlap between satellite footprints and the ground-station lattice, but these gains saturate beyond moderate densities. In the following subsections, we examine how altitude, inclination, and shell structure shape this design space, and how these parameters jointly determine achievable global connectivity under finite waiting-time constraints.

\paragraph{Ground-station visibility under single-shell constellations.}
We generate the terrestrial lattice as a constellation-independent candidate backbone. Consequently, for a single-shell constellation with fixed inclination, the full lattice may include high-latitude ground stations that are never visible to any satellite. These stations would clearly be wasteful in an actual deployment and should be pruned after a constellation-specific visibility check. They do not, however, affect the performance conclusions reported here. Our traffic matrix (see Sec.~\ref{subsec:time_to_connectivity}) consists of city pairs that are visible in at least some configurations for all alphas over the sampled altitude range, with non-traffic-matrix ground stations used only as potential relay nodes when they are visible and serviceable. Permanently invisible stations therefore add no usable paths and do not inflate the reported connectivity or link-strength metrics. The comparison between single-shell and augmented dual-shell constellations is thus conservative with respect to this issue: pruning unreachable ground stations would reduce deployment cost but would not change the relative conclusion that augmented dual-shell designs provide stronger serviceable backbones.

\subsubsection{Satellite altitude}
\label{subsubsec:sat_height}

Satellite altitude is a primary design parameter because it jointly determines the spatial footprint of each pass and the efficiency of the satellite--ground quantum channel. At geostationary altitudes ($\sim36{,}000$\,km), continuous visibility over large geographic regions is possible, but the extreme propagation distance leads to prohibitive optical loss, rendering high-rate satellite--ground entanglement distribution impractical with current technology. Medium Earth orbits (MEO), spanning roughly $2{,}000$--$20{,}000$\,km, provide larger footprints than LEO but still incur substantial diffraction loss and long round-trip times, limiting their suitability for direct entanglement generation.

Low Earth orbit (LEO), typically at altitudes of a few hundred to a few thousand kilometers, offers a more favorable balance between link quality and coverage. Although individual satellites provide only intermittent visibility and cover smaller footprints, the reduced propagation distance enables significantly higher entanglement generation rates and shorter connectivity delays. For this reason, we adopt LEO satellites as the primary access layer of the quantum backbone.

Within the LEO regime, altitude introduces a clear and unavoidable trade-off. Increasing altitude enlarges the satellite footprint and increases contact duration, reducing the constellation density required for global coverage. At the same time, higher altitudes weaken the satellite--ground channel due to increased diffraction and atmospheric path length. Lower altitudes strengthen individual links and improve per-pass entanglement rates, but require denser constellations to avoid long temporal gaps in coverage. This trade-off affects not only single-link performance but also network-level connectivity, since footprint size determines how many ground stations can be served concurrently during a pass.

As discussed in Section~\ref{subsubsec:equatorial_spacing}, this footprint--altitude relationship also constrains the choice of ground-station spacing through the unit-cell coverage condition. In our baseline evaluations, we therefore focus on LEO altitudes between $500$ and $1{,}400$\,km. This range comfortably satisfies the unit-cell requirement for our ground-station geometry while spanning the regime in which altitude-driven trade-offs between footprint size, link efficiency, and connectivity latency are most pronounced.

\subsubsection{Shell design}
\label{subsubsec:inclination_angles}

Orbital inclination determines the latitude range over which a satellite provides service and therefore strongly shapes global connectivity. Low-inclination orbits concentrate access near the equator, whereas high-inclination and polar orbits traverse a broad range of latitudes, enabling frequent access to mid- and high-latitude ground stations. Within a given shell, multiple orbital planes at a common inclination are distinguished by their right ascensions of the ascending node (RAAN). RAAN spacing primarily affects longitudinal uniformity and temporal smoothing, but does not substantially alter latitudinal coverage. We therefore treat inclination as the dominant driver of latitude-dependent access pattern.

Inclination choice interacts closely with the anisotropic ground-station lattice introduced in Section~\ref{subsubsec:GSlattice}. Higher-inclination shells predominantly serve high latitude regions, where station density is intentionally reduced to compensate for increased satellite pass frequency, while moderate-inclination shells serve denser equatorial and mid-latitude regions. Combining shells at different inclinations therefore better matches satellite access patterns to ground-station density reducing connectivity wait times and improving robustness against temporally correlated coverage gaps.

A single-shell constellation can provide broad geographic coverage when sufficiently dense, but tends to concentrate visibility within specific latitude bands, and can exhibit correlated temporal gaps, particularly when serving both equatorial and high-latitude ground stations. Introducing a secondary shell mitigates both effects: a moderate inclination serves equatorial and mid-latitude regions, while a polar shell provides more frequent high-latitude access reducing coverage gaps. This augmented dual-shell structure aligns naturally with the anisotropic ground-station lattice.

\paragraph{Choice of inclination shells (53$^\circ$ and 98$^\circ$).} Rather than optimizing inclination angles from first principles, we adopt a deployment-realistic baseline drawn from contemporary large-scale LEO constellations~\cite{starlink_technology}. A primary shell at inclination $i=53^\circ$ provides dense and frequent coverage over populated mid-latitude regions, characteristic of Walker-style constellation designs. A secondary near-polar shell at $i_{\mathrm{polar}}=98^\circ$ representative of sun-synchronous orbits, ensures frequent access to high-latitude regions that would otherwise experience intermittent visibility. A fixed fraction $f_\mathrm{polar}$ of satellites is allocated to the polar shell, with the remainder assigned to the mid-inclination shell. We evaluate how this two-shell structure modifies global connectivity latency and concurrency relative to a single-shell baseline. Since we only take a small fraction of satellites to be reserved for the polar shell ($f_\mathrm{polar} \leq 20\%$), we call this design an ``augmented dual-shell'' approach.

Therefore, we restrict attention to this augmented dual-shell (ADS) architecture throughout, isolating the impact of inclination diversity without introducing additional degrees of freedom. As shown in Section~\ref{sec:Results}, this minimal two-shell structure is sufficient to reduce connectivity wait times at high thresholds relative to single-shell baselines.

\subsection{Hardware requirements}\label{subsec:Sat_GS_hardware}
\subsubsection{Transmitter characteristics (on-board technology)}
\label{subsubsec:Tx_hardware}

Satellite payload design is fundamentally constrained by size, weight, and power (SWaP), which jointly limit the number of optical terminals, entanglement sources, and independent pointing assemblies that can be carried on a single platform. These constraints directly couple per-satellite servicing capability to constellation density: increasing on-board resources improves per-pass concurrency and service capacity, but at the cost of higher payload complexity and power consumption, which may reduce the number of satellites deployable within a fixed launch budget. As a result, satellite-serviced quantum backbones must balance per-satellite capability against constellation scale.

At the optical access layer, link performance is governed by a trade-off between diffraction loss and pointing complexity. The effective beam waist (see Fig. \ref{fig:beam_waist}) at the satellite determines the satellite--ground link transmissivity: smaller beam waists improve collection efficiency at the ground but require  tighter pointing, acquisition, and tracking tolerances, while larger beam waists relax alignment requirements at the expense of increased diffraction loss. For LEO distances, practical designs operate in an intermediate regime that balances aperture size, pointing stability, and SWaP constraints to achieve robust link performance under realistic atmospheric and platform dynamics. Note that we do not include pointing errors in our analysis, instead our choice of $z_{max}$ is motivated by appropriating part of the orbit for pointing and tracking.

\begin{wrapfigure}{R}{0.5\textwidth}
    \centering
    \includegraphics[width=0.9\linewidth, trim = 7cm 4cm 4cm 3cm, clip]{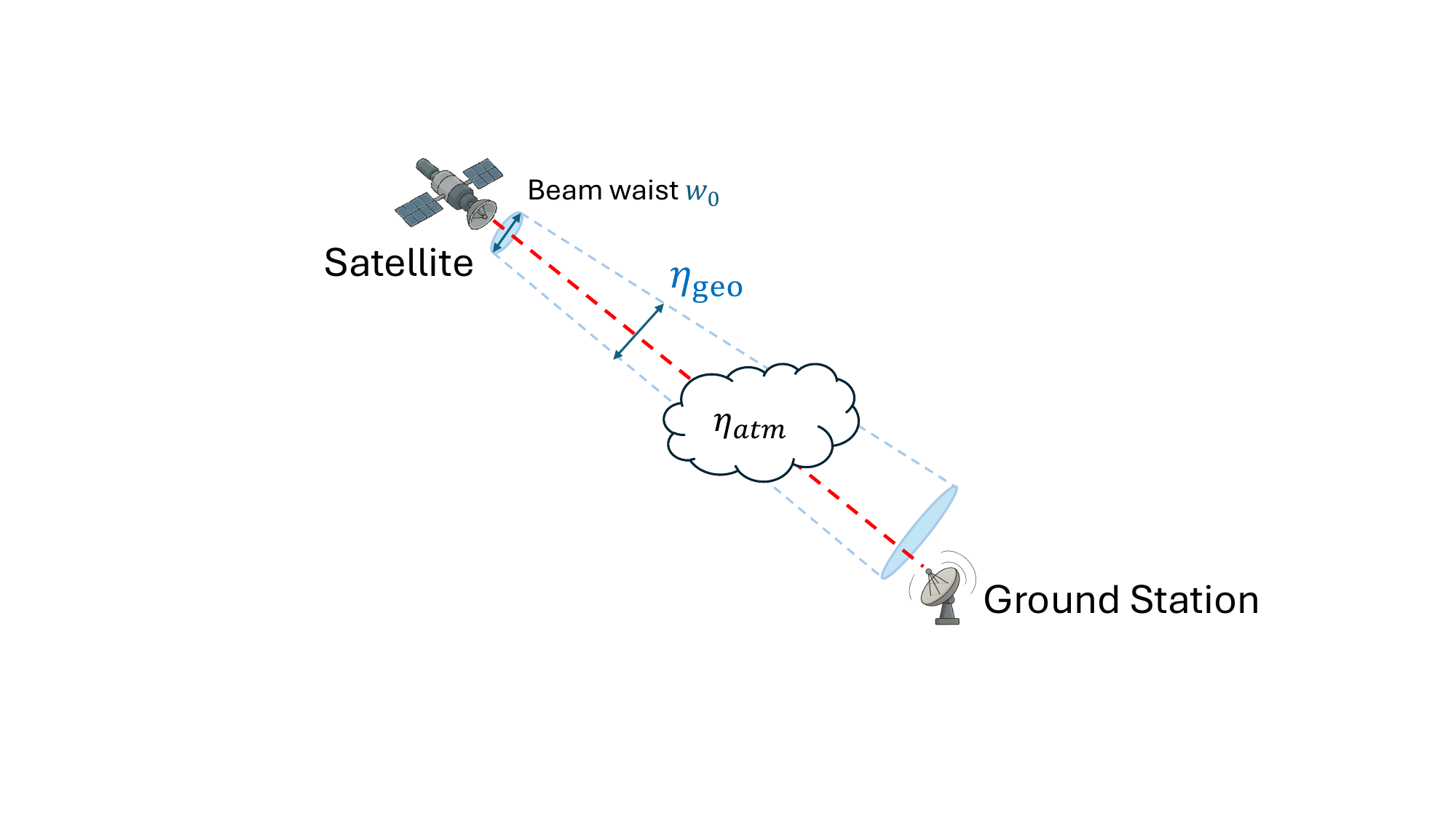}
    \caption{\textbf{Geometry of the satellite-to-ground optical channel.} A diffraction-limited Gaussian beam of waist $w_0$ is emitted from the satellite, experiences atmospheric loss given by $\eta_{\mathrm{atm}}$, and free-space diffraction loss given by $\eta_{\mathrm{geo}}$, and is partially collected by the ground-station receiver aperture. Pointing error arising from satellite jitter and tracking inaccuracy is represented by a lateral displacement of the beam relative to the receiver.}
    \label{fig:beam_waist}
\end{wrapfigure}

From a network perspective, the dominant on-board architectural parameter is the number of independent optical terminals and entanglement sources that can operate  concurrently. Satellites capable of supporting multiple simultaneous downlinks can service several ground-station pairs within a single pass, increasing instantaneous network connectivity and substantially reducing the waiting time for large connected components to form. In contrast, satellites limited to a single active double-downlink per pass must rely on denser constellations to achieve comparable network-level performance, shifting the burden from on-board capability to orbital design.

Recent progress in compact optical terminal technology has made multi-terminal satellite architectures increasingly practical. Operational systems in the classical domain demonstrate the feasibility of concurrent optical links within LEO SWaP constraints: for example, second-generation commercial LEO satellites operate multiple optical terminals simultaneously for inter-satellite communication, establishing multi-terminal optical operation at scale~\cite{starlink_technology}. At smaller satellite scales, compact terminals such as TESAT's SCOT20---with mass below 1.6~kg, volume of approximately 1U, and power consumption below 35~W---enable multiple optical downlinks even on CubeSat-class platforms with limited payload allocations~\cite{cubesat:2020,tesat_scot20}. Beyond deployed systems, systems-level studies have explicitly evaluated LEO platforms supporting up to four simultaneous optical links, identifying power allocation, independent pointing and tracking, and terminal integration as the dominant constraints rather than diffraction or channel loss physics~\cite{aguilar_multiple_nodate}. Recent research has further explored multi-beam and multi-terminal laser architectures for satellites, including independent beam steering and access scheduling for concurrent optical links~\cite{spaander_free-space_2025}.

For quantum applications, optical access-layer capability must be complemented by entanglement sources with sufficiently high pair generation rates to sustain multiple concurrent downlinks. Recent advances in chip-scale integrated entanglement sources, including MHz-class pair-rate demonstrations, indicate a path toward compact sources compatible with multi-terminal operation~\cite{mahmudlu_fully_2023, simmons_scalable_2024}. While satellite-based demonstrations of multi-party entanglement distribution remain an active area of research, these developments suggest that source brightness and integration, rather than fundamental physics, are likely to set near-term limits on concurrency. 

Importantly, the constraints discussed above are predominantly engineering challenges rather than fundamental limitations. Power budgeting across terminals, independent pointing and tracking, thermal management, and payload integration complexity define the practical envelope for multi-terminal operation. Accordingly, we treat the maximum number of concurrently served ground stations per satellite as a tunable \emph{architectural} parameter rather than a fixed hardware constraint.

In the model considered in this manuscript, each satellite is equipped with $T$ optical telescopes (``terminals'') serviced through $2(T-1)$ independent bipartite entanglement-generation sources, and distribute entangled photon pairs to ground stations according to a hub--spoke--ring service policy. Within each time shard of duration $\delta t = 1\,\mathrm{s}$, these sources are scheduled in a time-multiplexed manner across the corresponding hub--spoke and ring edges. In this context, we define \emph{time shards} as fixed-duration discrete time windows of length $\delta t$, over which satellite positions, visibility relations, and the induced network topology are assumed static, enabling snapshot-based evaluation of link generation and connectivity in an otherwise continuously evolving orbital system. The range considered ($ 2 \leq T \leq 7$) spans conservative extensions of current compact-terminal capabilities through plausible next-generation multi-beam payload designs, and is intended as an upper-bound modeling envelope rather than a claim about presently deployed quantum satellite hardware. These assumptions directly define the bipartite and multi-party service models evaluated in Section~\ref{sec:Perf_Eval}.

\subsubsection{Receiver characteristics (ground technology)}
\label{subsubsec:Rx_hardware}

At ground stations, receiver hardware determines both photon collection efficiency and the number of simultaneous satellite links that can be sustained. The primary optical parameter is the telescope aperture diameter, which sets collection efficiency under diffraction and atmospheric turbulence. Larger apertures improve link margins and robustness to loss, but increase site cost, mechanical complexity, and infrastructure requirements. For large-scale deployments, aperture diameters in the order of decimeters range provide a practical compromise between performance and deployability.

At the network level, the number of independent receivers available at each ground station directly limits concurrency. A single receiver supports only one active satellite link at a time, whereas multiple receivers allow a station to maintain parallel links to different satellites during overlapping visibility windows. This increases instantaneous connectivity and reduces idle time when multiple satellites are simultaneously in view. As with on-board terminals, additional receivers improve network performance but increase deployment and operational cost.

In our evaluations, we assume that ground stations are provisioned with sufficient receiver capacity to exploit the satellite visibility enabled by the lattice geometry. Receiver count is therefore not treated as a primary bottleneck, allowing us to isolate the impact of constellation geometry and on-board multiplexing on network-level connectivity.

\subsubsection{Size, weight, and power (SWaP) considerations}
\label{subsubsec:SWAP_req}

All satellite and ground-station hardware choices are constrained by size, weight, and power (SWaP). On board, these limits bound the number of optical terminals, entanglement sources, and pointing assemblies that can be supported on a given platform. Increasing per-satellite capability improves per-pass concurrency and throughput, but increases payload mass, power consumption, and system complexity, potentially reducing the number of satellites that can be deployed.

These constraints induce an architectural trade-off between constellation density and per-satellite capability. Rather than optimizing both simultaneously under an explicit cost model, we adopt a controlled evaluation strategy by either fixing the satellite budget or the total number of optical terminals. This isolates the effect of on-board spatial multiplexing on network-level metrics such as connectivity wait time and average active-link strength, without conflating these effects with changes in satellite count or launch cost.

\subsubsection{Ground-station optical hardware for quantum operations}
\label{subsubsec:quantum_hardware}

In addition to photon collection, ground stations must perform quantum operations to extend and utilize entanglement across the network. The central operation is the Bell-state measurement (BSM), which enables entanglement swapping by interfering and measuring two photons, typically originating from distinct satellite--ground links. In large-scale deployments, BSMs can be implemented using linear-optical interferometers and single-photon detectors. Although linear-optical BSMs have non-unit success probability, their technological maturity and scalability make them well suited for backbone operation.

Ground stations may also be equipped with quantum memories to buffer entangled states while awaiting suitable partners for swapping. In this work, we assume finite-lifetime memories capable of storing states over short timescales. Such buffering relaxes synchronization constraints and improves network performance under variable link availability. We explicitly evaluate scenarios with extended memory lifetimes and observe corresponding gains in throughput and reduced connectivity delays.

Crucially, long-coherence-time quantum memories are not required for the architectural regimes emphasized here. Our analysis focuses on connectivity wait times on the order of one second or less, consistent with near-term, NISQ-era capabilities. In this regime, high-rate satellite--ground links combined with spatial multiplexing, linear-optical BSMs, and short-lived buffering can support global-scale concurrent connectivity.

Our baseline model assumes linear-optical BSMs at ground stations with finite buffering capability, while treating longer-lived quantum memories as an optional enhancement rather than a prerequisite. This separation allows architectural effects to be evaluated independently of specific physical-layer protocol choices.

\subsubsection{Optical fiber backbone for resilience}
\label{subsubsec:Optical_fiber_backbone}

While satellites provide global-scale entanglement distribution, terrestrial optical fiber networks serve as the local interface between ground stations and end users. In our model, each ground station is assumed to be connected via existing metropolitan or regional fiber infrastructure to nearby population centers. This allows entanglement distributed to the ground-station lattice to be delivered efficiently to the cities and regions represented in the traffic matrix.

The role of fiber in this work is not to compete with satellite links, but to provide local connectivity and resilience. Fiber connections mitigate the intrinsic intermittency of satellite access and ensure that temporary visibility gaps or localized outages at individual ground stations do not directly translate into loss of end-to-end connectivity at the application level. Because the distances involved are metropolitan or regional in scale, they are well within the capabilities of current low-loss fiber technology and do not constitute a limiting factor in our analysis.

We do not explicitly model the terrestrial fiber network or its internal routing. Instead, fiber connectivity is treated as a reliable local transport layer that couples the satellite-served ground-station lattice to population centers, allowing us to focus on the global satellite--ground backbone and its impact on network-level connectivity and waiting-time performance.

\subsection{Summary}
The architecture considered in this work combines a latitude-aware ground-station lattice, a multi-shell LEO satellite constellation, and modest on-board and ground-based quantum hardware to form a scalable global quantum networking backbone. Ground stations are placed on an anisotropic triangular lattice parameterized by an equatorial spacing $d_{\mathrm{eq}}$ and a latitude-scaling exponent $\alpha$, ensuring efficient coverage while avoiding excessive station density at high latitudes. Satellite footprints are chosen to cover at least one hexagonal unit cell of this lattice, enabling multi-party connectivity through a hub--spoke--ring service model during each satellite pass.

The satellite constellation is built around LEO platforms to exploit high-rate, low-loss satellite--ground links, with inclination diversity and multi-shell structure used to balance coverage between equatorial and high-latitude regions and to reduce connectivity wait times. On-board spatial multiplexing allows individual satellites to serve multiple ground stations concurrently, while ground stations perform linear-optical Bell-state measurements to extend entanglement across the network.

We assume finite-lifetime quantum memories sufficient to buffer entangled states over short timescales, consistent with near-term, NISQ-era capabilities. Connectivity is evaluated under strict waiting-time constraints (on the order of one second), demonstrating that global-scale concurrent connectivity can be achieved without relying on long-coherence-time memories. Longer-lived memories are treated as an optional enhancement and are shown to further improve performance when available. Together, these assumptions define an architecture compatible with near-term hardware while remaining extensible as quantum technologies mature.
%%%%%%%%%%%%%%%%%%%%%%%%%%%%%%%%%%%%%%%
%%%%%% Sec: Performance Evaluation %%%%
%%%%%%%%%%%%%%%%%%%%%%%%%%%%%%%%%%%%%%%

\section{Performance Evaluation}\label{sec:Perf_Eval}

We evaluate whether satellite-serviced ground-station architectures can operate as concurrent terrestrial quantum-network backbones, rather than as intermittent collections of satellite--ground optical links. We compare three architectural choices: ground-station geometry, constellation structure, and per-satellite concurrent service capability. Section~\ref{subsec:metrics} defines the performance metrics, Sec.~\ref{subsec:evaluation} describes the design questions, and Sec.~\ref{subsec:simulation_methodology} details the simulation methodology.

\begin{wrapfigure}{r}{0.5\linewidth}
\vspace{-2em}
        \centering
    \includegraphics[height = 4.5cm, trim = 0cm 0.2cm 0cm 2cm, clip]{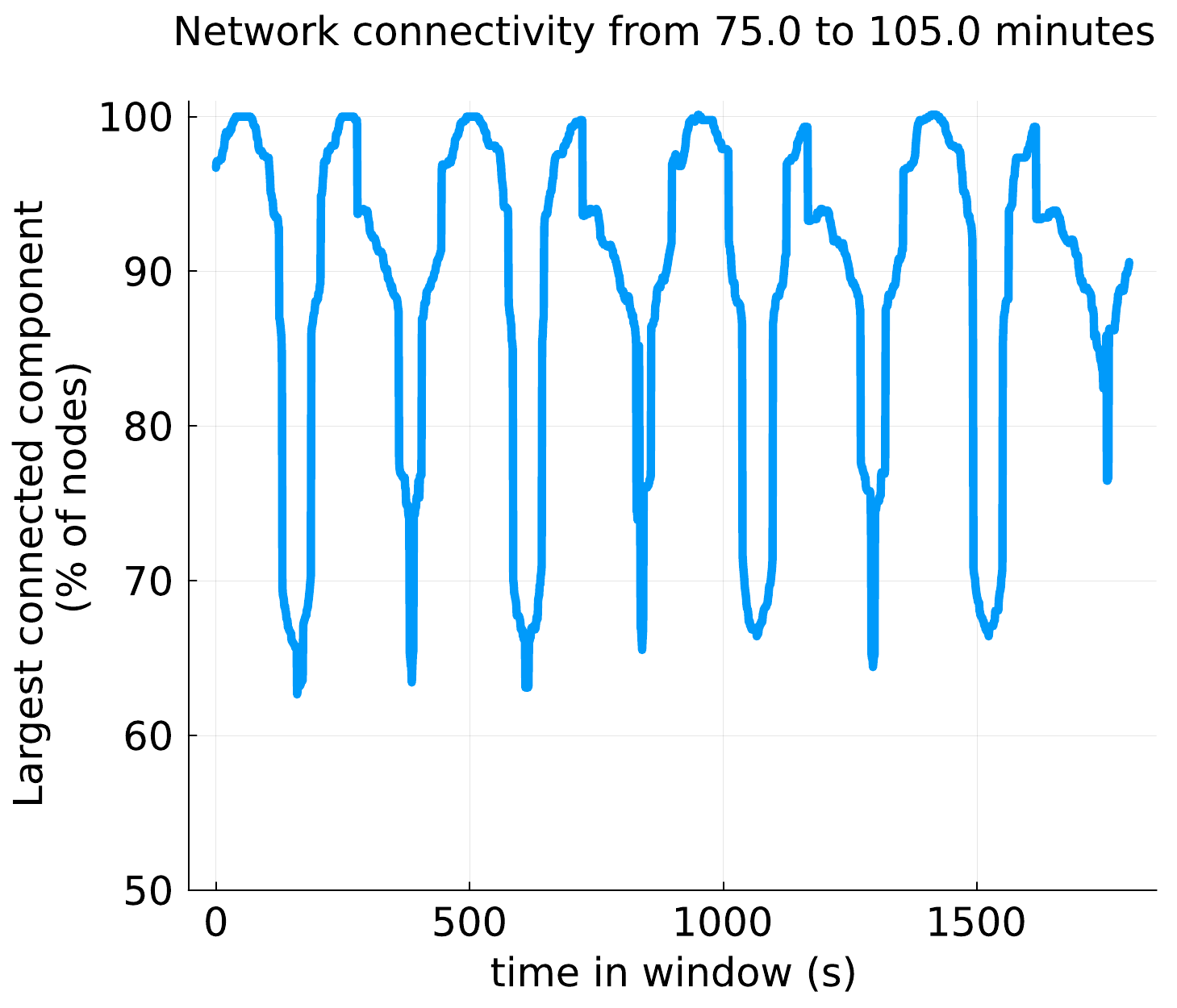}
    \caption{
    \textbf{Illustrative time series of the instantaneous largest connected component (LCC) fraction over a representative thirty-minute window of the satellite-serviced ground-station graph}(see Sec.~\ref{subsec:simulation_methodology} and Appendix~\ref{app:metrics_connectivity} for details).
    The trace shows quasi-periodic reductions in concurrent connectivity caused by time-varying satellite--ground visibility. These low-connectivity intervals motivate the forward-wait metric used in this work: rather than averaging connectivity over time, we measure how long an application must wait until the backbone reaches a specified concurrent-connectivity threshold.
    This figure is intended only as a diagnostic visualization of the temporal connectivity process; quantitative architectural comparisons are made using traffic-matrix waiting-time and latency-conditioned active-link-strength metrics.
    }
    \label{fig:connectivity_troughs}
\end{wrapfigure}

\subsection{Metrics}\label{subsec:metrics}

We use two complementary metrics: time to concurrent connectivity and latency-conditioned average active-link strength. The first measures how long the backbone must wait before a target fraction of traffic-matrix nodes are simultaneously connected. The second measures the average strength of active elementary links for scenarios that satisfy a prescribed waiting-time threshold.

\subsubsection{Time to Concurrent Connectivity}\label{subsec:time_to_connectivity}

Our primary performance metric is the time required for the satellite-serviced backbone to achieve a specified level of concurrent connectivity. At each instant $t$, we identify the largest subset of ground-station pairs for which entanglement can be established end-to-end through the backbone, either directly or via entanglement swapping through intermediate nodes. This defines an instantaneous concurrent connectivity process, representing the fraction (or number) of traffic-matrix nodes that can be simultaneously connected at time $t$.

This notion distinguishes backbone-like operation---where a persistent substrate supports simultaneous end-to-end paths between distant nodes---from regimes dominated by intermittent or sequential links. Simultaneity is evaluated at a simulation resolution of $\delta t = 1~\mathrm{s}$, so connectivity at time $t$ should be interpreted as requiring entanglement resources to persist long enough to support routing and swapping over that epoch, corresponding to a memory-and-control timescale on the order of $\delta t$.

Rather than summarize this connectivity process using temporal averages of the connectivity, we characterize performance using \emph{forward-wait} statistics (Appendix~\ref{app:forward_wait_time}), which measure the expected waiting time until concurrent connectivity exceeds a specified threshold over the simulation horizon. Operationally, this corresponds to the expected time required for all, or a predefined fraction, of traffic-matrix nodes to become connected through viable end-to-end paths. This first-passage perspective captures phase-dependent visibility gaps over the orbital cycle that are obscured by average connectivity measures alone.

Short forward waits indicate favorable satellite visibility and well-distributed coverage, whereas long waits reflect sparse coverage, recurring low-connectivity intervals, or insufficient satellite concurrency that delay backbone availability. Throughout the paper, unless otherwise stated, we report results under latency constraints.

Our evaluation focuses on connectivity over a predefined traffic matrix representing application-relevant demand. Specifically, we construct a traffic matrix consisting of 120 cities: the top 100 ranked global financial centers~\cite{noauthor_global_2025}, supplemented with 20 additional cities selected to improve geographic and socioeconomic coverage (Figure~\ref{fig:traffic_matrix}). Each city is mapped to its nearest ground station on the underlying grid, and connectivity is evaluated between the corresponding ground-station pairs. This construction evaluates backbone performance as experienced by realistic endpoints while remaining agnostic to the precise placement of individual ground stations. Connectivity over the full ground-station grid is evaluated separately in Supplementary Materials~\cite{supplementary_materials} for additional structural context.

Throughout, $\vartheta$ denotes a threshold applied to the fraction of city-pairs in the traffic matrix connected at an instant (see Sec.~\ref{subsec:simulation_methodology} and Appendix~\ref{app:metrics_connectivity} for details).

\begin{figure}[!htbp]
    \centering
    \includegraphics[width=0.95\linewidth]{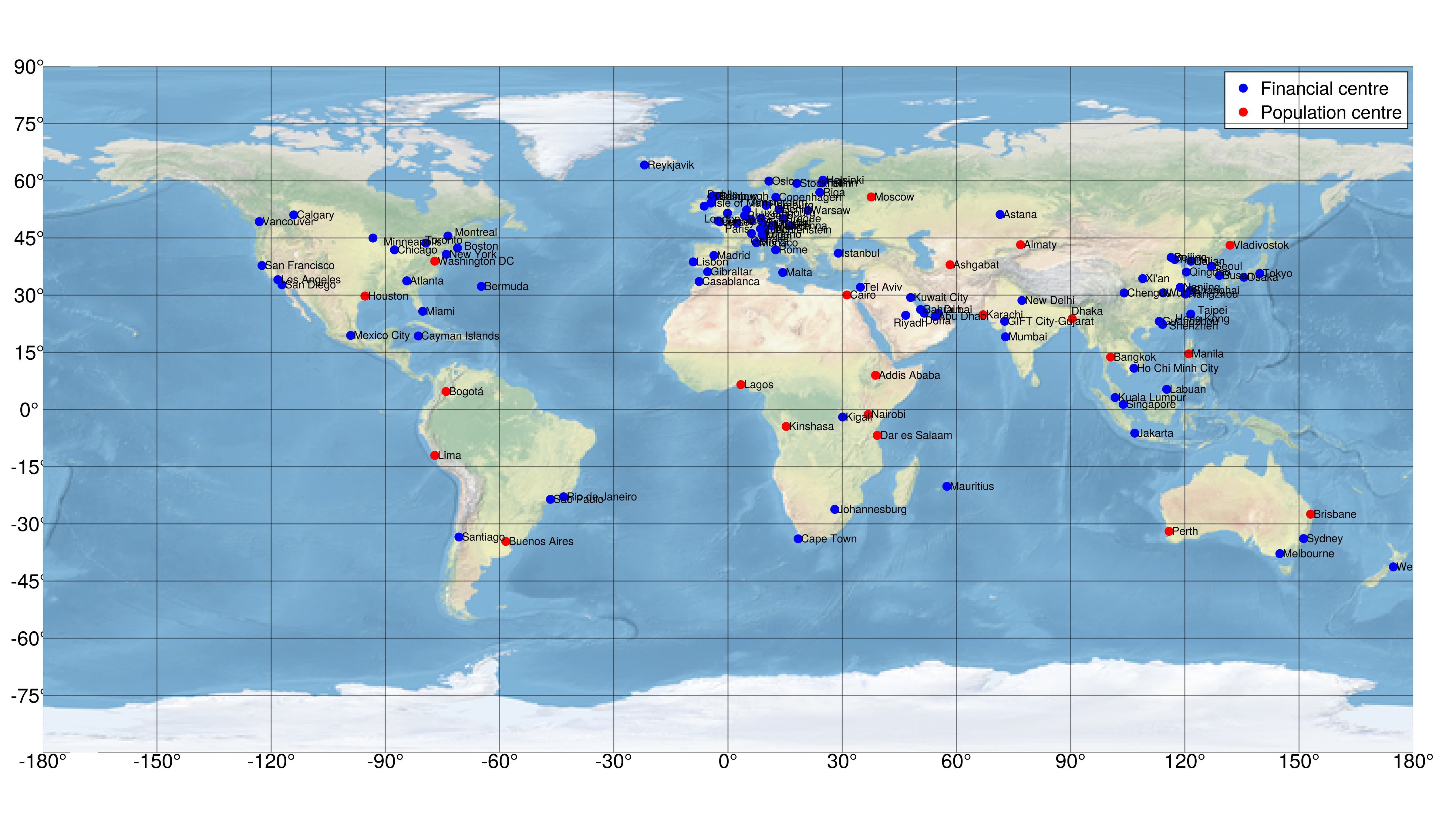}
    \caption{\textbf{City grid used to construct the traffic matrix.} Blue markers denote global financial centers, while red markers denote major population hubs not included in the financial centers list. Together, these cities provide broad geographic coverage across continents and regions and serve as application-relevant endpoints for evaluating backbone connectivity.}
    \label{fig:traffic_matrix}
\end{figure}

\subsubsection{Latency conditioned average link strength}
Connectivity and waiting-time metrics quantify when a sufficiently connected backbone becomes available, but they do not quantify the strength of the elementary entanglement links produced by the satellite service process. We therefore also report the average active-link strength under a latency constraint.

We evaluate the network at discrete simulation epochs
\begin{align*}
    t_k = t_{\mathrm{start}} + k\,\delta t,
    \qquad k=1,\ldots,K,
\end{align*}
where $t_{\mathrm{start}}$ is the simulation start time, and $\delta t$ is the simulation time step. At epoch $t_k$, let $w_{ij}(t_k)$ denote the best direct satellite-serviced elementary-link strength available between ground stations $i$ and $j$, after applying the visibility, zenith-angle, and minimum-link-strength constraints. The active elementary-link set at epoch $t_k$
\begin{align*}
    \mathcal{E}_k := \{(i,j): w_{ij}(t_k)>0\}.
\end{align*}
We define the epoch-level active-link strength as
\begin{align*}
    S_k &=
        \begin{cases}
        \displaystyle
        \frac{1}{|\mathcal{E}_k|}
        \sum_{(i,j)\in \mathcal{E}_k} w_{ij}(t_k),
        & |\mathcal{E}_k|>0,\\[6pt]
        0, & |\mathcal{E}_k|=0.
        \end{cases}
\end{align*}
For a trace with $K$ evaluated epochs, the average active-link strength is
\begin{align*}
    \bar S
    \;=\;
    \frac{1}{K}
    \sum_{k=1}^{K} S_k .
\end{align*}

\noindent We refer to the quantity $\bar S$ as latency-conditioned when it is reported only for configurations whose time-to-connectivity satisfies a prescribed latency requirement, e.g., maximum wait time of $W_{\mathrm{max}} = 10$~s to achieve a connectivity threshold $\vartheta \geq 50\%$ (see Appendix~\ref{app:forward_wait_time}). Thus, the conditioning is applied at the configuration level: among configurations that meet the latency target, $\bar S$ measures the typical strength of the active elementary links produced by the satellite-serviced backbone.

\subsection{Evaluation}\label{subsec:evaluation}
\subsubsection{Design Question 1: Ground-Station Geometry}

A critical design question for a satellite-serviced quantum backbone is how to spatially distribute ground stations to support low-latency, large-scale connectivity. Because satellite visibility and ground-station density both vary strongly with latitude, the choice of grid geometry directly affects how efficiently satellite coverage can be translated into usable end-to-end connectivity.

We consider three ground-station placement strategies: (1) longitudinal (equal-angular), (2) uniform (equidistant), and (3) anisotropic triangular grids. These approaches differ in how they distribute stations across latitudes and longitudes, and in how well they account for geometric effects such as the convergence of longitudes near the poles and the latitude dependence of satellite pass density. In all cases, the satellite constellation, hardware assumptions, and traffic matrix are held fixed, allowing the role of ground-station geometry to be isolated.

We compare these layouts using time-to-connectivity and latency-conditioned average link strengths over the traffic matrix across a range of satellite constellation designs and budgets. This analysis allows us to assess how different ground-station layouts influence backbone responsiveness and simultaneous service capability, and sets the stage for the first design insight presented in Sec.~\ref{sec:Results} (Results).

\subsubsection{Design Question 2: Satellite Constellation Structure}
A second key design question concerns the structure of the satellite constellation itself. Given a fixed satellite budget, it is not obvious whether connectivity can be improved by augmenting a dominant shell with a small number of satellites in a geometrically distinct orbit. While both approaches---single-shell or augmented dual-shell---can provide global coverage in principle, they differ in how satellite visibility overlaps in time and space, and therefore in how connectivity emerges and evolves across the network.

Single-shell constellations concentrate satellites at a common altitude and inclination, leading to regular and geographically structured visibility patterns. While such designs can provide broad coverage, their fixed inclination also determines which latitude bands receive the most service opportunity and which regions are weakly served. An augmented dual-shell constellation retains a primary shell for baseline coverage while introducing a secondary shell with distinct orbital parameters. This secondary shell is not intended to increase raw capacity, but to add inclination diversity, improving serviceability of otherwise weakly covered regions and increasing the availability of intermediate relay nodes without shifting the entire constellation to a higher-inclination orbit.

To evaluate this design question, we compare single-shell (with inclination angle $i = 53^\circ$) and an augmented dual-shell constellations (with inclination angles $i \in \{53^\circ (\mathrm{primary}), 98^\circ (\mathrm{secondary})\}$ under matched satellite budgets and identical ground-station layouts, hardware assumptions, and traffic matrix. Performance is assessed using average concurrent connectivity and time to connectivity over the traffic matrix, allowing us to isolate the impact of adding orbital diversity to an otherwise single-shell backbone. This analysis forms the basis for the second design insight discussed in Section~\ref{sec:Results} (Results).

\subsubsection{Design Question 3: Satellite Concurrent Service Capability}
A third design question concerns the servicing capability of individual satellites. Many existing satellite-assisted quantum networking proposals assume that a satellite establishes entanglement with only a single ground station, or a single ground-station pair, at any given time~\cite{khatri_spooky_2021, shao_hybrid_2025}. While this assumption simplifies payload design and control, it places a fundamental limit on the amount of simultaneous connectivity that can be supported within a satellite's footprint, regardless of the density of ground stations below.

An alternative is to equip satellites with multiple entanglement sources or terminals, allowing them to service multiple ground stations concurrently. Recent progress in compact entanglement sources has made such architectures increasingly plausible within realistic payload constraints~\cite{mahmudlu_fully_2023, simmons_scalable_2024, chapman_-chip_2025}. At the same time, increasing the number or aperture of optical terminals introduces costs in mass, pointing agility, and maneuverability, and can complicate satellite operations. It is therefore not \textit{a priori} clear when increased servicing capability leads to meaningful network-level gains, given platform complexity increases.

To evaluate this design question, we compare configurations in which satellites are restricted to serving a single ground-station pair per time step against configurations that allow satellites to distribute entanglement to multiple ground stations concurrently. All other parameters---including constellation structure, ground-station grid layout, and traffic matrix---are held fixed. Performance is assessed using average concurrent connectivity and time to connectivity over the traffic matrix, allowing us to isolate the impact of satellite servicing capability on backbone performance. The resulting comparisons form the basis for the third design insight presented in Section~\ref{sec:Results}.

We emphasize that the term multi-party connectivity is used here in an architectural rather than state-preparation sense. In the simulations reported below, each active edge represents a bipartite Bell-pair opportunity between two ground stations. However, when number of optical terminals per satellite ($T$) is greater than two, the same satellite footprint and terminal configuration could also be used to distribute genuinely multi-partite entangled states, such as GHZ states, among three or more simultaneously serviced ground stations. We therefore interpret the multi-party service model as a general concurrency abstraction: in this work it is instantiated using bipartite links, while future protocol-specific studies can replace the induced edge model with explicit multi-partite entanglement generation.

\subsubsection{Secondary Architectural Parameters}
In addition to the three primary design questions discussed above, we also examine the influence of several secondary architectural parameters, such as number of satellites per orbital plane, number of orbital planes, and total satellite budget. Satellite altitude is treated separately as a primary physical parameter, as it strongly influences both coverage geometry and link loss. Each of these parameters impacts coverage density, link quality, and temporal availability, and therefore interacts with both concurrent connectivity and time to connectivity.

Satellite altitude primarily affects link loss and achievable entanglement rates, with higher altitudes trading improved coverage footprints for reduced link strength. The number of satellites per plane and the number of orbital planes together determine how evenly satellite visibility is distributed in time and across latitudes, influencing how quickly connectivity emerges and how frequently coverage gaps occur. Increasing the total satellite budget generally improves both connectivity and wait times, but with diminishing returns once coverage becomes dense.

In our evaluation, these parameters are varied systematically to assess their interaction with the primary design choices, while keeping the overall architecture fixed. This distinction separates first-order architectural effects---such as ground-station geometry, constellation structure, and satellite servicing capability---from secondary tuning parameters. Detailed results for these secondary parameters are reported alongside the primary comparisons and, where appropriate, deferred to supplementary material~\cite{supplementary_materials}.

\subsubsection{Ground-station visibility and evaluation scope.}
The terrestrial lattice is generated independently of the satellite constellation and is used as a common candidate backbone across all configurations. Consequently, for single-shell constellations with fixed inclination, some high-latitude ground stations in the full lattice may never become visible to any satellite. These stations would be redundant in a deployment-specific ground-station layout and would naturally be pruned after a constellation-specific visibility analysis.

This does not affect the reported performance metrics. Connectivity, waiting time, and active-link strength are evaluated over the city-pair traffic matrix. All cities in the traffic matrix are visible in at least some configurations within the sampled altitude range. Ground stations outside the traffic matrix are used only as intermediate relay nodes when they are serviceable; stations with no satellite visibility cannot participate in path formation and therefore do not inflate connectivity or link-strength metrics.

Raising the inclination of a single shell would reduce the number of permanently unserviceable high-latitude stations, but it also redistributes satellite availability toward higher latitudes. This creates a different trade-off: additional high-latitude relay coverage may come at the cost of reduced service opportunity over the latitude bands that dominate the traffic matrix. The augmented dual-shell design addresses this tension by adding inclination diversity rather than shifting the entire constellation to a higher-inclination orbit. Our conclusions comparing single-shell and augmented dual-shell designs are therefore insensitive to pruning permanently unserviceable ground stations from the candidate lattice.

\subsection{Simulation methodology}
\label{subsec:simulation_methodology}

We evaluate satellite backbone performance using a discrete-time network simulator that couples satellite orbital dynamics, geometric visibility constraints, and an abstracted entanglement service model. The simulator is implemented in Julia and operates at a fixed temporal resolution of $\delta t = 1$~s over a simulation horizon $T_{\mathrm{sim}}$. All results reported in this manuscript are derived from this simulator. We do not model physical-layer impairments like weather, atmospheric turbulence or cloud cover. Although they introduce additional multiplicative loss, they are shared equally by all architectures and do not alter the architectural ordering studied here.

\vspace{-1em}
\paragraph{Orbital propagation.}
Each satellite is evolved independently using a two-body orbital model specified by its altitude, inclination, right ascension of the ascending node (RAAN), and initial mean anomaly. The mean anomaly parametrizes the satellite's orbital phase at the reference epoch and is advanced in time according to the mean motion, providing a computationally efficient description of motion along a Keplerian trajectory. 

Satellite positions are first computed in an Earth-centered inertial (ECI) frame, which is non-rotating with respect to distant stars and is convenient for computing orbital dynamics. To account for Earth's rotation when evaluating ground visibility, these positions are converted into the Earth-centered Earth-fixed (ECEF) coordinates using standard Earth orientation parameters. Ground stations are treated as fixed points in the ECEF frame, defined by their geodetic latitudes and longitudes.

For link evaluation, satellite and ground-station positions are expressed in a local `topocentric' frame centered at each ground station. This local East--North--Up (ENU) coordinate system enables direct computation of elevation and azimuth angles. A satellite is considered visible to a ground station if its elevation exceeds a specified minimum threshold, ensuring that only physically realizable free-space optical links are included.

\vspace{-1em}
\paragraph{Visibility model.}
At each time step $t$, a satellite $s$ is considered visible to a ground station $g$ if the satellite elevation angle at the ground station exceeds a prescribed threshold. Equivalently, we impose a zenith-angle constraint
\begin{equation}
    z_{s,g}(t) \in [z_{\min}, z_{\max}],
\end{equation}
where $z_{s,g}(t)$ is the zenith angle between the satellite and ground station in the local topocentric frame. Unless otherwise stated, $z_{\max}=57^\circ$ (corresponding to a minimum elevation angle of $33^\circ$). Visibility is purely geometric and deterministic; weather effects and stochastic link outages are not modeled.

\vspace{-1em}
\paragraph{Effective slant-path length.}
The Gaussian beam radius $w(\cdot)$ is evaluated at an \emph{effective slant-path length} $d_{s,g}(t)$ between satellite $s$ and ground station $g$ at time $t$. This quantity represents the physical propagation distance along the satellite--ground line of sight, accounting for Earth curvature and the satellite's instantaneous orbital position. Specifically, $d_{s,g}(t)$ is defined as the Euclidean distance between the satellite position vector and the ground-station position vector in an Earth-centered frame, subject to the visibility constraint $z_{s,g}(t) \le z_{\max}$. This effective distance determines the diffraction-limited beam expansion and therefore directly controls the geometric coupling efficiency $\eta_{\rm geo}(t)$. The operating wavelength is set to $810$nm throughout this work as typical for optical free-space communications~\cite{hearne_wavelength_2025}.

\vspace{-1em}
\paragraph{Satellite service sets.}
For each satellite $s$ at time $t$, we construct the set of visible ground stations
\begin{equation}
    \mathcal{V}_s(t) = \{ g \in \mathcal{G} : z_{s,g}(t) \le z_{\max} \}.
\end{equation}
From this set, the satellite selects a service subset $\mathcal{U}_s(t) \subseteq \mathcal{V}_s(t)$ consisting of at most $T$ ground stations, chosen as the $T$ nearest visible stations according to instantaneous satellite--ground distance. The parameter $T$ captures the maximum number of ground stations that can be concurrently serviced by a satellite and is determined by the assumed payload capability.

\vspace{-1em}
\paragraph{Service topology models.}
Edges between ground stations are generated independently by each satellite according to one of two service models.

\emph{Bi-partite connectivity (BPC):}
In BPC mode, a satellite induces feasible entanglement links between all unordered pairs of ground stations in its service set $\mathcal{U}_s(t)$. For each pair $(g_i,g_j) \subset \mathcal{U}_s(t)$, a weighted edge is instantiated if the corresponding expected entanglement-generation rate exceeds a minimum feasibility threshold, taken here as $R_{\min}=1$ Bell pair per second. In all results reported in this manuscript, BPC operation is instantiated with $T=2$, so that at most one ground-station pair can be active per satellite per time shard.  The threshold $R_{\min}=1~\mathrm{Hz}$ is chosen as the feasibility cutoff corresponding to the minimum rate required to support heralded Bell-pair generation within a one-second shard; varying this threshold shifts absolute performance values but does not alter the qualitative architectural trends reported here.

\emph{Multi-party connectivity (MPC):}
In MPC mode, a satellite designates a single hub ground station (chosen as the nearest element of $\mathcal{U}_s(t)$) and generates edges only between the hub and the remaining stations in $\mathcal{U}_s(t)$. Additionally, nearest-neighbor edges among the non-hub stations are added, forming a constrained hub--spoke--ring topology. This model captures increased per-satellite concurrency enabled by multiple optical terminals within a single satellite footprint. Similar to the BPC case, $R_{\min}=1~\mathrm{Hz}$.

\vspace{-1em}
\paragraph{Link efficiency model.}
For a satellite--ground link between satellite $s$ and ground station $g$ at time $t$, we define
\begin{equation}
\eta_{s,g}(t)
= \underbrace{\left(1-\exp\!\left[-\frac{2a_R^2}{w^2(d_{s,g}(t))}\right]\right)}_{\eta_{\rm geo}(t)} \cdot \underbrace{\eta_{\rm atm}\!\left(z_{s,g}(t)\right)}_{\text{atmospheric transmission}},
\end{equation}
with
\begin{equation}
\label{eq:eta_atm}
\eta_{\rm atm}(z)=\eta_{\rm zen}^{\sec z} \qquad (\text{equivalently } \eta_{\rm atm}(z)=\exp[-\tau_{\rm atm}\sec z],\ \eta_{\rm zen}=e^{-\tau_{\rm atm}}).
\end{equation}
Here $a_R$ is the receiver aperture radius, $w(\cdot)$ is the Gaussian beam radius evaluated at the effective slant-path propagation distance used in the simulator (accounting for Earth curvature), and $z_{s,g}(t)$ denotes the zenith angle at the ground station. \noindent
Here $\eta_{\rm geo}$ captures loss due to diffraction-limited beam broadening and geometric collection of a Gaussian beam by a circular aperture, while $\eta_{\rm atm}(z)=\exp[-\tau_{\rm atm}\sec z]$ models elevation-dependent atmospheric absorption under clear-sky conditions. The zenith transmission coefficient $\eta_{\rm zen}=\exp(-\tau_{\rm atm})$ is treated as a fixed parameter representing clear-sky conditions at the operating wavelength ($810$nm); we do not model weather, turbulence, pointing jitter, or background-noise-induced false heralding in the present study. 

For a ground-station pair $(g_i,g_j)$ served by satellite $s$, the effective pairwise transmission efficiency is modeled as the product of the two downlink efficiencies,
\begin{equation}
    \eta_{i,j}(t) = \eta_{s,g_i}(t)\,\eta_{s,g_j}(t).
\end{equation}
We convert this efficiency into an expected entanglement-generation rate
\begin{equation}
    R_{i,j}(t) = R_s \, \eta_{i,j}(t),
\end{equation}
where $R_s$ is the source attempt rate. In multi-point connectivity (MPC) mode, entanglement sources are time-multiplexed across edges within a shard; $R_s$ is interpreted as the per-edge attempt rate after absorbing this duty cycle, so that relative comparisons across architectures are unaffected. An edge is instantiated if $R_{i,j}(t) \ge R_{\min}$, where $R_{\min}$ denotes a minimum expected entanglement-generation rate used to define link feasibility (taken as one Bell pair per second). If multiple satellites simultaneously serve the same ground-station pair, we retain the one with the maximum rate among them. Accordingly, reported link strengths should be interpreted as relative architectural weights rather than predictions of absolute end-to-end entanglement rates for a specific hardware platform.

\vspace{-1em}
\paragraph{Ground-station graph construction.}
At each time step $t$, we construct a weighted undirected graph
\begin{equation}
    G_t = (\mathcal{G}, E_t),
\end{equation}
where vertices correspond to ground-stations and edges correspond to feasible entanglement links generated by the satellites at time $t$. The edge weight $w_{i,j}(t)$ is defined as the maximum $R_{i,j}(t)$ over all contributing satellites. We define the instantaneous average link strength as the mean of all nonzero edge weights in $G_t$. This metric represents the typical per-link transmission efficiency available at time $t$, conditioned on link existence.

\vspace{-1em}
\paragraph{Connectivity metrics and wait-time analysis.}
At each time step $t$, satellites induce a weighted ground-station graph $G_t$ as described above. Network connectivity is quantified as the size of the largest connected component (LCC) \cite{pastor2007evolution} of this induced graph. To evaluate connectivity under a maximum allowable waiting time $W_{\max}$, we define a union graph
\begin{equation}
    \widetilde{G}_{t,W_{\max}} = \left(\mathcal{G}, \bigcup_{\tau=t}^{t+W_{\max}} E_\tau \right),
\end{equation}
and compute the corresponding LCC.

Connectivity metrics reported in the main text are evaluated primarily with respect to the traffic matrix: each city in the traffic matrix is mapped to its nearest grid point, and connectivity is assessed on the induced subgraph corresponding to the traffic-matrix nodes.

Time-to-connectivity for a given threshold $\vartheta$ is computed using a forward-wait construction: for each start time $t$, we record the minimum waiting time $w$ such that at least a fraction $\vartheta$ of the total number of traffic-matrix pairs have an entanglement path available (See Appendices~\ref{app:forward_wait_time} and~\ref{app:metrics_methodology}).

\vspace{-1em}
\paragraph{Traffic-matrix evaluation.}
Each city in the traffic matrix is mapped to its nearest ground station. City-level connectivity metrics are evaluated by inducing subgraphs of $G_t$ or $\widetilde{G}_{t,W_{\max}}$ on the corresponding ground-station subsets and testing path existence between city pairs as required by the traffic matrix. 
To note, Bell-state measurement success probability, detector noise, background photons, and quantum-memory decoherence are not explicitly modeled in the present simulator; link weights therefore represent idealized entanglement-generation opportunities rather than finalized high-fidelity Bell-pair rates.

\vspace{-1em}
\paragraph{Reproducibility, waiting-time construction, and uncertainty estimates.}
All simulations are performed using a single simulation start epoch and propagated continuously over a fixed analysis horizon of four hours. Connectivity and waiting-time metrics are computed deterministically relative to specified connectivity thresholds. For each threshold, we form a Boolean time series indicating whether the threshold is met, and extract the durations of contiguous \emph{down intervals} (time spans during which the threshold is not satisfied). Reported percentiles (e.g., $p10/p50/p90$) are computed over the empirical distribution of these down-interval durations within the same orbital realization (see Appendix~\ref{app:metrics_methodology}).

No averaging over multiple independent orbital realizations or randomized initial conditions is performed. All constellation parameters, ground-station configurations, visibility thresholds, and simulator settings are recorded and exported alongside results to enable full reproducibility. Because satellite motion is deterministic and results are evaluated over a single orbital realization, reported variability reflects temporal fluctuations rather than ensemble uncertainty across independent systems.

Shaded uncertainty bands shown in figures correspond to $\pm 1$ standard-error-of-the-mean (SEM) estimates computed from the same deterministic realization while accounting for temporal correlation. Specifically, for any scalar time series (e.g., average active-link strength) or derived event sequence (e.g., down-interval durations), we estimate the integrated autocorrelation time $\tau_{\mathrm{int}}$ using an initial-positive-sequence estimator: sample auto-correlations are summed over positive lags and truncated at the first non-positive value~\cite{grotendorst_quantum_2002} (See Appendix~\ref{app:metrics_autocorr_problem} and ~\ref{app:metrics_tauint}). We then compute an effective sample size $N_{\mathrm{eff}} = N/(2\tau_{\mathrm{int}})$ and the corresponding SEM as $\mathrm{SE}=\sigma/\sqrt{N_{\mathrm{eff}}}$. These uncertainty bands therefore quantify correlation-adjusted temporal variability within a single run, rather than statistical uncertainty across independent constellation draws. Waiting-time statistics are event-based; under the initial-positive-sequence estimator, the down-interval duration sequences typically truncate immediately (yielding $\tau_{\mathrm{int}}\approx 0.5$ events), indicating no detectable positive inter-event autocorrelation in the finite trace. By contrast, continuous per-second metrics such as average link strength and LCC size show nontrivial temporal correlations on the order of tens of seconds, motivating separate reporting of autocorrelation scales. See Tables~\ref{tab:tau_int_waits_by_th} and \ref{tab:tau_int_timeseries} for details.

\begin{table}[h]
\centering
\caption{Autocorrelation scales for connectivity waiting-time statistics, computed from \emph{event-level} sequence of down-interval durations $\{L_n^{(\vartheta)}\}$ (see Appendix~\ref{app:forward_wait_time} and Appendix~\ref{app:metrics_methodology} for definitions), where each event is one maximal contiguous below-threshold episode for threshold $\vartheta$. ere $\tau_{\mathrm{int}}$ is estimated using the initial-positive-sequence (IPS) truncation rule applied to the event sequence and is therefore reported in \emph{events} (intervals), not seconds. $N_{\mathrm{eff}}$ denotes the corresponding effective number of down-intervals events used for standard-error calculations of event-averaged quantities; $N_{\mathrm{eff}}=0$ indicates that the threshold is satisfied essentially continuously in the finite trace, yielding too few down events for meaningful uncertainty quantification. Reported waits are the median $p10/p50/p90$ down-run durations (seconds) across experiments. In most cases IPS truncates immediately, giving $\tau_{\mathrm{int}}\approx 0.5$ events and indicating no detectable positive inter-event autocorrelation within the finite trace.} 
\label{tab:tau_int_waits_by_th}
\begin{tabular}{|c |c |c |c |c |}
\hline
\textbf{Threshold} & \textbf{Wait stat} & $\boldsymbol{\tau_{\mathrm{int}}}$ \textbf{(events)} & $\boldsymbol{N_{\mathrm{eff}}}$ & \textbf{Wait (sec)} \\
\hline
0.5 & $p10/p50/p90$ & $0.5 \pm 0.0$ & $0$--$56$ & $1.0$ / $5.0$ / $13.0$ \\
0.7 & $p10/p50/p90$ & $0.5 \pm 0.0$ & $0$--$55$ & $2.3$ / $23.0$ / $64.2$ \\
0.9 & $p10/p50/p90$ & $0.5 \pm 0.0$ & $1$--$55$ & $6.1$ / $73.0$ / $121.0$ \\
\hline
\end{tabular}
\end{table}

\begin{table}[h]
\centering
\caption{Typical integrated autocorrelation times and effective sample sizes for per-second time-series metrics computed over the full simulation horizon (single deterministic orbital realization). $\tau_{\mathrm{int}}$ is estimated from the epoch-level sequences using Geyer's initial-positive-sequence (IPS)~\cite{geyer1992practical} truncation rule (pair-sum form) and is reported in seconds (with $\delta t=1$~s). $N_{\mathrm{eff}}=N/(2\tau_{\mathrm{int}})$ denotes the corresponding effective number of time samples used for standard-error-of-the-mean estimates. These metrics are not conditioned on connectivity thresholds in the current analysis pipeline.}
\label{tab:tau_int_timeseries}
\begin{tabular}{|l|c|c|}
\hline
\textbf{Metric} & $\boldsymbol{\tau_{\mathrm{int}}}$ \textbf{(seconds)} & $\boldsymbol{N_{\mathrm{eff}}}$ \\
\hline
Average active-link strength & $17.7 \pm 6.5$ & $272$--$538$ \\
Largest connected component size & $18.5 \pm 8.8$ & $252$--$659$ \\
\hline
\end{tabular}
\end{table}

\vspace{-1em}
\paragraph{Fidelity considerations and architectural scope.}
End-to-end entanglement fidelity and finalized Bell-pair throughput are not explicitly tracked in the present simulator. Instead, link feasibility and strength are interpreted as \emph{entanglement-delivery opportunities}, capturing whether large connected components capable of supporting end-to-end entanglement delivery can emerge and persist across the backbone, and with what degree of concurrency.

We assume that elementary satellite--ground entanglement can be generated at high rates and with high initial fidelity $F_0$ under favorable conditions, consistent with recent free-space and satellite demonstrations~\cite{yin2017satellite, beckert2019space, villar2020entanglement,lim2026free}. Under idealized Bell-state measurements and classical feedforward, non-unit success probabilities primarily reduce usable throughput rather than eliminating connectivity altogether, provided sufficient multiplexing is available.

The connected components observed in our simulations imply effective path lengths typically on the order of fewer than $15$--$20$ satellite-served hops for traffic-matrix connectivity, substantially smaller than the number of repeater segments required for terrestrial networks spanning comparable distances~\cite{mantri20241-2-wayquantum}. As a result, even in the presence of non-unit BSM success probability and finite elementary-link fidelity, non-zero end-to-end entanglement rates are expected when high-rate sources and multiplexing are employed.

Our focus on time-to-connectivity and delivery concurrency therefore isolates a distinct architectural bottleneck: whether sufficiently large, simultaneous entanglement-supporting components can form under realistic satellite visibility and servicing constraints, independent of the specific physical-layer protocol used to realize each link.

\begin{table}[t]
\centering
\caption{Core simulation parameters used throughout the evaluation unless explicitly varied. These values are not intended to represent a specific flight-qualified payload, but to define a consistent operating point for architectural comparison.} 
\begin{tabular}{lll}
\toprule
    \textbf{Parameter} & \textbf{Symbol} & \textbf{Value} \\
    \midrule
    \multicolumn{3}{l}{\textbf{Time discretization}} \\
    Simulation horizon & $T_{\mathrm{sim}}$ & $4$ hr \\
    Time step & $\delta t$ & $1$ s \\
    \multicolumn{3}{l}{\textbf{Visibility model}} \\
    Maximum zenith angle & $z_{\max}$ & $57^\circ$ \\
    Minimum elevation & $\theta_{\min}$ & $33^\circ$ \\
    \multicolumn{3}{l}{\textbf{Ground-station grid}} \\
    Base angular spacing & $\Delta_0$ & $3.6^\circ$ \\
    Land-snap radius & --- & $100$ km \\
    \multicolumn{3}{l}{\textbf{Satellite constellation}} \\
    Primary inclination & $i$ & $53^\circ$ \\
    Polar inclination & $i_{\mathrm{polar}}$ & $98^\circ$ \\
    \multicolumn{3}{l}{\textbf{Optical link model}} \\
    Receiver aperture radius & $a_R$ & $0.5$ m \\
    Beam waist parameter & $w_0$ & $0.10$ m \\
    Wavelength & $\lambda$ & $810$ nm \\
    Zenith transmission & $\eta_{\mathrm{zen}}$ & $0.8$ \\
    Source attempt rate & $R_s$ & $100$ MHz \\
    \multicolumn{3}{l}{\textbf{Edge feasibility}} \\
    Minimum link score & $R_{\min}$ & $1~\mathrm{s}^{-1}$ \\
\bottomrule
\end{tabular}

\label{tab:sim_params}
\end{table}

\section{Results}\label{sec:Results}

In this section, we present results corresponding to the three design questions introduced in Section~\ref{sec:Perf_Eval}. Unless otherwise stated, results are reported using the traffic matrix described earlier, with concurrent connectivity and time to connectivity as the primary performance metrics. Section~\ref{subsec:design_insight1} examines the impact of ground-station geometry, Section~\ref{subsec:design_insight2} evaluates the role of constellation structure, and Section~\ref{subsec:design_insight3} analyzes the effect of satellite servicing capability. Together, these results lead to a set of architectural design insights for satellite-serviced global quantum backbones. Throughout the study, satellite altitude, number of orbital planes, and number of satellites per orbital plane are treated as explicit sweep parameters. Unless otherwise stated, we report results for altitudes ranging from 500\,km to 1400\,km, orbital-plane counts ranging from 30 to 360, and satellites per plane ranging from 9 to 60 (see Sec.~\ref{sec:system} for definitions). All results report best performance over these configurational sweeps. Unless stated otherwise, the polar fraction $f_\mathrm{polar}$ is fixed at 10\%, and number of terminals is fixed at $7$.

Before presenting the design insights, we reiterate that the goal here is to study latency- and concurrency-focused measures. 

We use  Refs. \cite{khatri_spooky_2021, shao_hybrid_2025} as reference designs that motivate specific baseline grid geometries within our unified parameterization. In particular, equal-angular latitude–longitude grids~\cite{khatri_spooky_2021} correspond approximately to the $\alpha=-1$ regime in our parameterization, while uniform Euclidean (equi-distant) grids~\cite{shao_hybrid_2025} correspond to $\alpha=0$ over a limited latitude range. By embedding these geometries into a unified framework and evaluating them using waiting-time–conditioned connectivity metrics, we extract architectural insights rather than implementation-level performance claims.

The metrics adopted in this work---time to connectivity under finite waiting constraints, traffic-matrix connectivity, and latency-conditioned active-link strength---are chosen to probe whether a satellite-serviced system can function as a persistent global backbone, rather than as a collection of high-quality but intermittent point-to-point links.  These metrics are therefore complementary to the rate- and fidelity-centric analyses emphasized in existing literature.

\subsection{Design Insight 1: Ground-Station Geometry Should Be Anisotropic}\label{subsec:design_insight1}

We examine the impact of ground-station (GS) geometry on backbone performance by comparing longitudinal ($\alpha=-1$), uniform ($\alpha=0$), and anisotropic ($\alpha>0$) triangular grids under identical satellite constellations and hardware assumptions. Our evaluation focuses on two network-level metrics: (i) the time required to meet a specified traffic-matrix connectivity requirement, and (ii) the average strength of links that become available under an explicit latency constraint. While intermediate values $0<\alpha<0.5$ also correspond to progressively more anisotropic layouts, we do not analyze them separately here. Instead, we focus on $\alpha\geq0.5$ to provide a clear contrast between isotropic and strongly anisotropic GS geometries.

For the analysis reported in this subsection, we assume that satellites support concurrent connectivity to up to seven ground stations in a `hub--spoke--ring' fashion. In this model, each satellite is equipped with seven telescopes, and entanglement distribution is restricted to pairwise connections between neighboring ground stations. We use the augmented dual-shell constellation comprising a primary shell at $53^\circ$ inclination and a secondary polar shell at $98^\circ$, with $10\%$ of satellites allocated to the polar shell\footnote{We use these inclinations as representative values to anchor the study in a plausible LEO design space; the architectural trends reported here are not tied to these specific angles and can be re-evaluated under alternative shells.}.

Figure~\ref{fig:city_waittime} reports the satellite budget required to meet different traffic-matrix connectivity thresholds under varying latency constraints. Across all thresholds, anisotropic grids consistently achieve the target connectivity at substantially lower latency than both longitudinal and uniform grids, while requiring fewer satellite resources.

While Figure~\ref{fig:city_waittime} focuses on the \emph{latency required} to achieve a target connectivity level, it does not reveal the \emph{quality} of links that are available under explicit timing constraints. We therefore evaluate the \emph{latency-conditioned link strength}: the average aggregate entanglement rate across city pairs that satisfy a given connectivity requirement within a maximum allowable wait time $W_{\max}$. This metric captures the tradeoff between responsiveness and link quality, and complements the wait-time analysis by quantifying how much usable entanglement is available within bounded time windows.

Figure~\ref{fig:city_strength_bywaittime} reports the latency-conditioned link strength for instantaneous backbone formation, corresponding to $W_{\max}=1~\mathrm{s}$. Results are shown for three representative satellite budgets and multiple target connectivity thresholds. As shown in the plots, anisotropic grids generally achieve comparable or higher link strengths across the thresholds considered. Results for larger waiting-time windows, $W_{\max} \in \{10~\mathrm{s}, 60~\mathrm{s}, 1~\mathrm{hour}, 4~\mathrm{hours}\}$, are provided in the Supplementary Material~\cite{supplementary_materials}. These cases correspond to a forward-looking regime in which long-lived quantum memories can preserve entanglement over timescales ranging from seconds to hours.

A notable exception occurs at higher satellite budgets with lower connectivity requirements, where longitudinal layouts ($\alpha=-1$) can exhibit slightly higher instantaneous aggregate link strength than anisotropic layouts. This behavior arises from extreme ground-station clustering near the poles. In these layouts, many ground stations are concentrated at high latitudes; when polar satellites are present, these nearby stations can be simultaneously served by the same satellite acting as a hub. Because the corresponding satellite--ground paths are short, atmospheric and geometric attenuation are relatively small, leading to temporarily elevated aggregate entanglement generation rates.

However, this apparent advantage is largely a geometric artifact of polar overcrowding and does not translate into improved end-to-end network performance. Closely clustered polar GSs primarily form short local connections, whereas long-distance connectivity across the global traffic matrix still requires multi-hop entanglement swapping. As the number of hops increases, probabilistic Bell-state measurements and swapping operations compound losses, rapidly reducing usable end-to-end entanglement rates. This effect becomes particularly clear when polar satellites are removed, as shown in the Supplementary Material~\cite{supplementary_materials}. Constellations without polar satellites exhibit a pronounced drop in achievable entanglement rate, especially for layouts with $\alpha \le 0$. In these cases, the densely clustered polar ground stations fall outside the footprint of the mid-inclination shell ($53^\circ$) and can no longer be served, causing the apparent rate advantage to disappear. This confirms that the elevated rates observed for $\alpha \le 0$ are localized servicing artifacts caused by polar overcrowding, rather than evidence of improved global backbone performance.

Taken together, these results show that GS geometry is a first-order determinant of the \emph{latency} with which a satellite-serviced backbone can satisfy traffic-matrix connectivity requirements under realistic timing constraints. By redistributing GS density away from polar regions and toward latitudes with sparser satellite coverage, anisotropic grids accelerate the onset of global connectivity while reducing overall resource requirements. The qualitative trends observed under these assumptions are robust to strictly pairwise connectivity and persist in single-shell constellations, as discussed in subsequent subsections.

\begin{figure}[!htbp]
\begin{subfigure}[t]{0.95\linewidth}
\centering
    \begin{subfigure}[t]{0.55\linewidth}
    \centering
    {(a) Proportion of Connected Global Financial \\ and Population Centres in the Traffic Matrix  ($\vartheta$)\par}
        \begin{subfigure}[t]{0.49\linewidth}
            \includegraphics[height=4.4cm, valign=t, trim=0 2.5cm 26cm 0.5cm, clip]{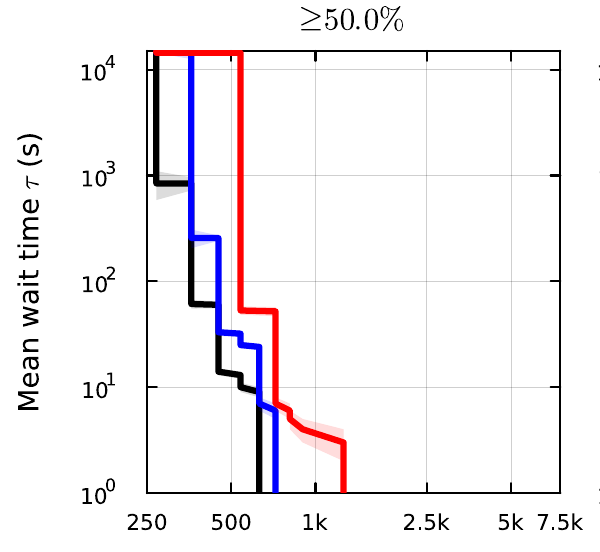}
        \end{subfigure}
            \hspace{-0.3em}
        \begin{subfigure}[t]{0.43\linewidth}
            \includegraphics[height=4.4cm, valign=t, trim=27.4cm 2.5cm 0 0.5cm, clip]{figures/k_7_pf_0.1_city_wait_vs_sats_connectivity_panels.pdf}
        \end{subfigure}
        {\small Satellite Budget (\# of satellites available)\par}
    \end{subfigure}
    \hspace{-1em}
    \begin{subfigure}[t]{0.43\linewidth}
        \centering
    {(b) Satellite Budget\par}
        \raisebox{-0.5cm}{%
        \includegraphics[height=4.4cm, valign=t, trim=1.3cm 2.3cm 17.4cm 0.5cm, clip]{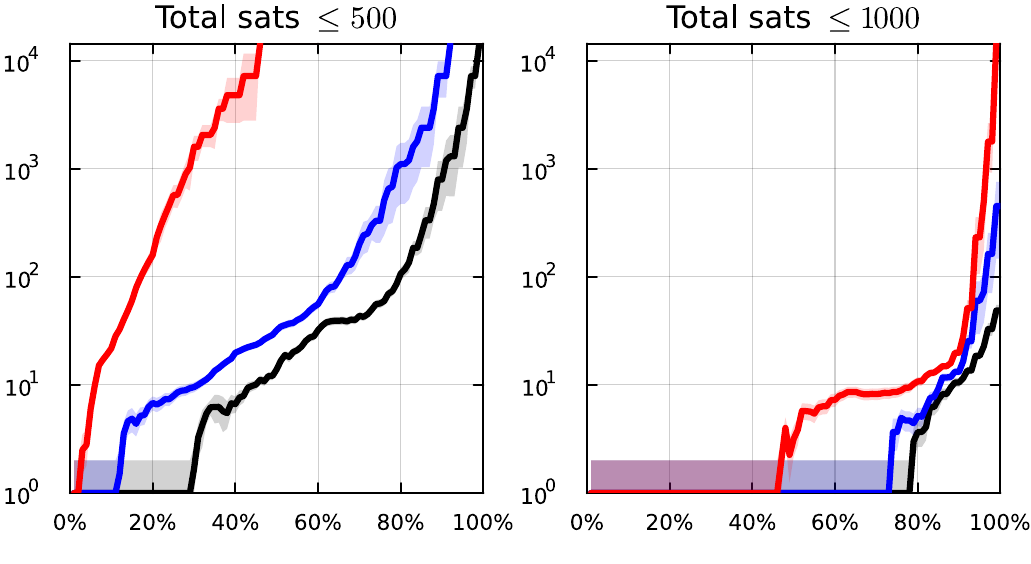}
        }
        {\small Connectivity Threshold $\vartheta$ \par}
    \end{subfigure}

\end{subfigure}

    \begin{subfigure}{0.95\linewidth}
        \includegraphics[width=\linewidth, trim = 5cm 0 5cm 10.8cm, clip]{figures/pf_0.1_k_7_wait_vs_sats_connectivity_panels_continousthreshold.pdf}
    \end{subfigure}
            
\caption{
    \textbf{Satellite budget–latency tradeoffs for traffic-matrix connectivity under different ground-station geometries. Curves correspond to longitudinal ($\alpha=-1$), isotropic ($\alpha=0$), and anisotropic ($\alpha>0$) ground-station grids.}
    Shaded regions indicate $\pm$1 standard error over simulation realizations.
    (\textbf{a}) Mean waiting time to achieve fixed global connectivity thresholds (50\%, and 100\% of city pairs connected) as a function of the available satellite budget.
    Vertical lines mark the satellite budget at which the threshold is first satisfied.
    (\textbf{b}) Mean waiting time as a function of the required connectivity fraction, shown for fixed satellite-budget regimes (from left to right: increasing total satellites).  
    Each panel illustrates how tightening the connectivity requirement increases latency under a constrained satellite budget, and how this scaling depends on ground-station geometry. 
    Across both representations, anisotropic grids consistently achieve higher connectivity at lower latency and with fewer satellite resources, indicating improved alignment between ground-station density and latitude-dependent satellite access patterns.
    }

    \label{fig:city_waittime}
\end{figure}

\begin{figure}
    \centering
        \begin{subfigure}{0.33\linewidth}
            \centering
        {\hspace{2em}Satellite Budget $\leq 1250$\par}
        \includegraphics[height = 4.5cm, trim = 0 7.1cm 0.3cm 3.3cm,
    clip]{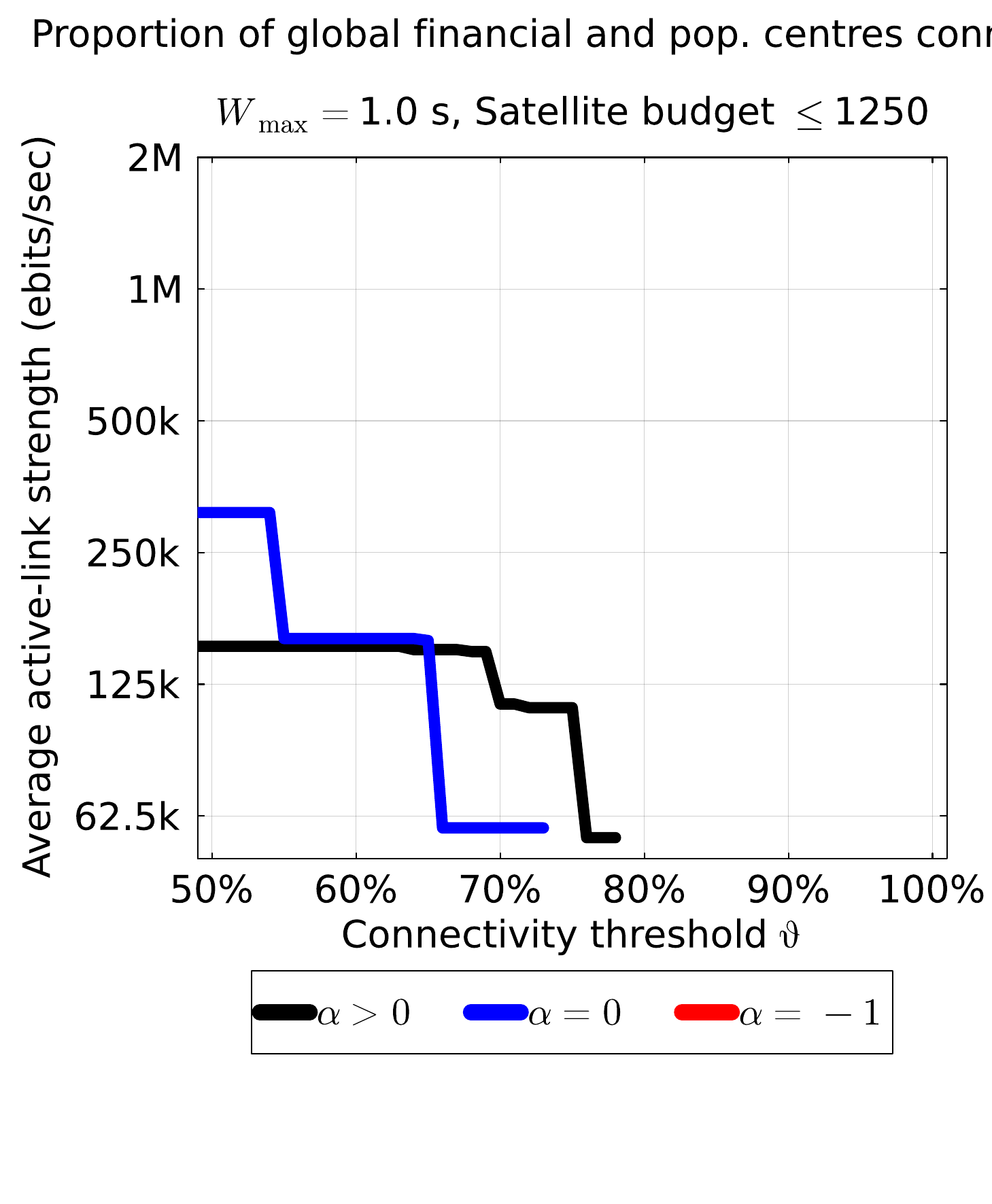}
    \end{subfigure}
    % \hfill
    \begin{subfigure}{0.30\linewidth}
        \centering
        {\hspace{1.5em}Satellite Budget $\leq 2500$\par}
        \includegraphics[height = 4.5cm, trim = 1.9cm 7.1cm 0.3cm 3.3cm,
    clip]{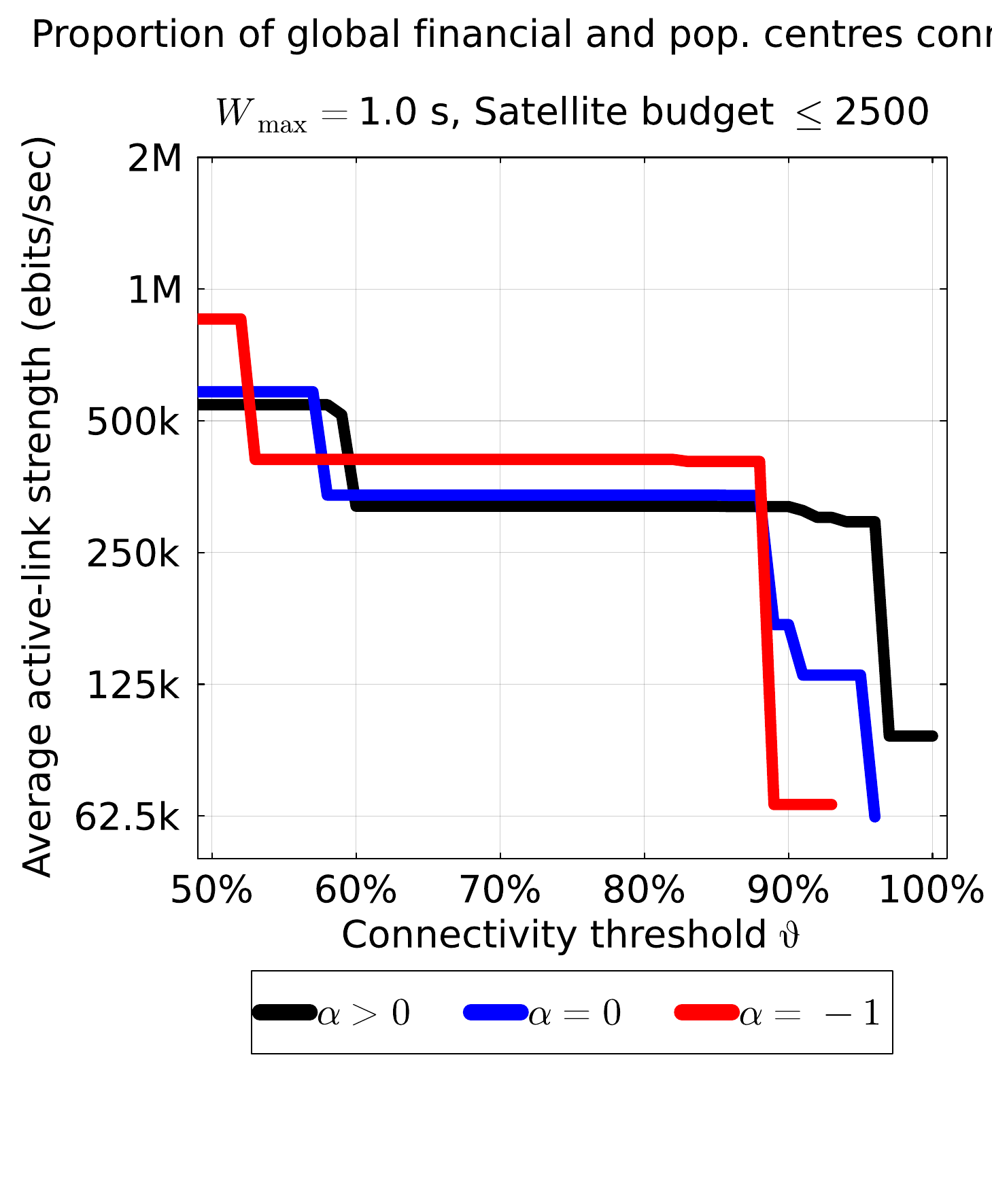}
    \end{subfigure}
    % \hfill
        \begin{subfigure}{0.3\linewidth}
        {\hspace{3em} Satellite Budget $\leq 5000$\par}
        \includegraphics[height = 4.5cm, trim = 1.9cm 7.1cm 0.3cm 3.3cm,
    clip]{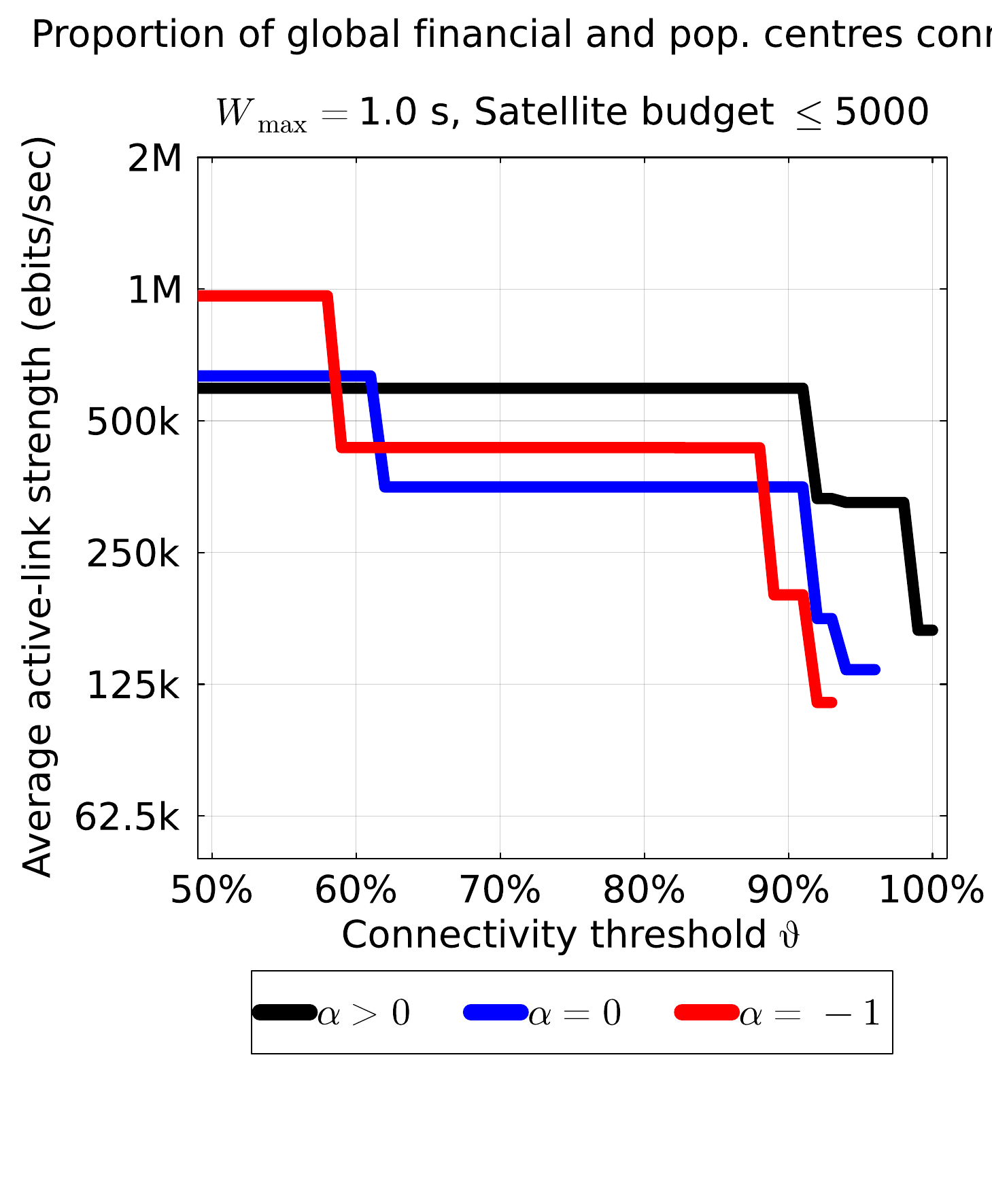}
    \end{subfigure}
    \begin{subfigure}{0.95\linewidth}
    \centering
                {Connectivity Threshold $\vartheta$ \par}
    \end{subfigure}
        \begin{subfigure}{0.95\linewidth}
        \includegraphics[width=\linewidth, trim = 5cm 0 5cm 10.8cm, clip]{figures/pf_0.1_k_7_wait_vs_sats_connectivity_panels_continousthreshold.pdf}
    \end{subfigure}
    \caption{
    \textbf{Average \emph{latency-conditioned} active-link strength as a function of the required traffic-matrix connectivity threshold $\vartheta$, under different satellite budgets}.
    The three panels restrict the design search to constellations with at most 1250, 2500, and 5000 satellites, respectively.
    For each threshold, the plotted value is the best average active-link strength achieved among configurations satisfying the satellite-budget constraint and a maximum allowable wait time of $W_{\max}=1\,\mathrm{s}$.
    Since the simulation is evaluated in $1\,\mathrm{s}$ time shards, this regime corresponds to effectively instantaneous connectivity.
    Black, blue, and red curves correspond to anisotropic ($\alpha>0$), isotropic/equi-distant ($\alpha=0$), and longitudinal ($\alpha=-1$) ground-station layouts, respectively.
    Satellites support concurrent service to up to $T=7$ ground stations using the hub--spoke--ring model.
    As the connectivity threshold is tightened, the longitudinal layout exhibits the steepest degradation in active-link strength, indicating that its high-strength operating points are concentrated at loose connectivity requirements and do not extend well to near-global instantaneous connectivity.
    By contrast, the anisotropic layout shows a more gradual loss of strength at stringent thresholds and benefits more consistently from increasing the satellite budget: once the anisotropic grid reaches a target connectivity level with comparatively fewer satellites, additional satellites primarily increase the achievable active-link strength rather than merely restoring connectivity.
    Together with the wait-time results, this indicates that anisotropic ground-station placement improves the separation between the resources needed to achieve global connectivity and the additional resources needed to strengthen the resulting backbone.
}
    \label{fig:city_strength_bywaittime}
\end{figure}

\subsection{Design Insight 2: Satellite Constellations Benefit from Inclination Diversity}
\label{subsec:design_insight2}

We compare the impact of satellite constellation structure on backbone performance by evaluating single-shell and augmented dual-shell Low-Earth-Orbit (LEO) constellations under \emph{identical} ground-station placements, traffic matrices, hardware assumptions, and total satellite budgets. In both cases, the total number of satellites is held fixed; the dual-shell constellations differ in the distribution of satellites across the two inclination shells. 

For the single-shell design, all satellites are placed in a shell with inclination $53^\circ$. For the dual-shell design, satellites are split between a primary shell at $53^\circ$ and a secondary near-polar shell at $98^\circ$, with the polar shell comprising 5\%--15\% of the satellites. In the following analysis we concentrate on the anisotropic ground station grid ($\alpha>0$) to isolate the effect of constellation design from ground-station geometry. Supplementary Materials~\cite{supplementary_materials}  shows results for the full family of ground-station layouts $\alpha\in\{-1,0,>0\}$ to illustrate how shell diversity interacts with ground-station geometry.

Figure~\ref{fig:wait_vs_sats_ads_ss}~(left) shows that redistributing a fixed satellite budget across the two shells systematically reduces the satellite budget required to achieve a given connectivity latency. This effect is most pronounced at moderate to high satellite counts and at stringent connectivity thresholds, where performance depends on maintaining simultaneous connectivity across a large fraction of the traffic matrix.
The augmented dual-shell design improves this regime by introducing inclination diversity: the additional shell increases service opportunity for latitude bands that are weakly covered by the baseline single shell and improves the availability of intermediate relay nodes, without shifting the entire constellation toward a higher-inclination orbit.
As a result, ADS can reduce the waiting time needed to reach high traffic-matrix connectivity thresholds under the same total satellite budget.

\begin{figure}[!htbp]
\centering
    \begin{subfigure}{0.33\linewidth}
        \begin{overpic}[height=4.3cm, trim=0cm 8cm 0cm 1.9cm, clip]{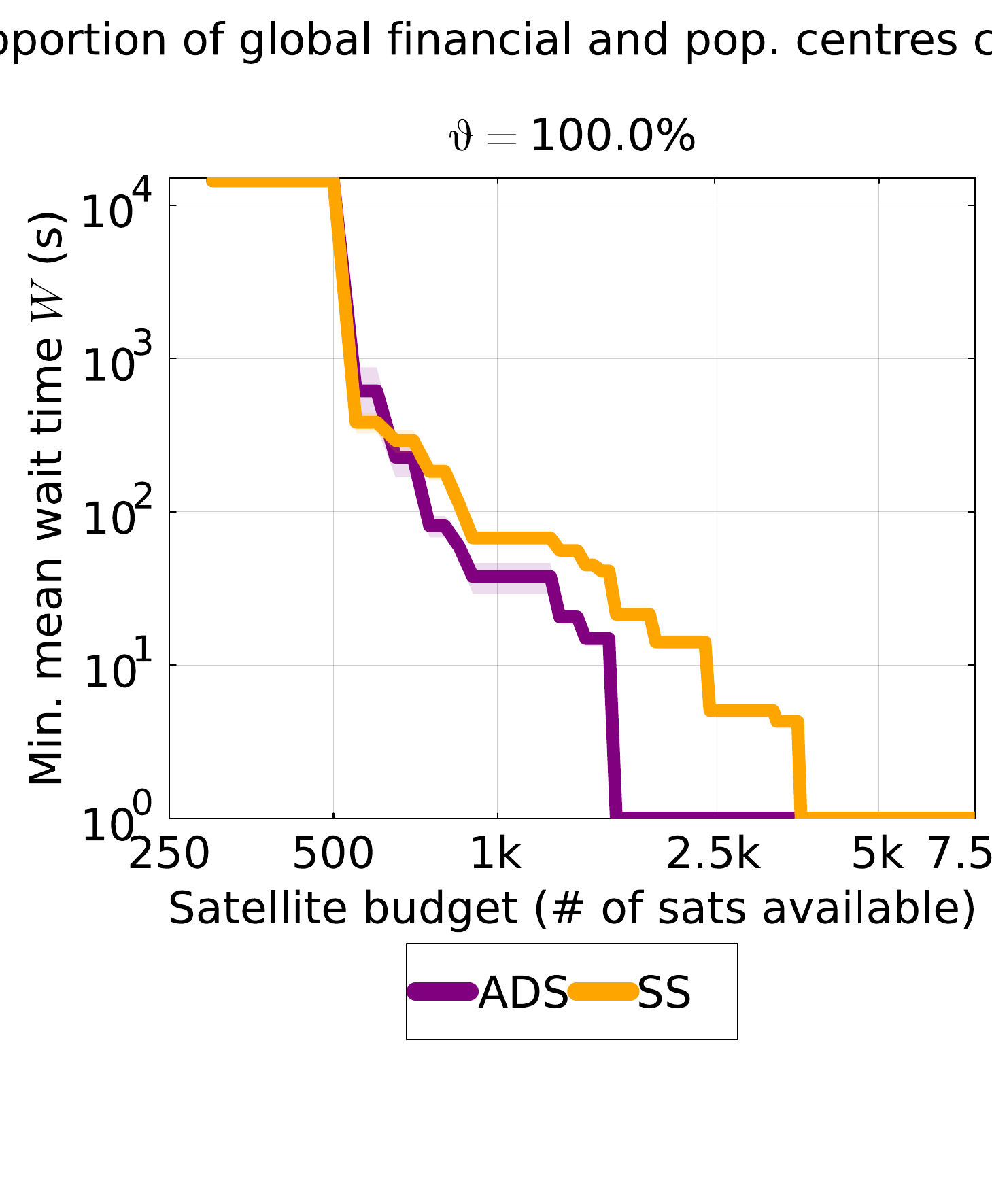}
                \put(8, 75){\textbf{(a)}}
        \put(30, -5){\small  Satellite Budget }
        \end{overpic}
        \label{fig:wait_vs_sat_city}
    \end{subfigure}
    \begin{subfigure}{0.3\linewidth}
        \begin{overpic}[height=4.3cm, trim=2cm 8cm 0cm 1.9cm, clip]{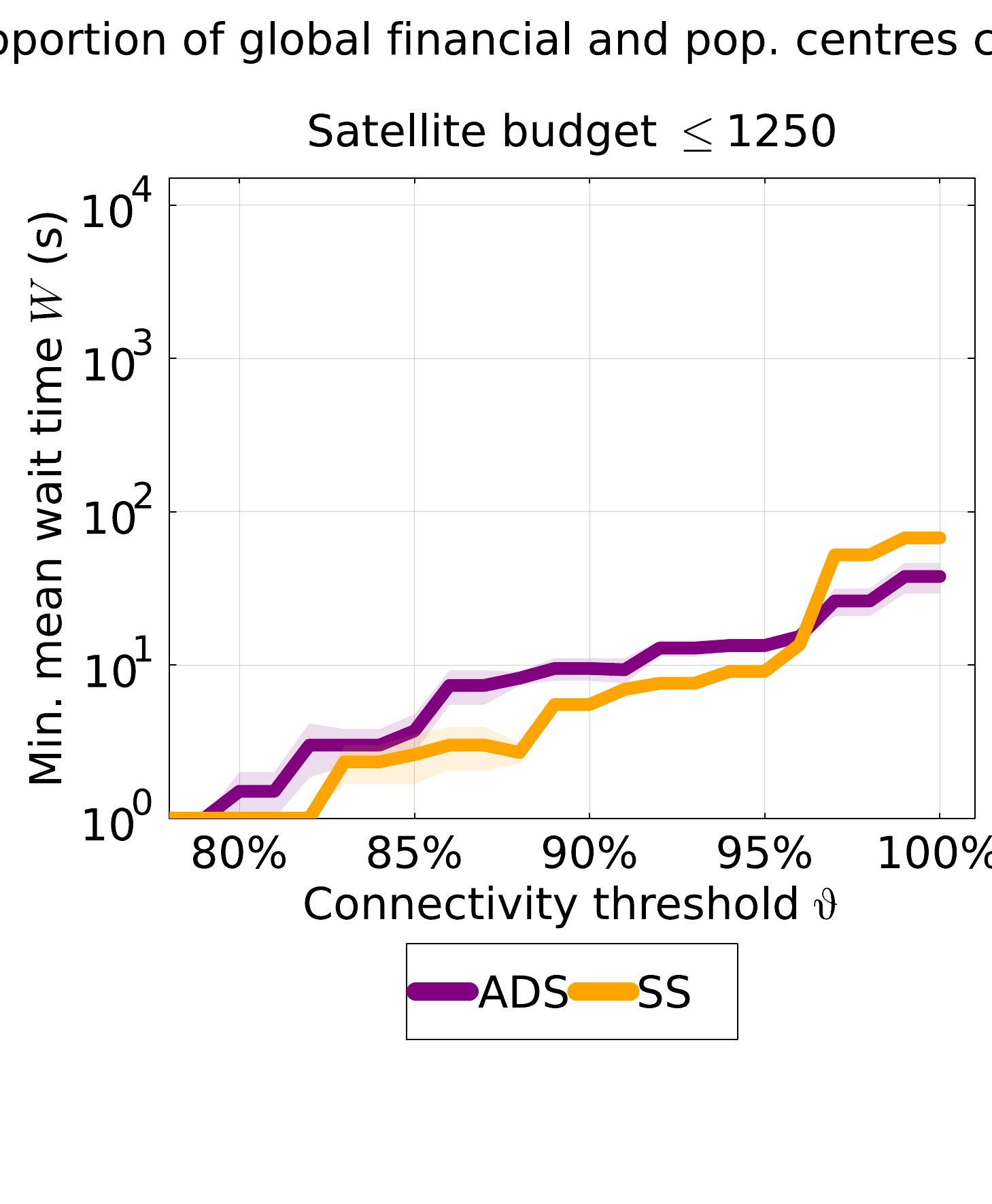}
        \put(50, -5){\small Connectivity Threshold $\vartheta$}
                        \put(5, 82){\textbf{(b)}}
        \end{overpic}
    \end{subfigure}
        \begin{subfigure}{0.3\linewidth}
            \begin{overpic}[height=4.3cm, trim=2cm 8cm 0cm 1.9cm, clip]{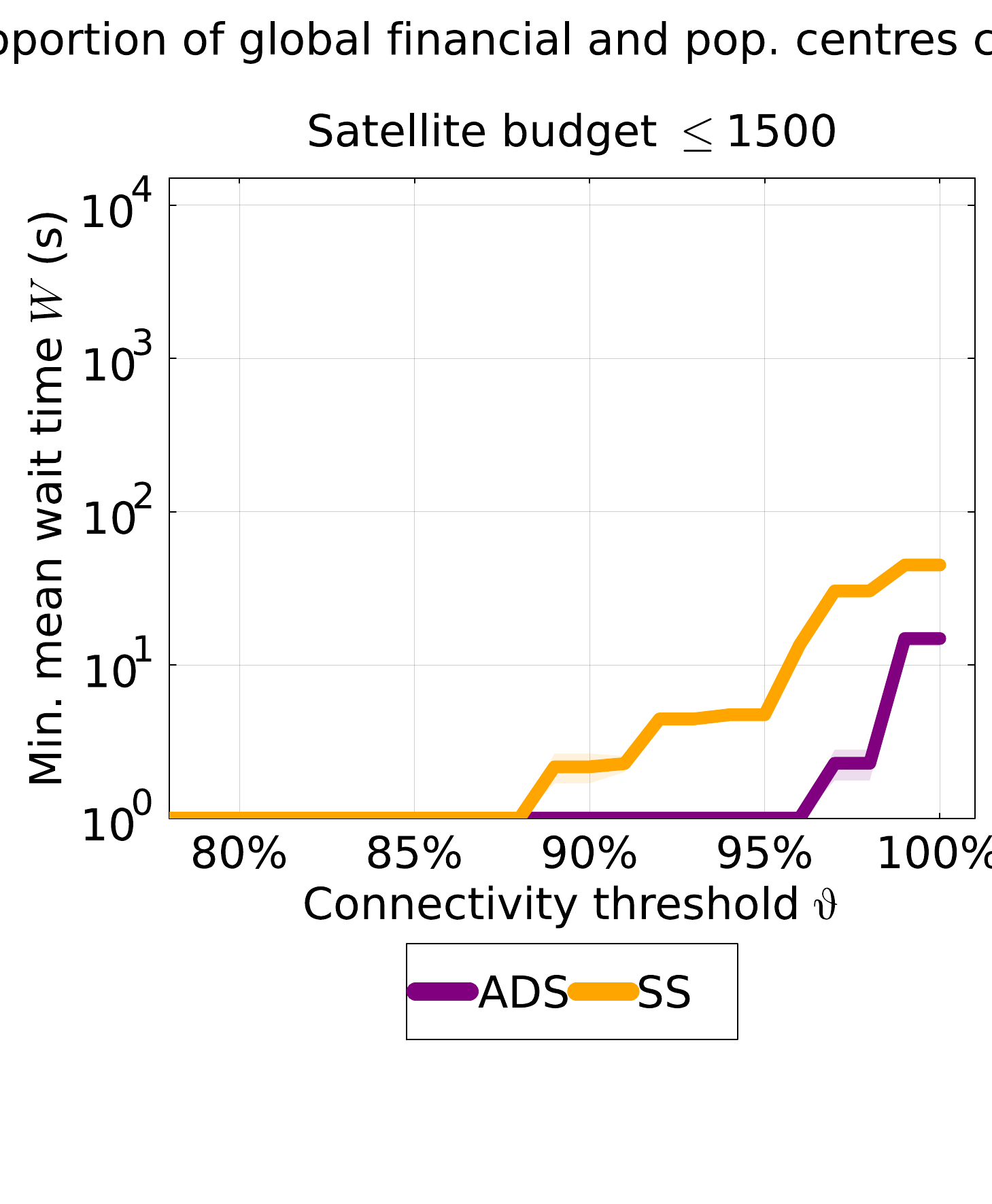}
                            \put(5, 82){\textbf{(c)}}
                            \end{overpic}
        \end{subfigure}
    \\
            \begin{subfigure}{0.95\linewidth}
            \vspace{1em}
    \centering
                \includegraphics[height = 1cm, trim = 1.9cm 0.7cm 0.3cm 26cm,
    clip]{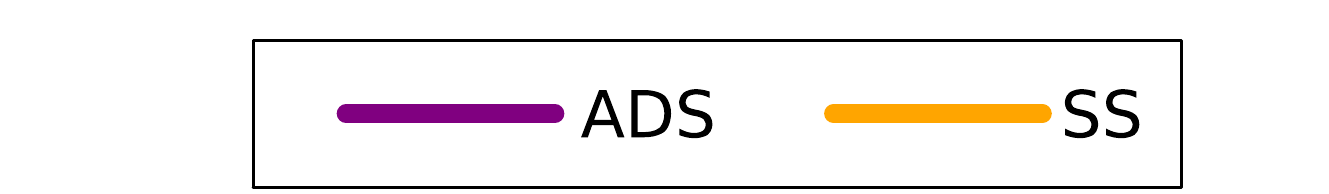}
    \end{subfigure}
   \caption{
    \textbf{Minimum mean waiting time required to satisfy concurrent traffic-matrix connectivity targets for augmented dual-shell (ADS) and single-shell (SS) constellations.}
    Purple curves denote ADS constellations and orange curves denote SS constellations.
    All plotted configurations are restricted to satellite service concurrency $T=7$, ensuring that the comparison isolates constellation architecture rather than per-satellite service capacity.
    (a) Minimum mean wait time required to achieve complete concurrent connectivity, $\vartheta=100\%$, as a function of the available satellite budget.
    (b--c) Minimum mean wait time as a function of the required concurrent-connectivity threshold $\vartheta$, conditioned on total satellite budgets of 1250 and 1500 satellites, respectively.
    At lower satellite budgets, ADS and SS exhibit similar latency behavior because the primary (inclined) orbital shell remains satellite-limited.
    Once the main shell is sufficiently populated, the additional polar component in ADS improves coverage diversity and reduces synchronized visibility gaps, producing a sharp transition between the 1250- and 1500-satellite regimes.
    Beyond this transition, ADS provides substantially faster connectivity at stringent thresholds, with many lower threshold targets becoming available essentially instantaneously as the satellite budget increases.
    }
\label{fig:wait_vs_sats_ads_ss}
    \end{figure}

Single-shell constellations in the configuration studied here also under-utilize high-latitude and polar ground stations. Satellites confined to the $53^\circ$ inclination shell concentrate their ground tracks at mid-latitudes, resulting in intermittent and poorly timed visibility at the highest latitudes. This limitation is particularly consequential given the geometry of the Earth: high-latitude access can reduce great-circle distances for some East--West intercontinental routes and provides additional routing diversity. By excluding polar or near-polar shells, single-shell designs forgo both the excess satellite visibility available at high latitudes and the favorable geographic position of polar ground stations.

The dual-shell constellation mitigates this limitation by allocating satellites to shells with complementary inclinations, enabling persistent access to high-latitude regions while maintaining coverage at mid-latitudes. Crucially, this improvement arises from geometric diversity rather than increased resources: the same satellite budget yields greater temporal overlap between independently generated links, directly translating into reduced wait times at fixed connectivity thresholds.

Figure~\ref{fig:strength_vs_sats_city} reports the average strength of active entanglement links as a function of satellite budget. It is evident that the dual-shell constellation achieves link strengths comparable to or higher than those of the single-shell design for the same satellite budget. This in addition to the smaller latency for concurrent connectivity. It can also be observed that beyond a satellite budget of 2500 satellites the link strength for the SS design saturates whereas the ADS design can further improve by servicing additional paths at higher altitudes. This further solidifies the intuition that the ADS design better utilizes satellite and ground station resources.

\begin{figure}[!htbp]
    \centering
        \begin{subfigure}{0.33\linewidth}
            \centering
        {\hspace{2em}Satellite Budget $\leq 1250$\par}
        \includegraphics[height = 4.5cm, trim = 0 7.1cm 0.3cm 3.3cm,
    clip]{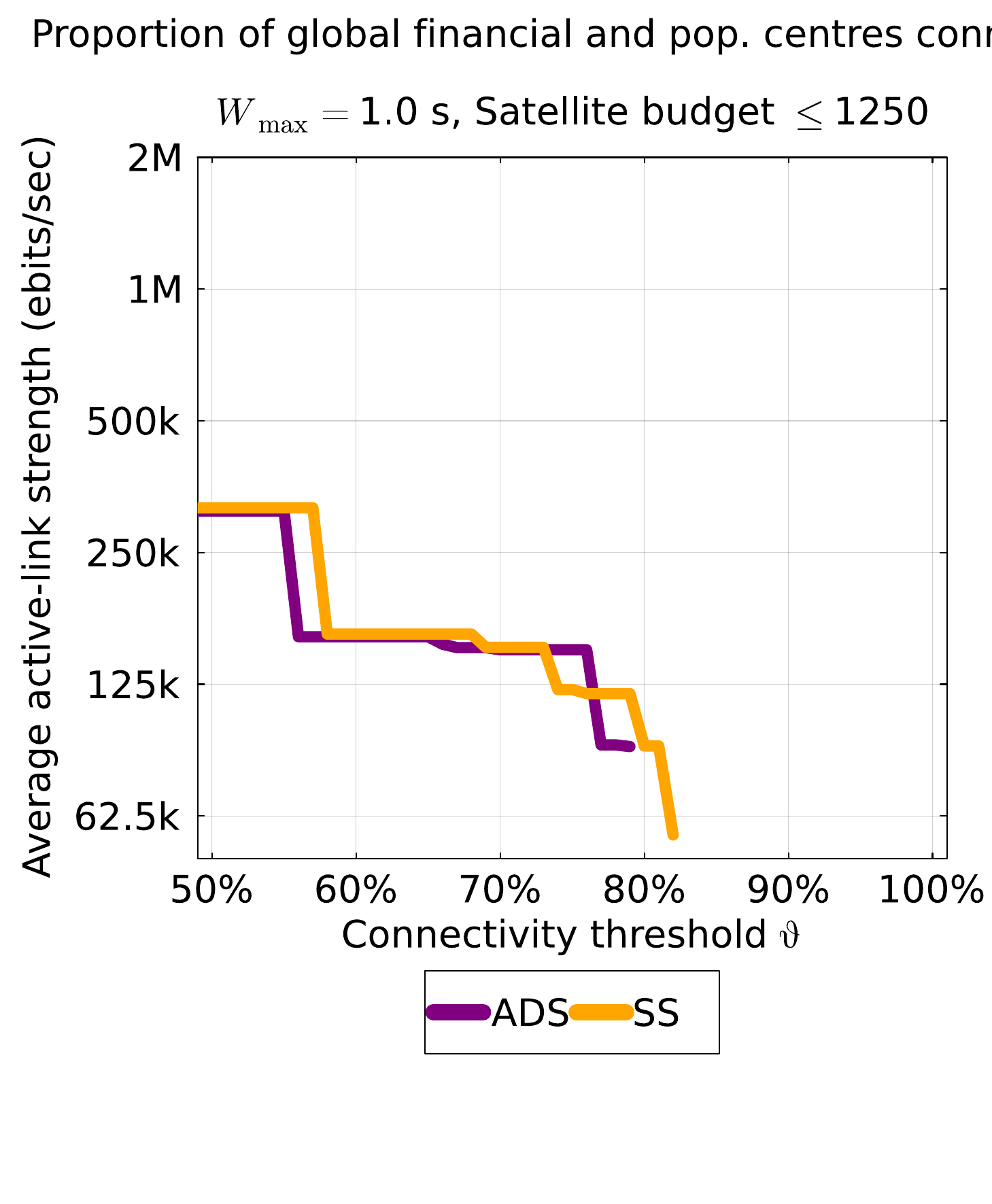}
    \end{subfigure}
    % \hfill
    \begin{subfigure}{0.30\linewidth}
        \centering
        {\hspace{1.5em}Satellite Budget $\leq 2500$\par}
        \includegraphics[height = 4.5cm, trim = 1.9cm 7.1cm 0.3cm 3.3cm,
    clip]{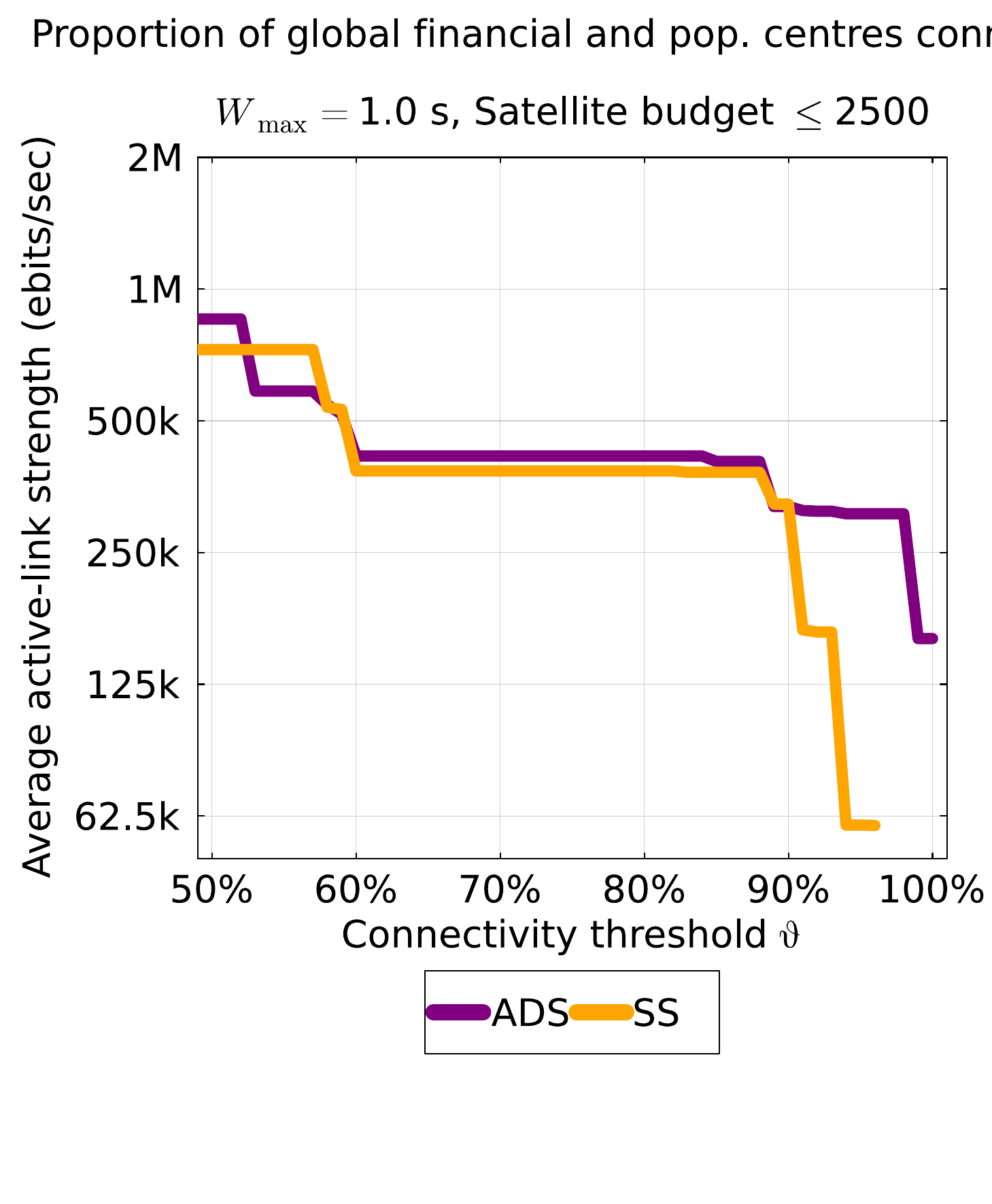}
    \end{subfigure}
    % \hfill
        \begin{subfigure}{0.3\linewidth}
        {\hspace{3em} Satellite Budget $\leq 5000$\par}
        \includegraphics[height = 4.5cm, trim = 1.9cm 7.1cm 0.3cm 3.3cm,
    clip]{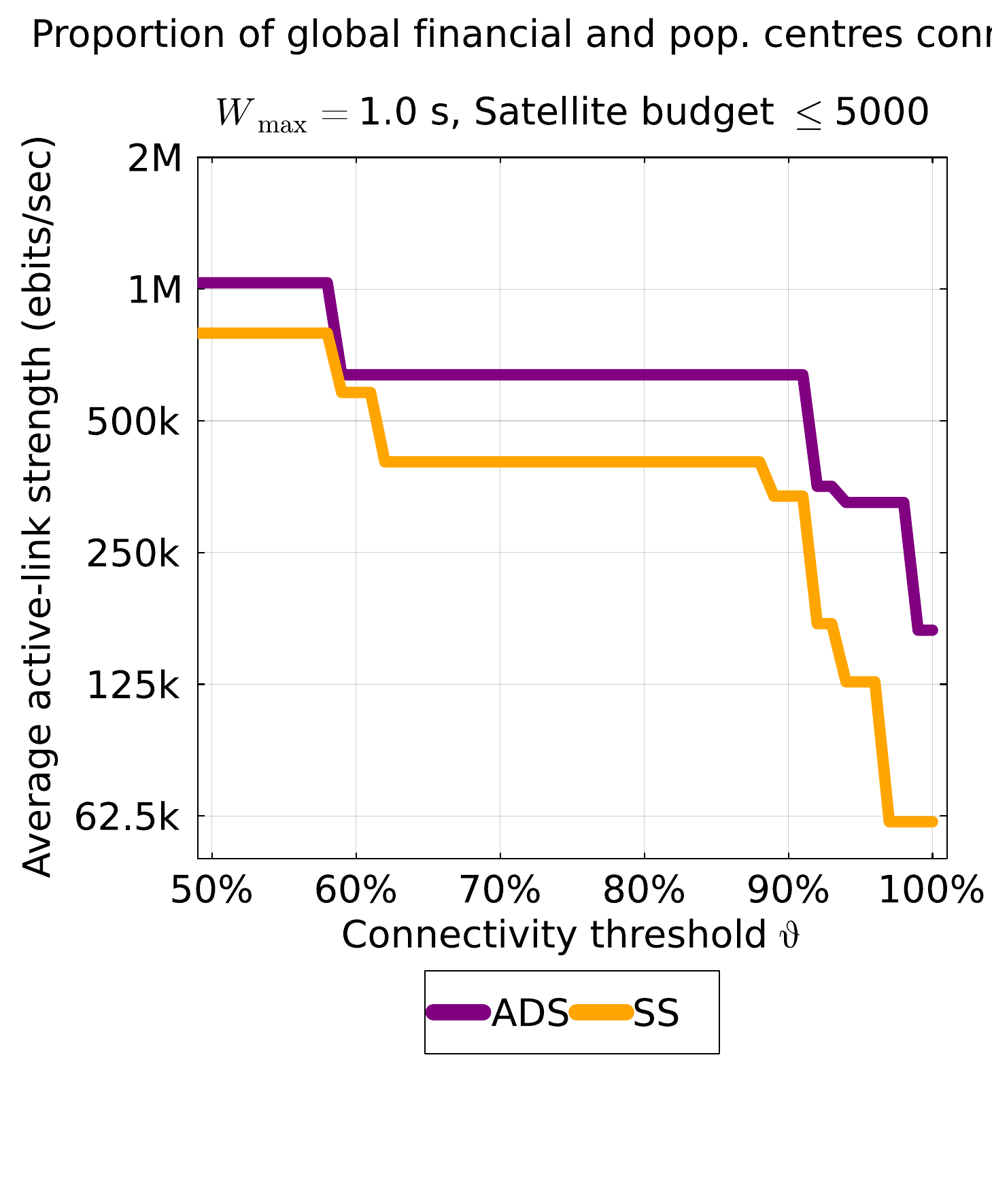}
    \end{subfigure}
    \begin{subfigure}{0.95\linewidth}
    \centering
                {Connectivity Threshold $\vartheta$ \par}
    \end{subfigure}
        \begin{subfigure}{0.95\linewidth}
    \centering
                \includegraphics[height = 1cm, trim = 1.9cm 0.7cm 0.3cm 26cm,
    clip]{figures/SS_ADS_legend.pdf}
    \end{subfigure}
    \caption{
    \textbf{Average \emph{latency-conditioned} active-link strength for augmented dual-shell (ADS) and single-shell (SS) constellation designs, evaluated at different concurrent traffic-matrix connectivity thresholds $\vartheta$.}
    The three panels condition the design search on total satellite budgets of 1250, 2500, and 5000 satellites, respectively.
    Purple curves denote ADS constellations and orange curves denote SS constellations.
    Only configurations with satellite service concurrency $T=7$ are included, ensuring that the comparison isolates constellation architecture rather than differences in per-satellite ground-station service capacity.
    For each threshold $\vartheta$, the plotted value is the average active-link strength among configurations that achieve the required fraction of simultaneously connected traffic-matrix pairs within the allowed waiting window.
    In the 1250-satellite regime, ADS and SS provide comparable active-link strength over the limited range of achievable concurrent-connectivity thresholds, indicating that the constellation remains primarily satellite-limited.
    As the satellite budget increases, ADS increasingly outperforms SS, especially at stringent thresholds, where the SS active-link strength drops sharply while ADS maintains a stronger concurrently connected backbone.
    Together with the preceding wait-time results, this indicates a transition in the role of the auxiliary ADS shell: once the primary shell is sufficiently populated to satisfy the concurrent-connectivity requirement, the additional ADS coverage diversity translates more effectively into higher end-to-end entanglement capacity rather than primarily reducing latency.
    }
    \label{fig:strength_vs_sats_city}
\end{figure}

In the above analysis we fix $W_{\max} = 1$ second which is a stringent coherence time requirement for ground station quantum memories. This is done considering near-term feasibility of our design and analysis. In Supplementary Material~\cite{supplementary_materials}, we also report results for $W_{\max} \in \{10~\rm s, \; 60~\rm s, \; 1~\rm hour, \; 4$~\rm hours\} cases, which represents a forward-looking regime in which long-lived quantum memories with coherence times on the order of hours or days are available. In this regime, the architectural advantages of the dual-shell constellation become more pronounced.

Overall, these results demonstrate that constellation structure is a first-order architectural parameter for satellite-serviced quantum backbones. For the shell pair evaluated here (a $53^\circ$ primary shell augmented by a $98^\circ$ near-polar shell), redistributing a fixed satellite budget across the two inclination shells reduces connectivity latency and improves latency-conditioned link availability at moderate to high thresholds, relative to the single-shell $53^\circ$ baseline. This suggests that inclination diversity is an effective lever for increasing global concurrency in this LEO design space.

\subsection{Design Insight 3: Satellites Should Support Multi-Party Connectivity}
\label{subsec:design_insight3}

We next examine how per-satellite servicing capability impacts backbone performance. Specifically, we compare two operational modes under identical ground-station layouts, orbital parameters, and physical-layer assumptions, but differing in how many ground-station pairs a satellite can support concurrently.

In the bi-partite connectivity (BPC) mode, each satellite is equipped with two optical terminals and can establish entanglement with at most one ground-station pair per time shard, corresponding to conventional point-to-point operation. In contrast, multi-party connectivity (MPC)\footnote{The term multi-party connectivity refers to simultaneous service of multiple ground stations, not to multipartite entanglement: all generated quantum links remain bipartite Bell pairs.} activates multiple links within a single footprint under the constrained hub--spoke--ring rule, allowing satellites equipped with $T>2$ terminals to serve multiple ground-station pairs concurrently. In our implementation, each satellite selects a single hub ground station and up to six nearest neighboring stations within its visibility footprint, with entanglement distribution restricted to nearest-neighbor connections among these stations. The parameter $T$ denotes the number of optical terminals per satellite and determines how many ground-station links can be activated concurrently within a footprint. Importantly, MPC does not alter orbital coverage or satellite trajectories; it increases only the number of concurrent bipartite links that can be supported by a satellite at a given time. As shown in Figure~\ref{fig:wait_vs_T}, increasing $T$ under a fixed total terminal budget sharply reduces the wait time needed to achieve full traffic-matrix connectivity, especially for anisotropic ground-station layouts.

\begin{figure}[!htbp]
\begin{subfigure}{0.95\linewidth}
    \centering
    \includegraphics[height = 6cm, trim = 0 5cm 0 2.6cm, clip]{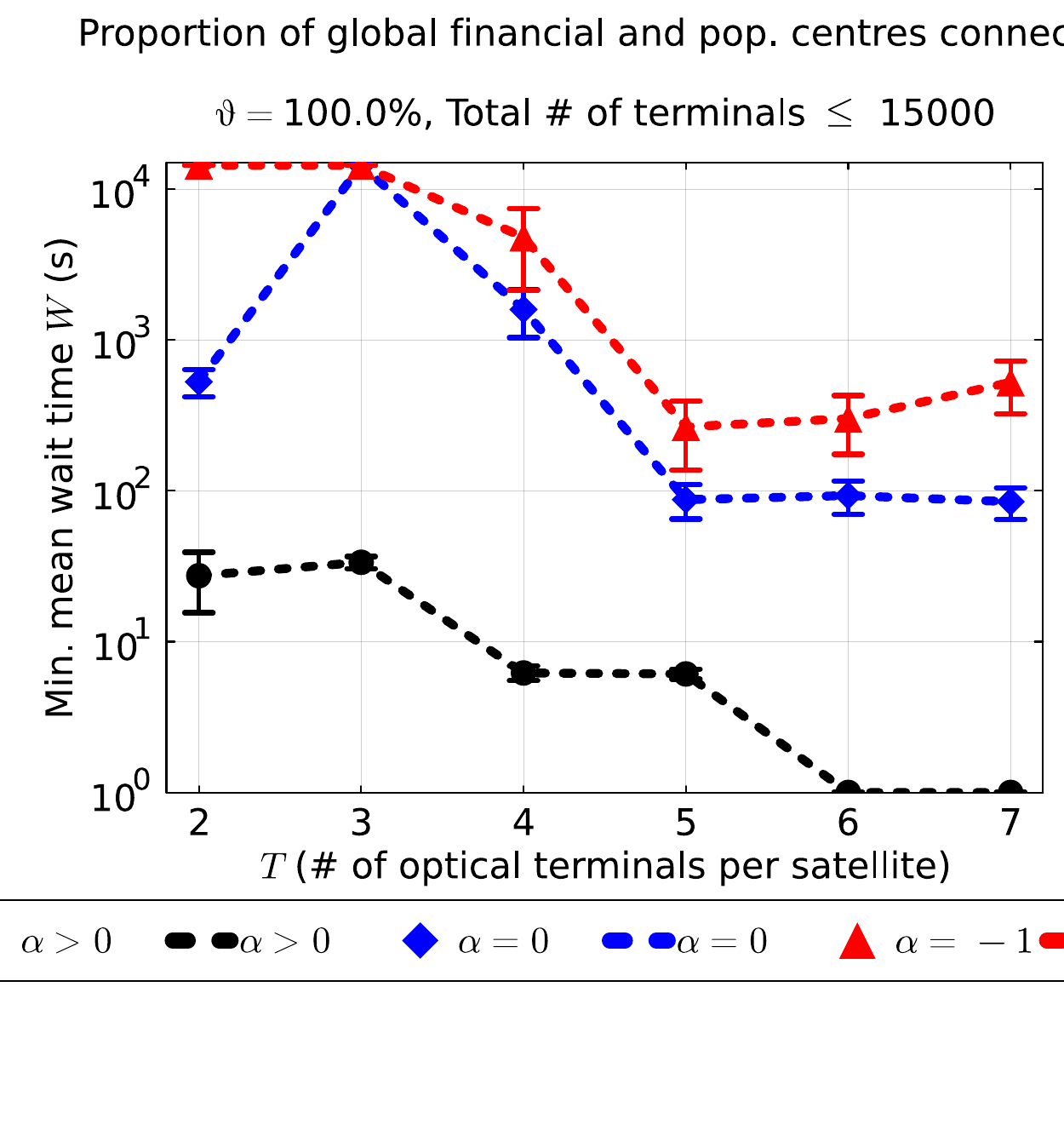}
    \end{subfigure}
                \begin{subfigure}
                {0.95\linewidth}
                \centering
        \includegraphics[width = 0.6\linewidth, trim = 6cm 0 7cm 10.8cm, clip]{figures/pf_0.1_k_7_wait_vs_sats_connectivity_panels_continousthreshold.pdf}
    \end{subfigure}
        \caption{
        \textbf{Impact of per-satellite GS service capability on global connectivity latency.}
        Minimum mean wait time $W$ to achieve full connectivity ($\vartheta = 100\%$) as a function of the number of optical terminals per satellite $T$, evaluated under a fixed total terminal budget ($N_\mathrm{sat} \times T \leq 15000)$.
        Curves correspond to different ground-station geometries: anisotropic ($\alpha > 0$, black), uniform ($\alpha = 0$, blue), and longitudinal ($\alpha = -1$, red). Error bars indicate $\pm 1$ standard error.
        Increasing $T$ generally reduces connectivity latency by enabling simultaneous service to multiple ground stations.
        The reduction is most pronounced for anisotropic layouts, which achieve near-instantaneous connectivity at moderate $T$, while uniform and longitudinal grids require higher $T$ and remain substantially slower.
        }
        \label{fig:wait_vs_T}
        \end{figure}

For the uniform ($\alpha=0$) and longitudinal ($\alpha=-1$) layouts, the improvement saturates beyond $T\simeq 5$, and the wait time remains far above the near-instantaneous regime even when additional terminals provide greater opportunity for concurrent edge activation. The wait time can also become mildly non-monotonic with $T$ in these layouts, suggesting that simply increasing the number of simultaneously serviceable links does not overcome the less favorable spatial structure of the ground-station graph. In contrast, the anisotropic layouts ($\alpha>0$) exploit the additional per-satellite connectivity much more effectively: the wait time drops to the minimum observable value for $T\geq 6$. This behavior is consistent with the interpretation that anisotropic ground-station placement has more favorable finite-size percolation properties, so that increasing the effective graph degree more readily drives the traffic matrix into a connected regime~\cite{percolation_review_bethe}.

\paragraph{Fixed satellite budget comparison.}
Figure~\ref{fig:wait_vs_sats_city_multitelescope} compares connectivity latency under bi-partite connectivity (BPC) and multi-party connectivity (MPC) as a function of satellite budget for anisotropic ground-station layouts ($\alpha > 0$). Supplementary Material~\cite{supplementary_materials} also reports similar comparison for uniform and longitudinal grids. For a fixed number of satellites, MPC reduces the wait time required to achieve a given traffic-matrix connectivity threshold drastically. This implies that in scenarios where per-satellite costs are the primary driver of resource constraints, MPC outperforms BPC by several orders of magnitude.
The gap between MPC and BPC widens as the required fraction of simultaneously connected cities increases, consistent with a per-satellite concurrency bottleneck in the bi-partite mode.

The difference stems from how each policy converts instantaneous visibility into edges. Under BPC operation, even when a satellite is simultaneously visible to many ground stations, it activates at most one ground-station pair per time shard; additional visible stations therefore do not translate into additional edges at that instant. MPC relaxes this constraint by activating multiple links within a single footprint using the constrained hub--spoke--ring service rule. As a consequence, MPC accelerates large-component formation, with the advantage becoming most pronounced in the high-threshold regime where many concurrent edges are needed to connect most traffic-matrix nodes. Increasing satellite count under BPC continues to reduce wait time, but requires substantially larger budgets in the high-threshold regime because each satellite can activate only one ground-station pair per time shard. As satellite density increases, additional satellites increasingly generate redundant edges (e.g., on pairs already served within the same shard, for which only the strongest link is retained). 

We sweep the full range of feasible servicing capabilities $T \in \{3,4,\dots,7\}$ and report the envelope of these configurations when referring to MPC performance.

\begin{figure}[!htbp]
    \centering
    \begin{subfigure}[t]{0.24\linewidth}
        \begin{overpic}[height=3.9cm,trim=0cm 6.9cm 0cm 1.6cm,clip]
        {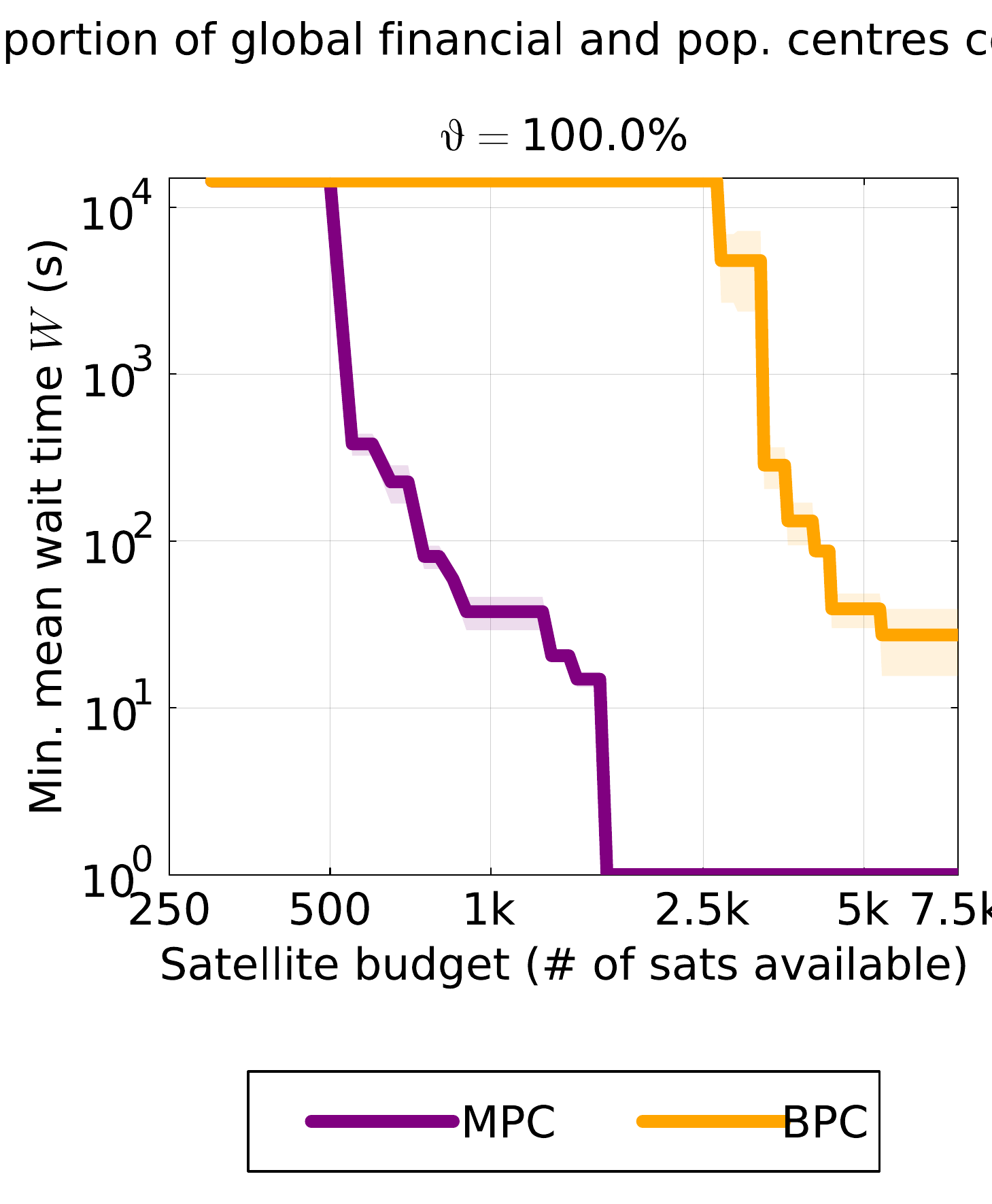}
            \put(5,80){\small\textbf{(a)}}
            \put(25, -7){\small Satellite Budget }
        \end{overpic}
    \end{subfigure}
    \hfill
    \begin{subfigure}[t]{0.72\linewidth}
        \begin{subfigure}[t]{0.31\linewidth}
            \begin{overpic}[height=3.9cm,trim=2cm 6.9cm 0cm 1.6cm,clip]
            {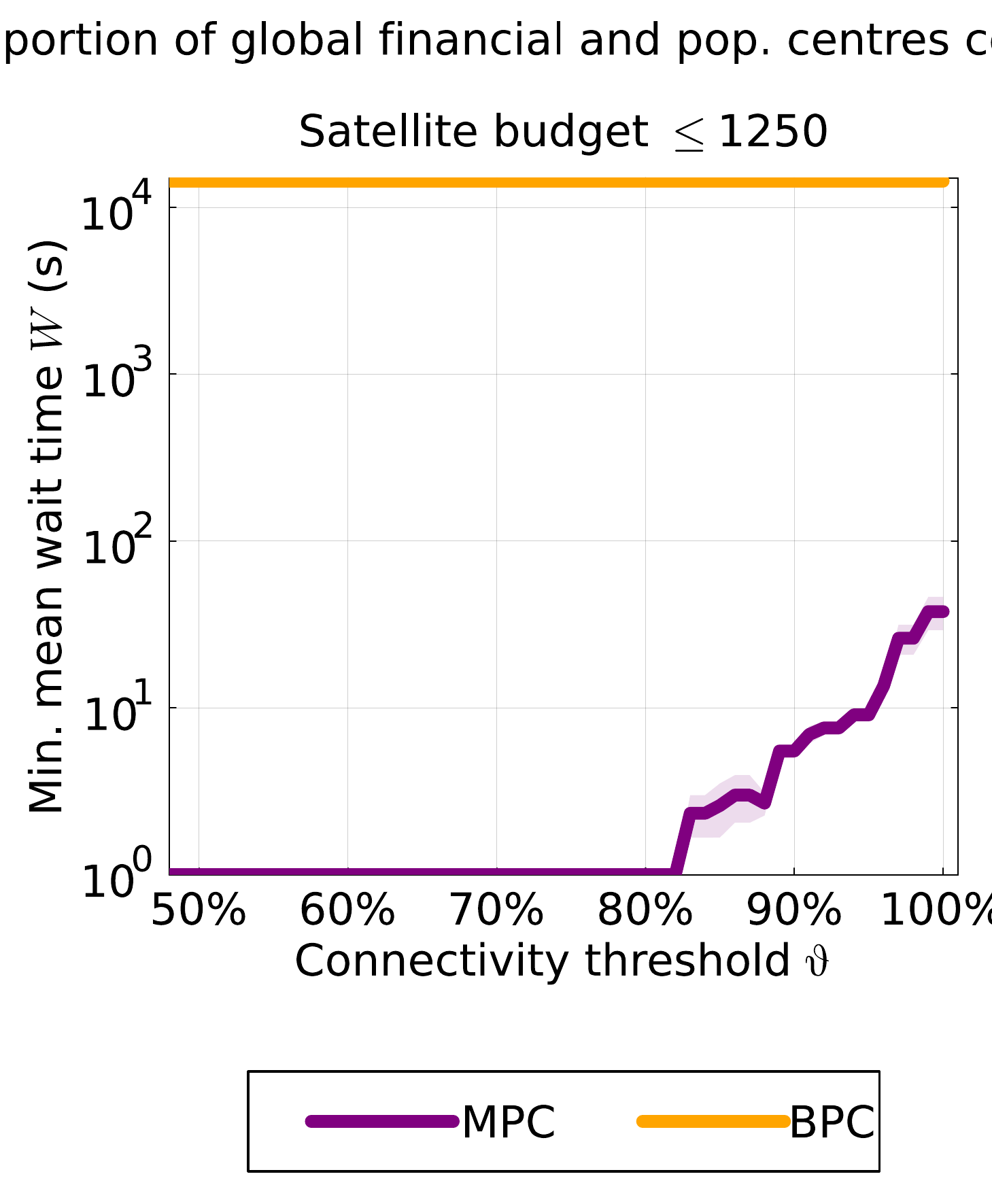}
                \put(5,85){\small\textbf{(b)}}
            \end{overpic}
        \end{subfigure}
        \hfill
        \begin{subfigure}[t]{0.31\linewidth}
            \begin{overpic}[height=3.9cm,trim=2cm 6.9cm 0cm 1.6cm,clip]
            {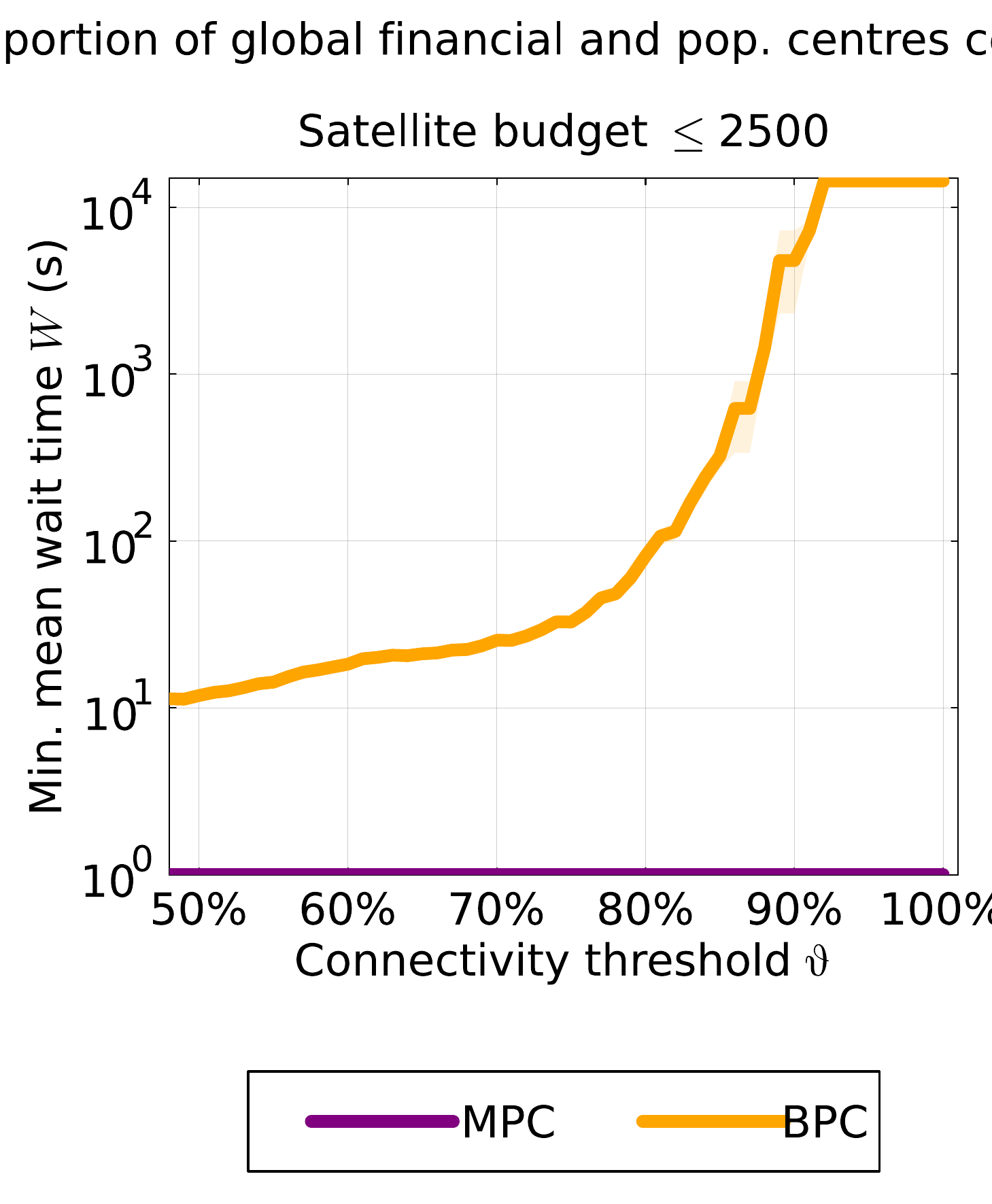}
                                \put(5,85){\small\textbf{(c)}}
                \put(15, -7){\small Connectivity Threshold $\vartheta$ }
            \end{overpic}
        \end{subfigure}
        \hfill
        \begin{subfigure}[t]{0.31\linewidth}
            \begin{overpic}[height=3.9cm,trim=2cm 6.9cm 0cm 1.6cm,clip]
            {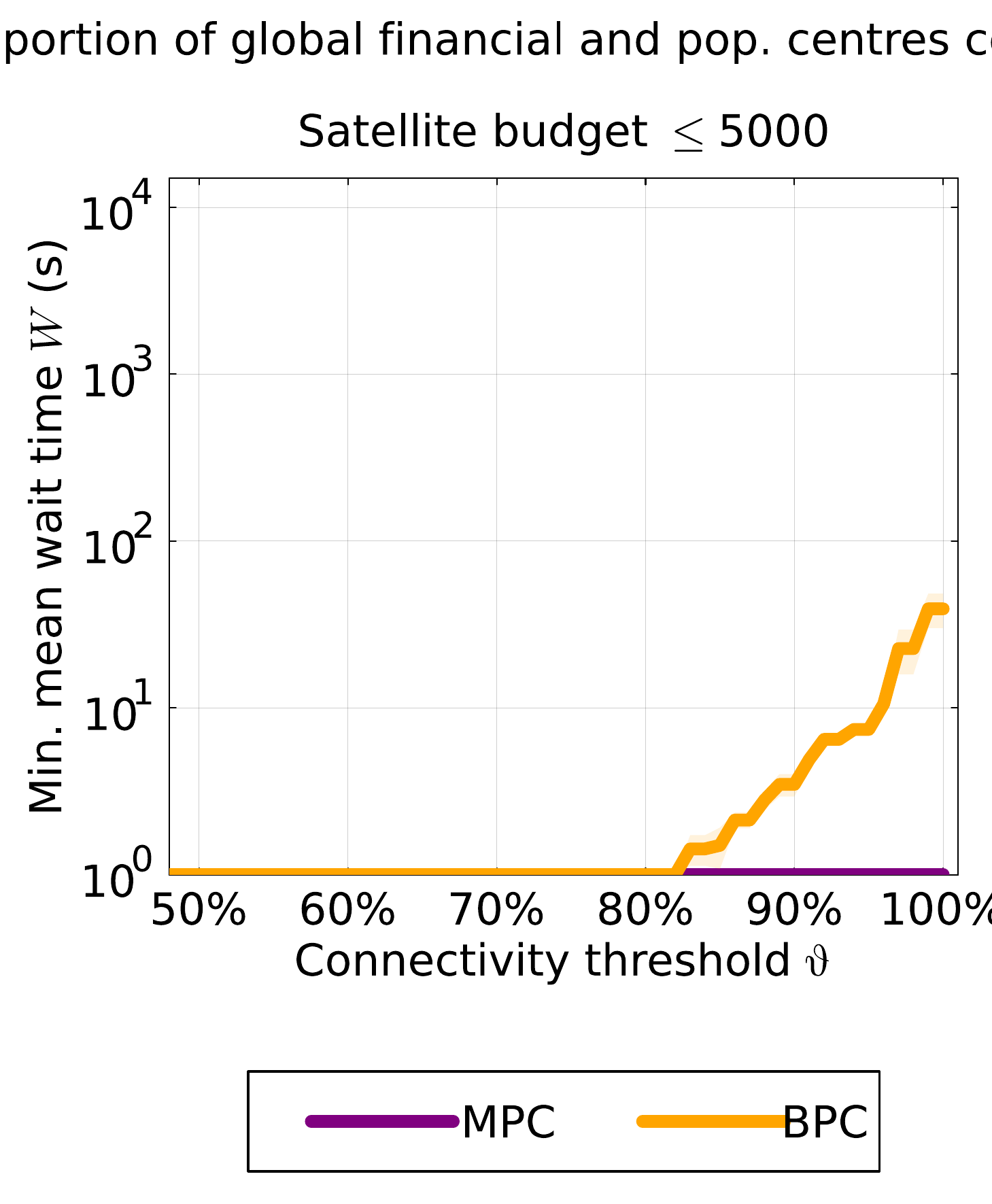}
                \put(5,85){\small\textbf{(d)}}
            \end{overpic}
        \end{subfigure}
        \vspace{2em}
    \end{subfigure}

    \begin{subfigure}{0.95\linewidth}
        \centering
        \includegraphics[height = 0.8cm, trim = 1.9cm 0.5cm 0.3cm 27cm, clip]
        {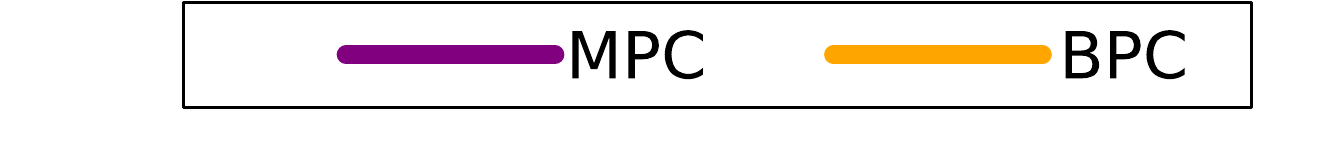}
    \end{subfigure}

   \caption{
    \textbf{Mean wait time to achieve global connectivity under bi-partite (BPC) and multi-party (MPC) satellite operation.}
    (a) Minimum mean wait time $W$ required to achieve full connectivity ($\vartheta = 100\%$) as a function of satellite budget.
    (b)--(d) Mean wait time as a function of the connectivity threshold $\vartheta$ for fixed satellite budgets of 1250, 2500, and 5000 satellites, respectively.
    Curves compare bi-partite connectivity (BPC, $T=2$) and multi-party connectivity (MPC, $T \in \{3,\dots,7\}$), where MPC enables each satellite to simultaneously serve a local hub--spoke--ring neighborhood.
    Across all regimes, MPC consistently achieves lower wait times and sustains connectivity at more stringent thresholds.
    In contrast, BPC fails to achieve high connectivity within the allowed time window at low satellite budgets, exhibiting effectively infinite wait times at large $\vartheta$.
    MPC reaches near-instantaneous connectivity at substantially lower satellite budgets, whereas BPC requires significantly larger constellations to achieve $\vartheta \gtrsim 80\%$.
    }
\label{fig:wait_vs_sats_city_multitelescope}
   
\end{figure}

\paragraph{Terminal-budget-normalized comparison.}
Because MPC corresponds to a family of servicing capabilities with $T \in \{3,\dots,7\}$ terminals per satellite, satellite-count comparisons alone conflate service policy with payload capability. We therefore also normalize results by the total number of optical terminals in orbit, $N_{\mathrm{term}} = N_{\mathrm{sat}} \times T$, where $T$ denotes terminals per satellite. This normalization provides a first-order hardware-budget proxy, while acknowledging that terminal SWaP and pointing complexity need not scale linearly with $T$.

Figure~\ref{fig:wait_vs_terminals_city_multitelescope} shows that MPC maintains a latency advantage at high connectivity thresholds even when plotted against total terminal budget, indicating that the gains are not explained solely by increased terminal count but also by increased degree of the graph formed on the ground station grid. Figure~\ref{fig:strength_vs_terminals_city_multitelescope} shows that the latency advantage of MPC is not obtained by sacrificing active-link strength. Under fixed terminal budgets, MPC sustains higher average active-link strength across all connectivity thresholds, while BPC degrades rapidly as the required connectivity threshold $\vartheta$ becomes more stringent. This behavior indicates that BPC is limited not only by slower edge accumulation, but also by its inability to maintain strong concurrently active links when many traffic-matrix pairs must be connected. MPC converts the same terminal budget into a denser set of simultaneous footprint-level edges, allowing it to preserve link strength while also reaching high-threshold connectivity more rapidly.

Overall, per-satellite servicing capability emerges as a first-order architectural parameter for satellite-serviced quantum backbones. Bi-partite operation can be adequate for low-threshold connectivity or sparse demand, but it imposes a hard concurrency limit that becomes dominant when near-global connectivity is required. Multi-party connectivity relaxes this limitation by converting simultaneous visibility into simultaneous links, enabling high-threshold traffic-matrix connectivity at lower latency under both satellite-budget and terminal-budget normalizations, albeit with a measurable link-strength trade-off in resource-limited regimes.

\begin{figure}[!htbp]
    \centering
    \begin{subfigure}[t]{0.24\linewidth}
    \phantomcaption

        \begin{overpic}[height=3.9cm,trim=0cm 7.8cm 0cm 1.6cm,clip]
        {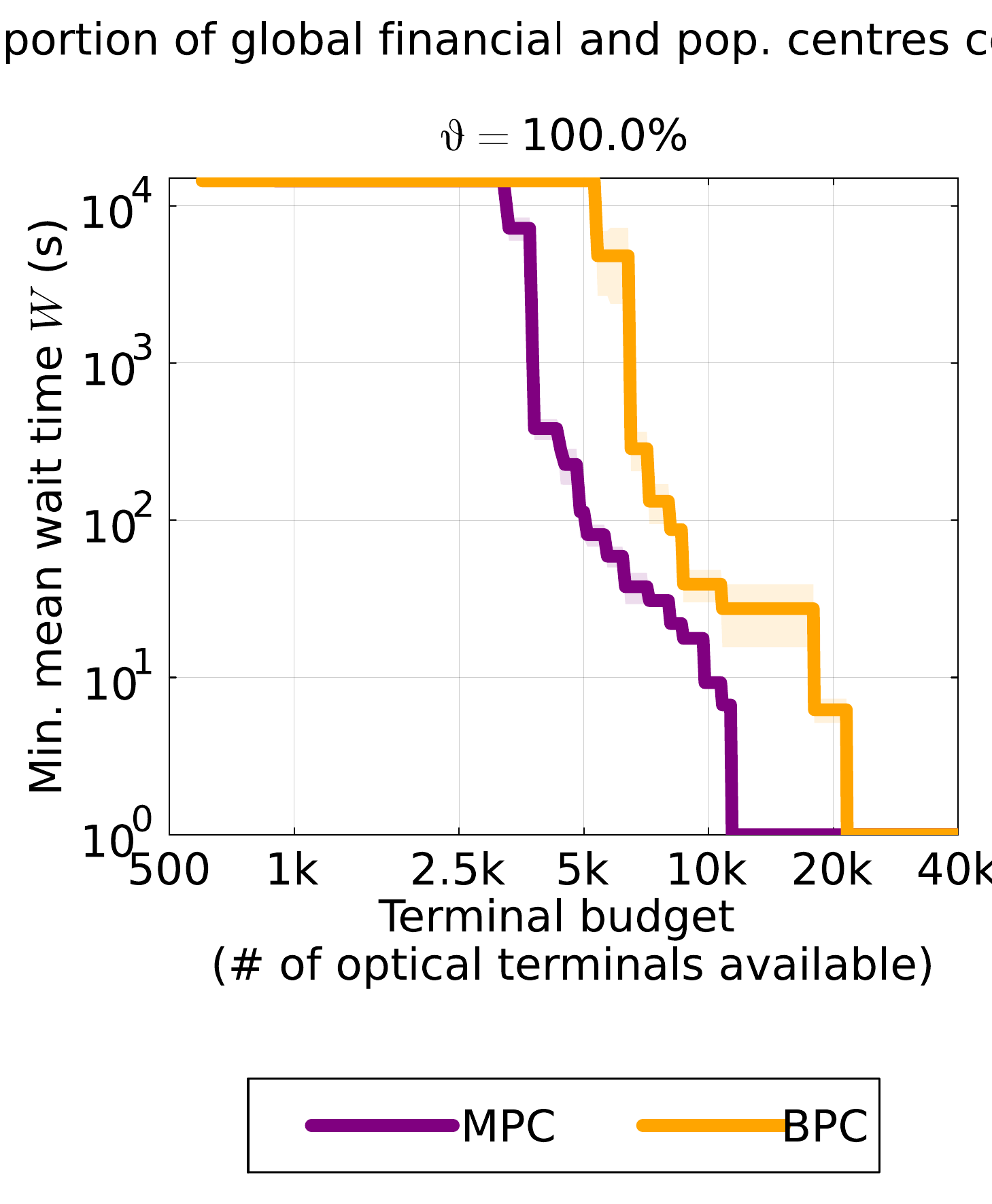}
            \put(5,75){\small\textbf{(a)}}
            \put(20, -7){\small Max. \# of Terminals }
            \put(30, -15){\small ($N_\mathrm{{sat}} \times T$)}
        \end{overpic}

    \end{subfigure}
    \hfill
    \begin{subfigure}[t]{0.72\linewidth}
        \begin{subfigure}[t]{0.31\linewidth}
            \begin{overpic}[height=3.9cm,trim=2cm 6.9cm 0cm 1.6cm,clip]
            {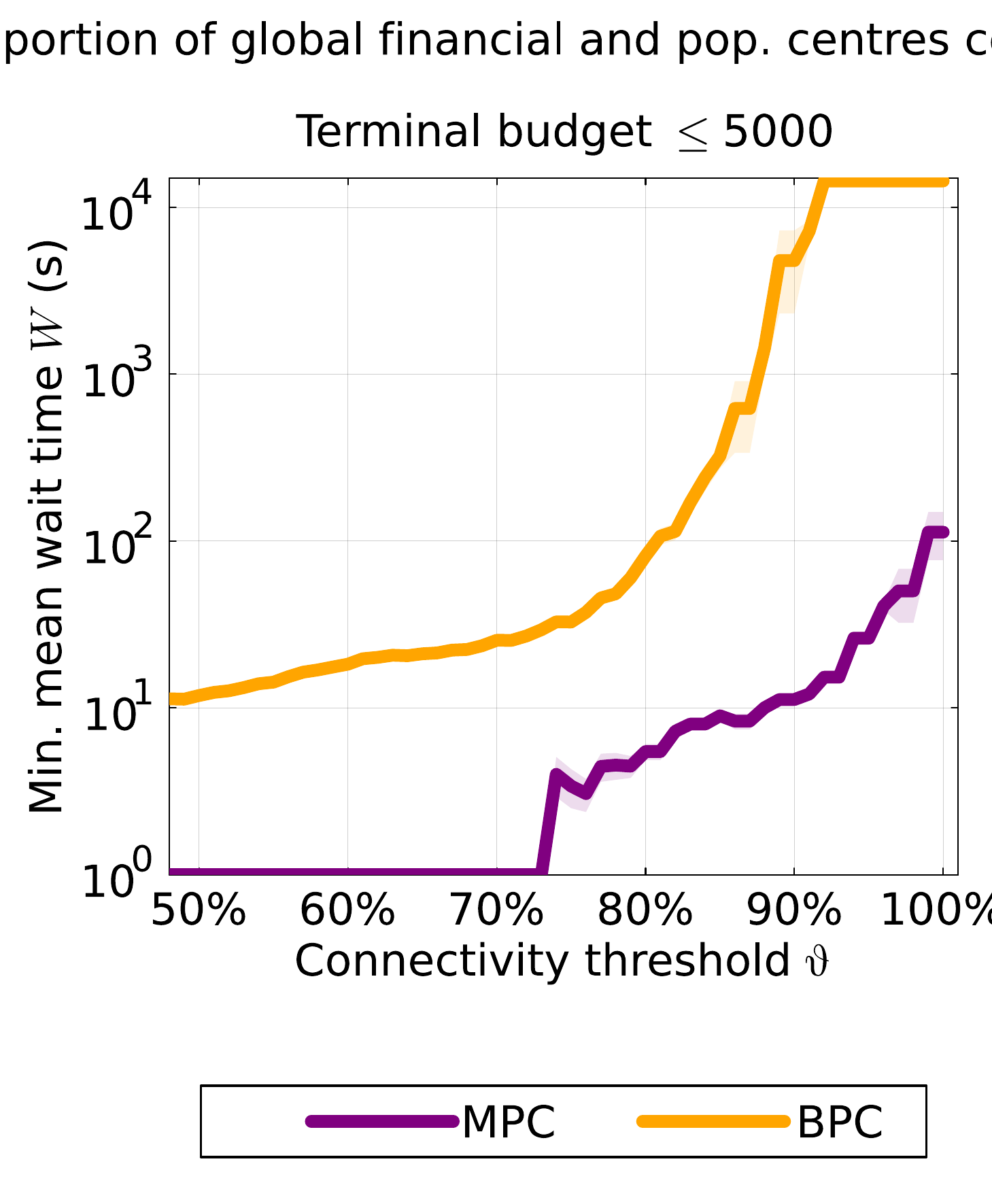}
                \put(5,85){\small\textbf{(b)}}
            \end{overpic}
        \end{subfigure}
        \hfill
        \begin{subfigure}[t]{0.31\linewidth}
            \begin{overpic}[height=3.9cm,trim=2cm 6.9cm 0cm 1.6cm,clip]
            {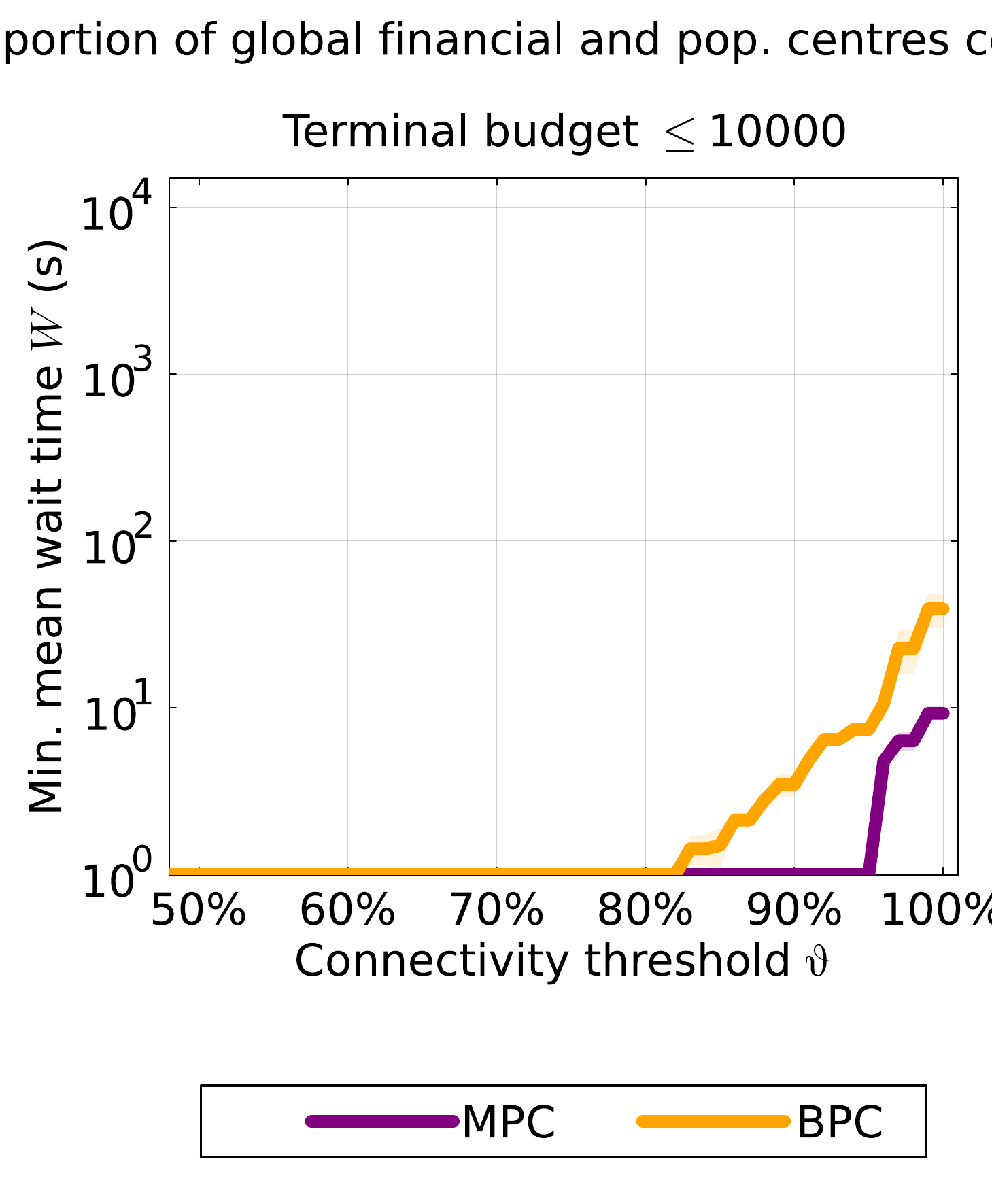}
                \put(5,85){\small\textbf{(c)}}
                \put(15, -7){\small Connectivity Threshold $\vartheta$ }
            \end{overpic}
        \end{subfigure}
        \hfill
        \begin{subfigure}[t]{0.31\linewidth}
            \begin{overpic}[height=3.9cm,trim=2cm 6.9cm 0cm 1.6cm,clip]
            {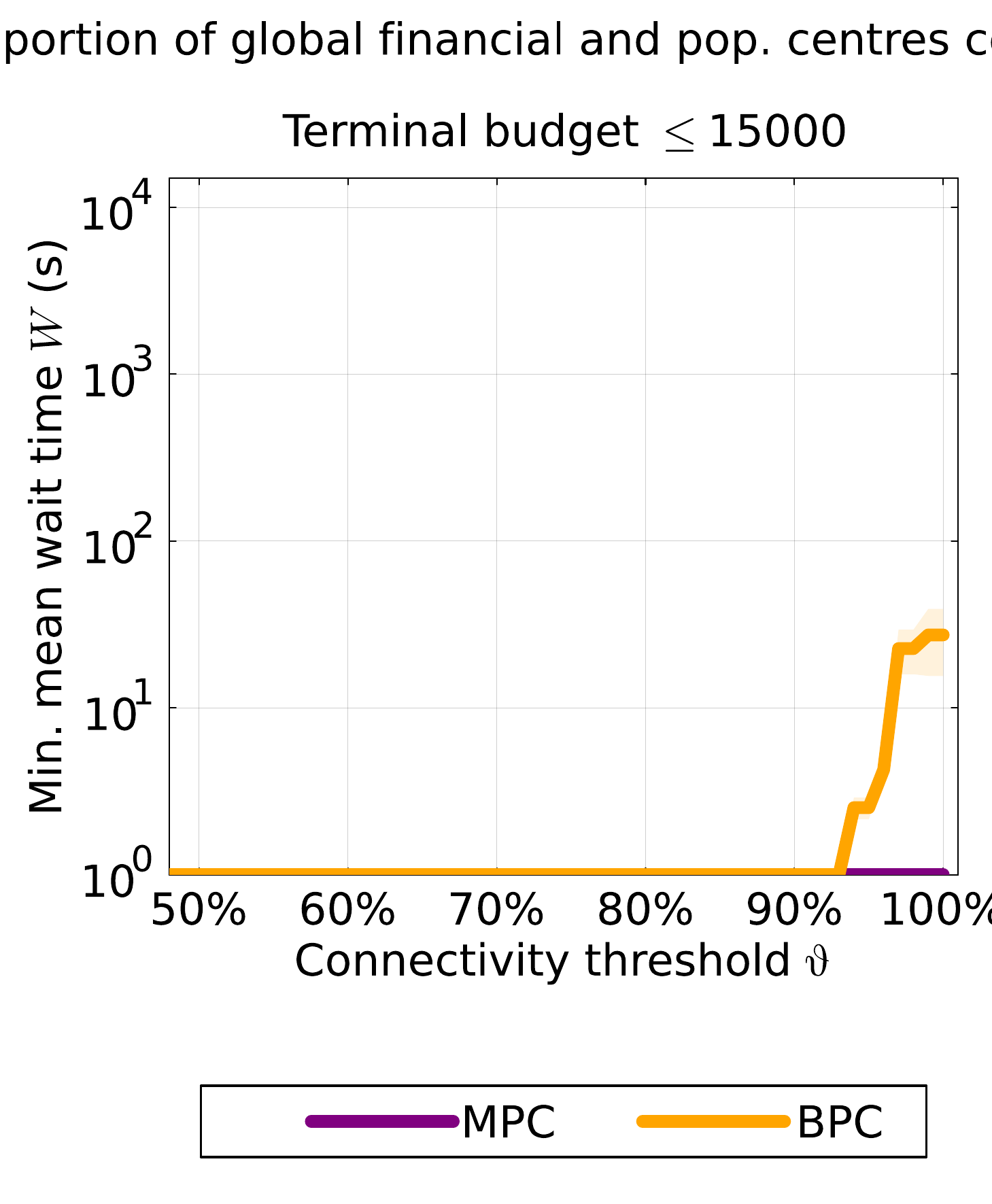}
                \put(5,85){\small\textbf{(d)}}
            \end{overpic}
        \end{subfigure}
        \vspace{2em}
    \end{subfigure}

    \begin{subfigure}{0.95\linewidth}
        \centering
        \includegraphics[height = 0.8cm, trim = 1.9cm 0.5cm 0.3cm 27cm, clip]
        {figures/BPC_MPC_legend.pdf}
    \end{subfigure}
    \caption{
        \textbf{Mean wait time to achieve global connectivity at fixed total terminal budget.}
        (a) Minimum mean wait time $W$ required to achieve full connectivity ($\vartheta = 100\%$) as a function of the total number of optical terminals $N_{\mathrm{term}} = N_{\mathrm{sat}}\times T$.
        (b)--(d) Mean wait time as a function of the connectivity threshold $\vartheta$ for fixed terminal budgets of 5000, 10000, and 15000 terminals, respectively.
        Curves compare multi-party connectivity (MPC, purple) and bi-partite connectivity (BPC, orange). Shaded bands indicate $\pm 1$ standard error.
        Under equal total terminal budgets, MPC achieves substantially lower wait times and remains feasible at high connectivity thresholds.
        In contrast, BPC exhibits rapidly increasing---and often effectively unbounded---wait times at stringent $\vartheta$, requiring significantly larger number of terminals to approach full connectivity.
        }
       \label{fig:wait_vs_terminals_city_multitelescope}
\end{figure}

\begin{figure}[!htb]
    \centering
    \begin{subfigure}{0.33\linewidth}
        \centering
        \begin{overpic}[height=4cm,trim=0 5.8cm 0.3cm 3.3cm,clip]
        {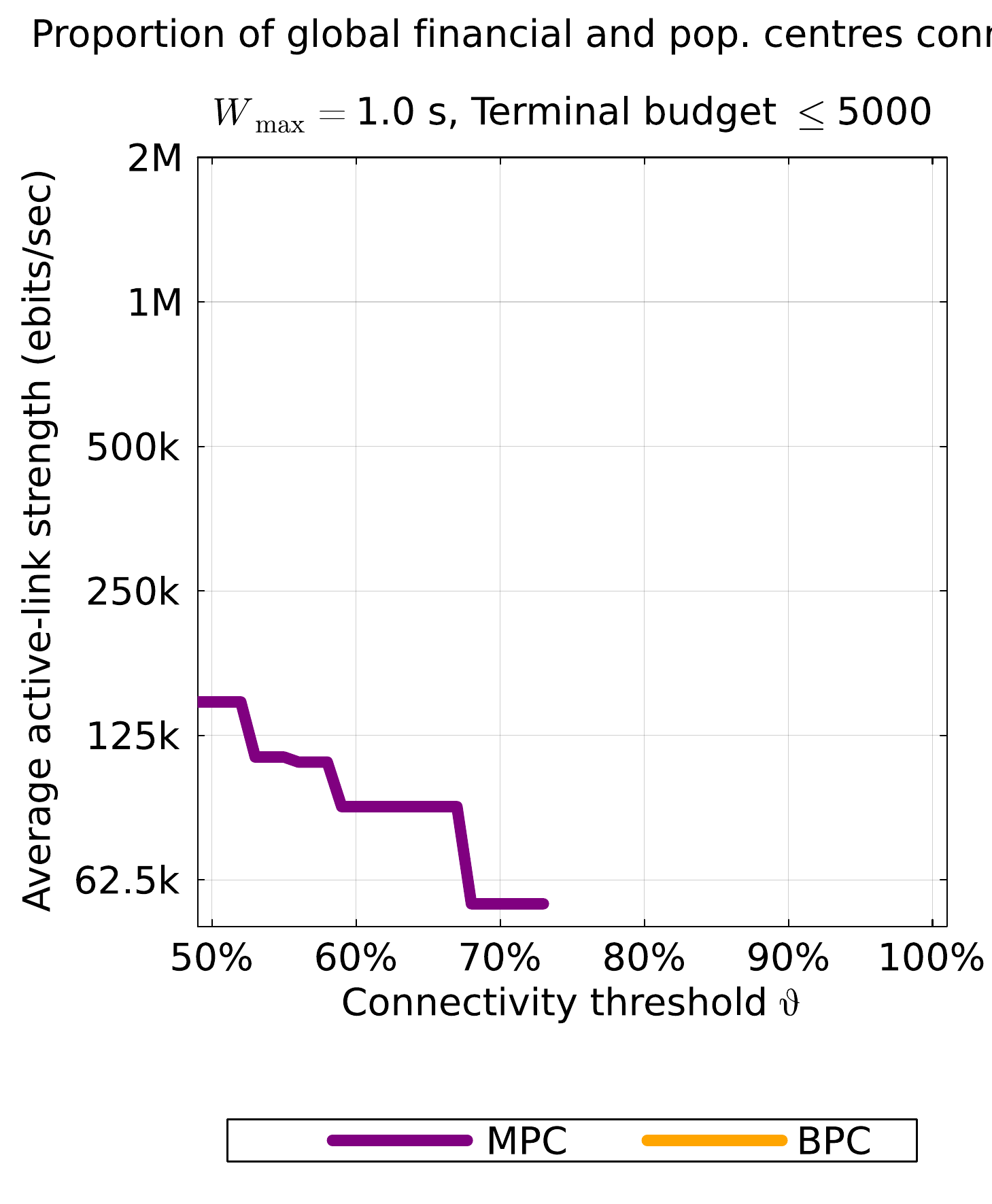}
            \put(10,88){\small\textbf{(a)}}
            \put(23,88){\small\centering Terminal Budget $\leq 5000$}
        \end{overpic}
    \end{subfigure}
    % \hfill
    \begin{subfigure}{0.30\linewidth}
        \centering
        \begin{overpic}[height=4cm,trim=1.9cm 5.8cm 0.3cm 3.3cm,clip]
        {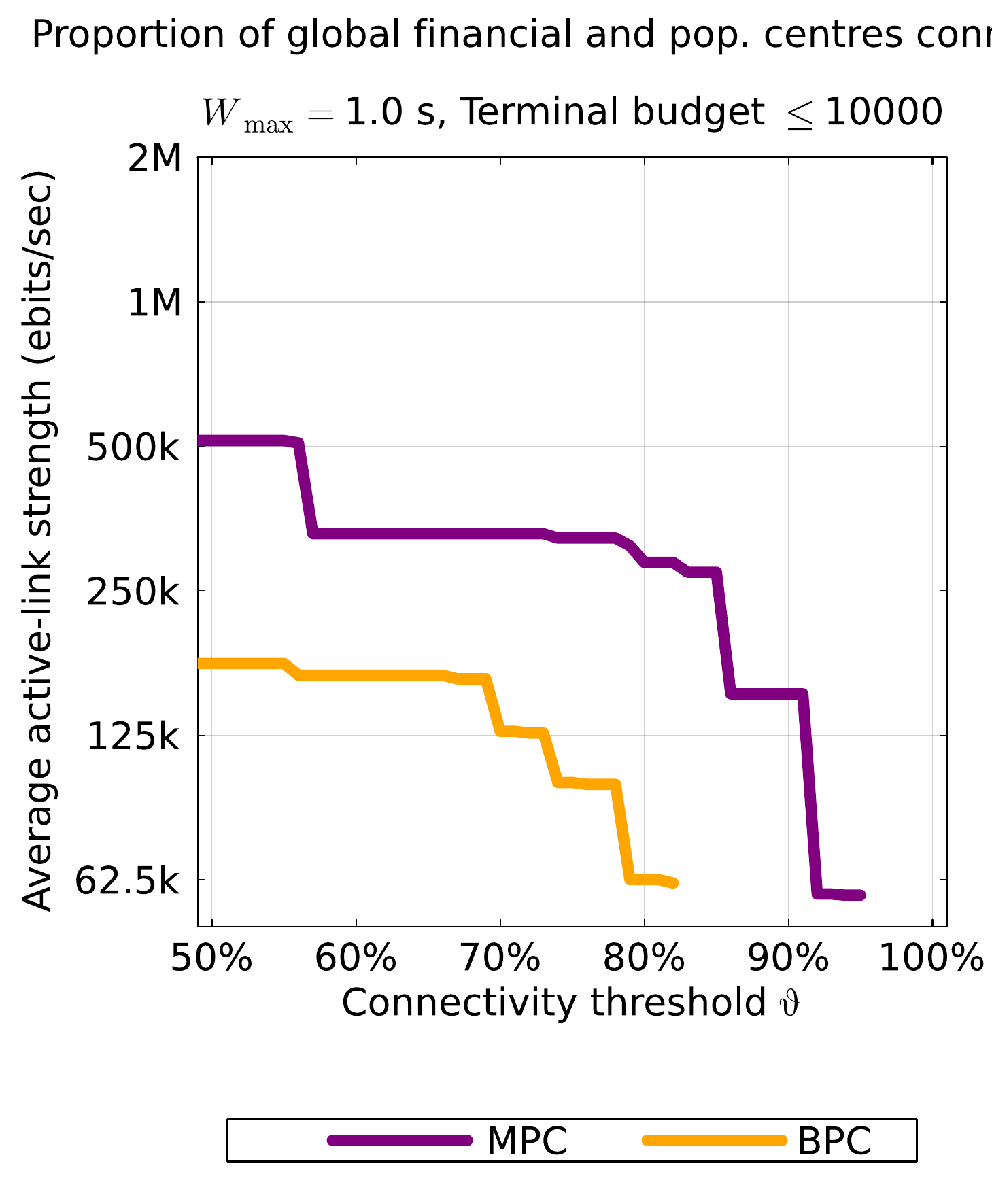}
            \put(5,95){\small\textbf{(b)}}
            \put(18,95){\small\centering Terminal Budget $\leq 10000$}
        \end{overpic}
    \end{subfigure}
    % \hfill
    \begin{subfigure}{0.30\linewidth}
        \centering
        \begin{overpic}[height=4cm,trim=1.9cm 5.8cm 0.3cm 3.3cm,clip]
        {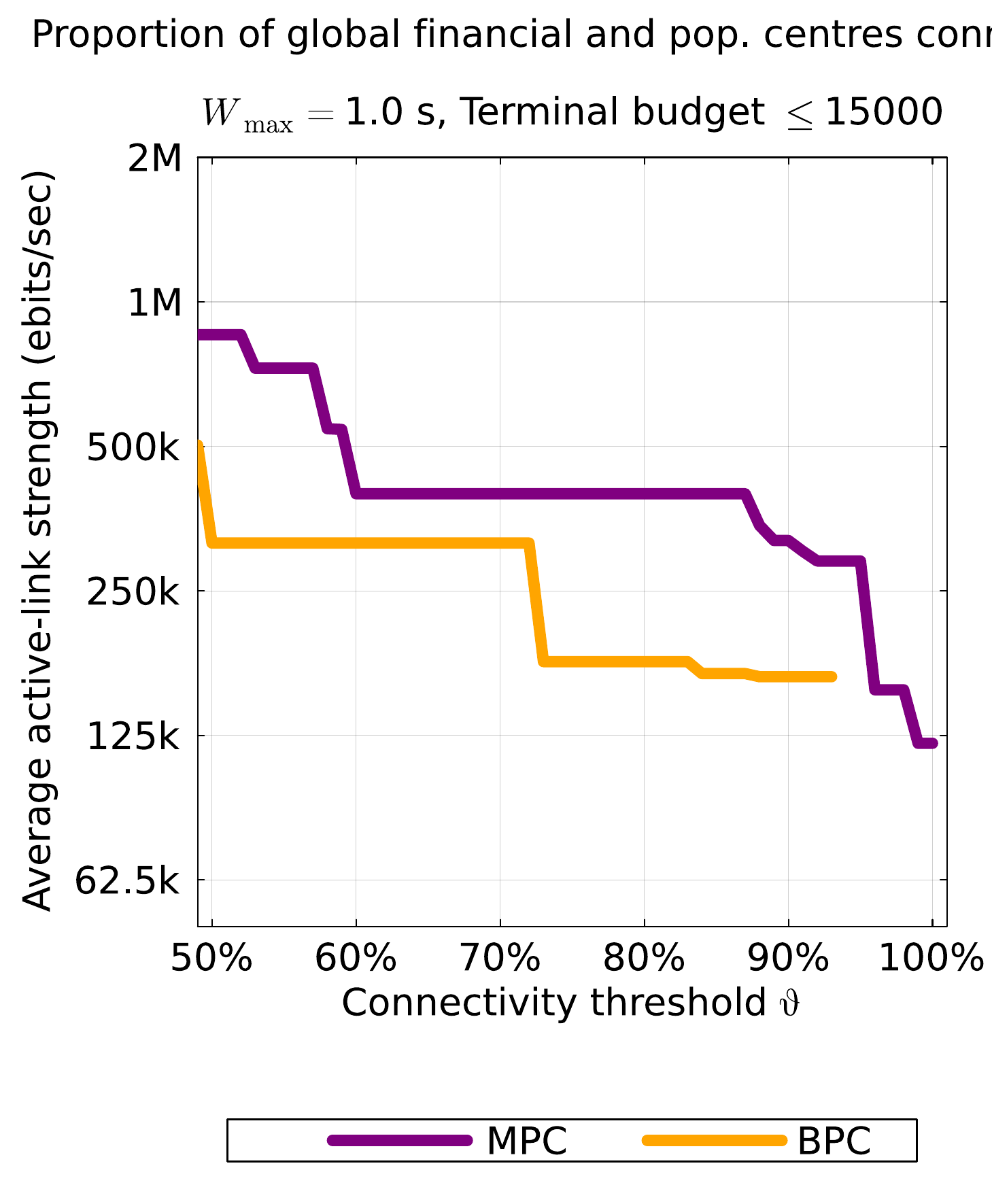}
            \put(5,95){\small\textbf{(c)}}
            \put(18,95){\small\centering Terminal Budget $\leq 15000$}
        \end{overpic}
    \end{subfigure}

    \begin{subfigure}{0.95\linewidth}
        \centering
        {\small Connectivity Threshold $\vartheta$ \par}
    \end{subfigure}

    \begin{subfigure}{0.95\linewidth}
        \centering
        \includegraphics[height = 0.8cm, trim = 1.9cm 0.5cm 0.3cm 27cm, clip]
        {figures/BPC_MPC_legend.pdf}
    \end{subfigure}
    \caption{
    \textbf{Latency-conditioned average active-link strength under bi-partite (BPC) and multi-party (MPC) connectivity.}
    Average active-link strength as a function of the connectivity threshold $\vartheta$ for fixed terminal budgets of 5000, 10000, and 15000, shown in (a)--(c), respectively.
    Curves compare multi-party connectivity (MPC, purple) and bi-partite connectivity (BPC, orange).
    At stringent connectivity thresholds, BPC degrades rapidly and often ceases to provide feasible configurations within the allowed waiting window.
    MPC remains viable over a broader range of $\vartheta$, indicating that multi-party servicing converts terminal resources into concurrent backbone connectivity more effectively.
    As the terminal budget increases, both policies improve, but MPC sustains usable active-link strength at higher connectivity thresholds.
    }
    \label{fig:strength_vs_terminals_city_multitelescope}
\end{figure}

\begin{table}[!htb]
\centering
\caption{Qualitative comparison to representative satellite-assisted quantum network studies. Metrics and operating regimes differ; we therefore compare modeling assumptions and the closest corresponding baseline in our framework.}
\begin{tabular}{llll}
\toprule
\textbf{Work} & \textbf{Orbit regime} & \textbf{GS placement} & \textbf{Primary objective} \\
\midrule
Khatri \emph{et al.}~\cite{khatri_spooky_2021} & LEO (global) & equal-angular lat--lon & end-to-end \\&&&entanglement distribution \\
Shao \emph{et al.}~\cite{shao_hybrid_2025} & MEO/HEO (regional) & uniform square grid  & rate--fidelity trade-offs \\ &&(continental)& (hybrid fiber/sat) \\ 
Present work & LEO (global) & $\alpha\in\{-1,0,>0\}$ lattice family & concurrent traffic-matrix \\ &&& connectivity under waiting \\&&&  constraints \\
\bottomrule
\end{tabular}

\label{tab:prior_work_compare}
\end{table}

\subsection{Secondary Architectural Insights}
\label{subsec:secondary_contributions}

Beyond the three primary architectural insights---anisotropic ground-station placement, augmented dual-shell constellations, and multi-party satellite connectivity---we identify several physical and constellation-level tuning parameters that shape performance within a fixed architectural class. These parameters do not alter the qualitative ordering established by the primary architectural principles, but they determine the latency--link-strength trade-offs available to a given architecture.

\vspace{-1em}
\paragraph{Satellite altitude.}
Satellite altitude has a pronounced impact on both connectivity latency and average link strength across all connectivity thresholds and constellation sizes considered. For both representative orbital plane counts shown (120 and 240 planes; see Section~\ref{subsec:planes_RAAN} for definitions), higher-altitude constellations consistently achieve lower mean wait times for a fixed satellite budget (Fig.~\ref{fig:altitude_waittime}). This ordering is preserved as the number of planes is increased, indicating that altitude-driven effects dominate over longitudinal densification (Fig.~\ref{fig:planes_strength}a). The improvement arises from the larger visibility footprints and longer contact durations associated with higher-altitude satellites, which enable simultaneous access to more ground stations and accelerate the growth of large connected components. However, this reduction in wait times is accompanied by lower average link strengths due to increased channel losses (Fig.~\ref{fig:planes_strength}b). 

These results indicate that satellite altitude acts as a dominant physical design lever within the regimes studied here, jointly shaping both connectivity latency and link quality, while plane count provides a secondary, geometry-driven refinement.

\vspace{-1em}
\paragraph{Number of orbital planes.}
For a fixed satellite altitude increasing the number of orbital planes has virtually no effect on average active link strength, which is primarily governed by altitude-dependent channel loss (Fig.~\ref{fig:planes_strength}b). This is accompanied by a reduction in the mean wait times, driven by an increase in the number of satellites in the constellation. However, this reduction in wait time saturates beyond a moderate number of planes once full longitudinal visibility is achieved. This efficient utilization of resources also indicates the percolation benefits of an anisotropic grid complemented through an MPC service.

\begin{figure}[!htbp]
\vspace{2em}
    \centering
        \begin{subfigure}{0.48\linewidth}
        \begin{subfigure}{0.48\linewidth}
            \centering 
            \begin{overpic}[height = 4cm, trim = 0cm 1.3cm 26cm 2.4cm, clip]{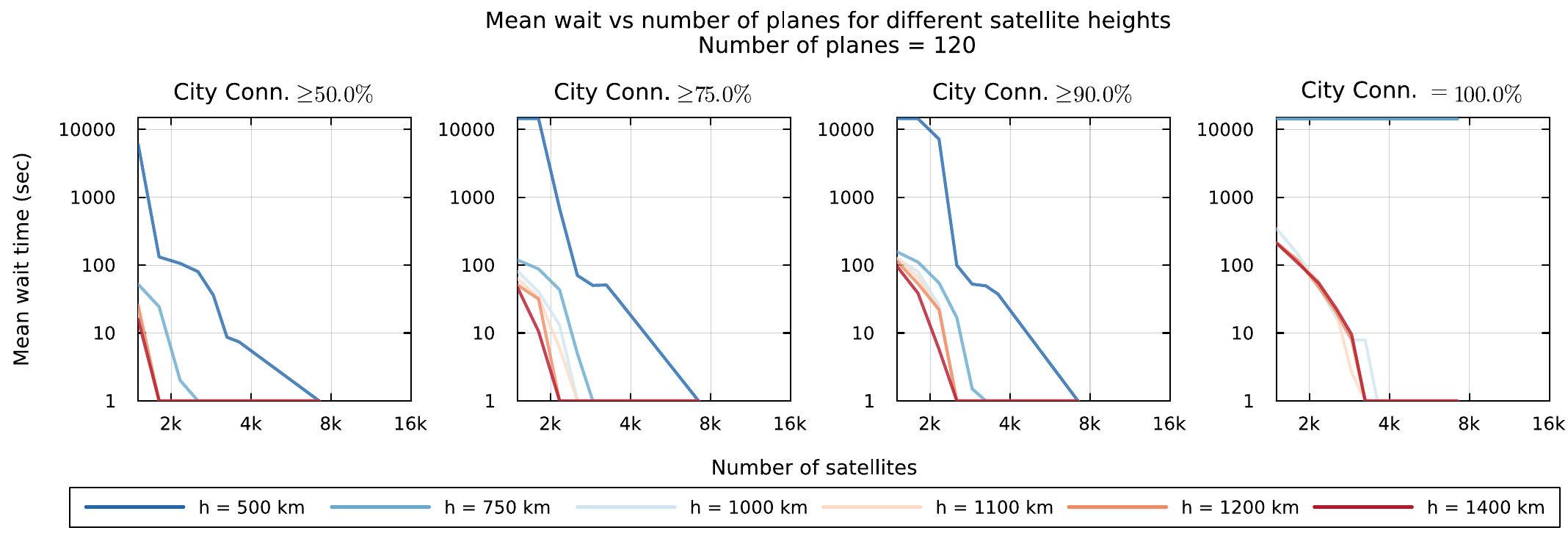}
            \put(50,86){\small  $\vartheta \geq 50\%$}
            \put(60,96){\small \textbf{(a)} \# of orbital planes = 120}
        \end{overpic}
        \end{subfigure}
    \hspace{1em}
        \begin{subfigure}{0.45\linewidth}
            \centering 
            \begin{overpic}[height = 4cm, trim = 27.5cm 1.3cm 0 2.4cm, clip]{figures/k_7_pf_0.1_wait_vs_planes_num_planes_120_panels.pdf}
            \put(45, 101){\small $\vartheta = 100\%$}
        \end{overpic}
        \end{subfigure}
    \end{subfigure}
    \hspace{1em}
    \begin{subfigure}{0.45\linewidth}
        \begin{subfigure}{0.45\linewidth}
            \centering
            \begin{overpic}[height = 4cm, trim = 1.2cm 1.3cm 26cm 2.4cm, clip]{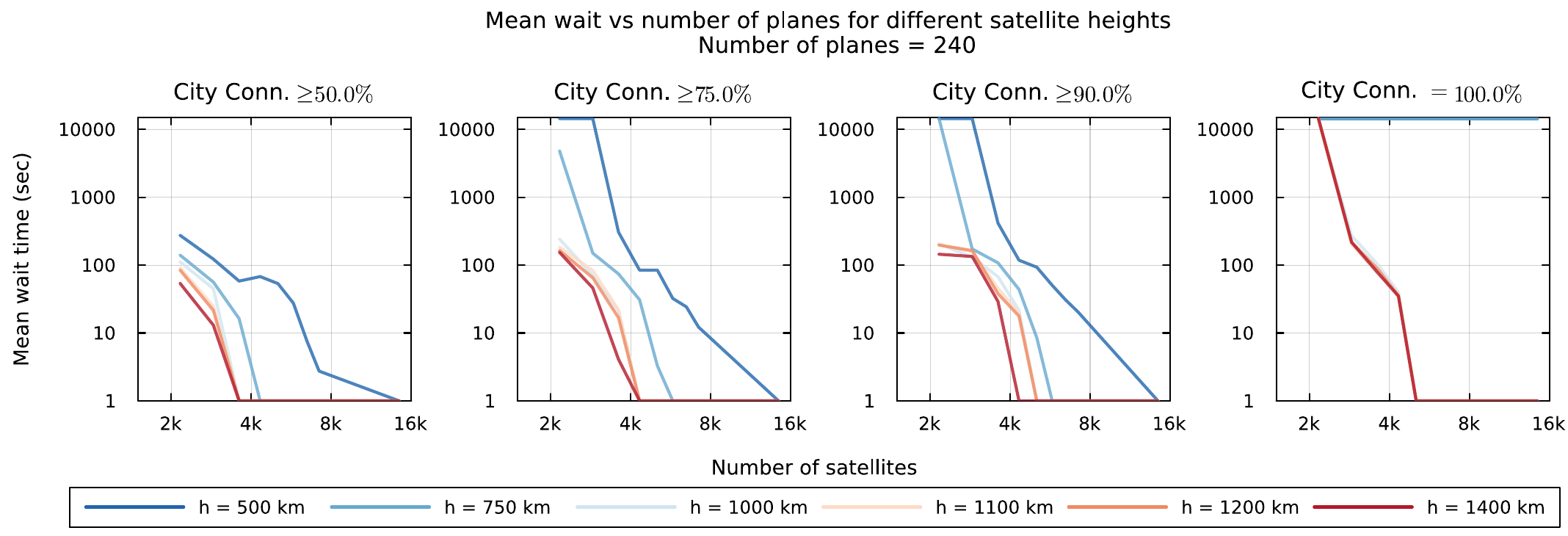}
                        \put(45, 98){\small  $\vartheta \geq 50\%$}
                                    \put(60,108){\small \textbf{(b)} \# of orbital planes = 240}
        \end{overpic}
        \end{subfigure}
    \hspace{1em}
        \begin{subfigure}{0.45\linewidth}
            \centering
            \begin{overpic}[height = 4cm, trim = 27.5cm 1.3cm 0cm 2.4cm, clip]{figures/k_7_pf_0.1_wait_vs_planes_num_planes_240_panels.pdf}
                        \put(45, 102){\small$\vartheta = 100\%$}
        \end{overpic}
                \end{subfigure}
        \end{subfigure}
        \centering
            \begin{subfigure}{0.95\linewidth}
        \centering
        \begin{overpic}[width=\linewidth, trim = 0 0cm 0.1cm 11cm, clip]{figures/k_7_pf_0.1_wait_vs_planes_num_planes_240_panels.pdf}
        \put(45,4){\small Number of Satellites}
        \end{overpic}
    \end{subfigure}
        \caption{
    \textbf{Mean wait time as a function of total satellite budget for different satellite altitudes and connectivity thresholds.}
    Panels (a), and (b) correspond to constellations with 120 and 240 orbital planes, respectively, shown as representative plane counts.
    The first and third plots correspond to $\vartheta \geq 50\%$, while the second and fourth plots correspond to $\vartheta=100\%$.
    Colors indicate different satellite altitudes, with bluer colors representing satellites closer to the Earth's surface, while reds representing higher altitude LEO satellites.
    Increasing the number of planes reduces wait times across all altitudes by improving longitudinal coverage; however, the relative ordering with altitude is preserved across both panels.
    Higher-altitude constellations consistently achieve lower wait times for a fixed satellite budget, with the advantage becoming more pronounced at stringent connectivity thresholds.
    }
    \label{fig:altitude_waittime}
\end{figure}

\vspace{-1em}
\paragraph{Number of satellites per orbital plane.}

While increasing the number of planes have only marginal returns, tuning the satellite density per plane reduces along-track spacing and improves temporal overlap between successive satellite footprints. This yields modest reductions in wait time at low densities by smoothing short-term coverage fluctuations (Fig.~\ref{fig:planes_waittime}). However, these gains saturate rapidly. 

\begin{figure}[t!]
\vspace{2em}
        \begin{subfigure}{0.48\linewidth}
        \begin{subfigure}{0.48\linewidth}
            \centering 
            \begin{overpic}[height = 3.7cm, trim = 0cm 1.6cm 26.2cm 2.4cm, clip]{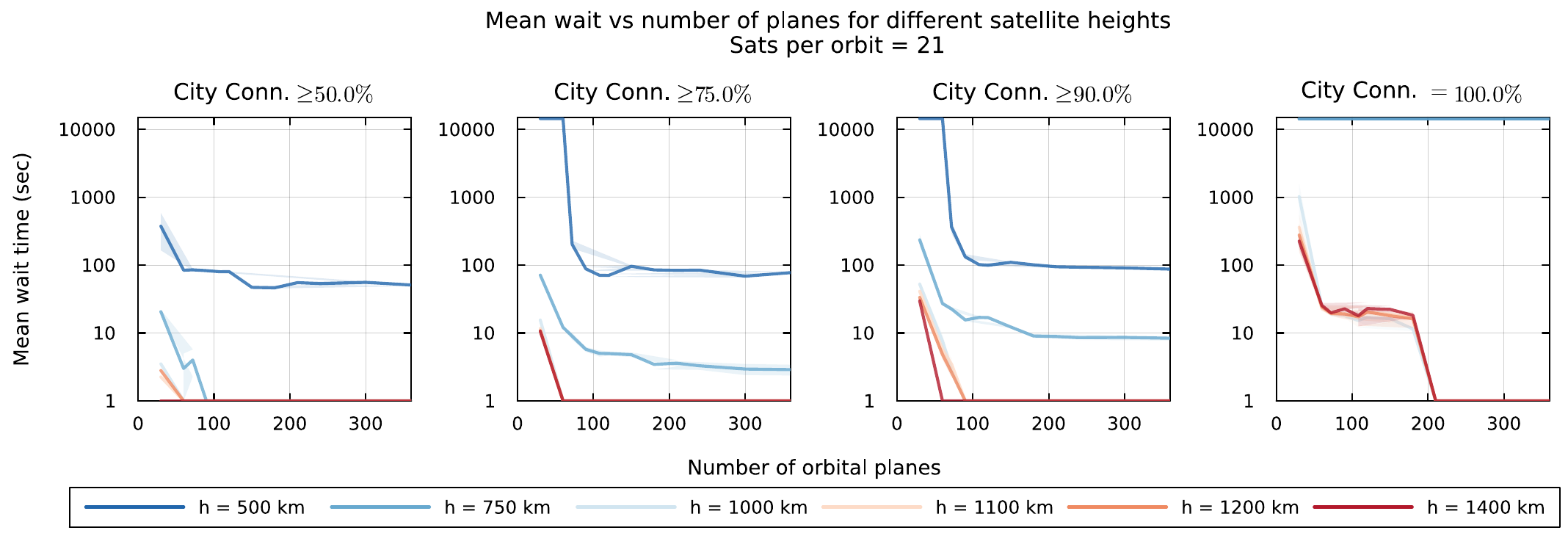}
            \put(60,97){\small \textbf{(a)} Mean Wait Time $W$ (s)}
            \put(49,86){\small $\vartheta \geq 50\%$}
            \put(55,-5){\small Number of Orbital Planes}
        \end{overpic}
        \end{subfigure}
    \hspace{1em}
        \begin{subfigure}{0.45\linewidth}
            \centering 
            \begin{overpic}[height = 3.7cm, trim = 27.6cm 1.6cm 0 2.4cm, clip]{figures/k_7_pf_0.1_wait_vs_planes_spp21_panels.pdf}
            \put(39, 100){\small $\vartheta = 100\%$}
        \end{overpic}
        \end{subfigure}
    \end{subfigure}
    \hspace{1em}
    \begin{subfigure}{0.48\linewidth}
        \begin{subfigure}{0.48\linewidth}
            \centering
            \begin{overpic}[height = 3.7cm, trim = 0cm 2cm 26cm 2.4cm, clip]{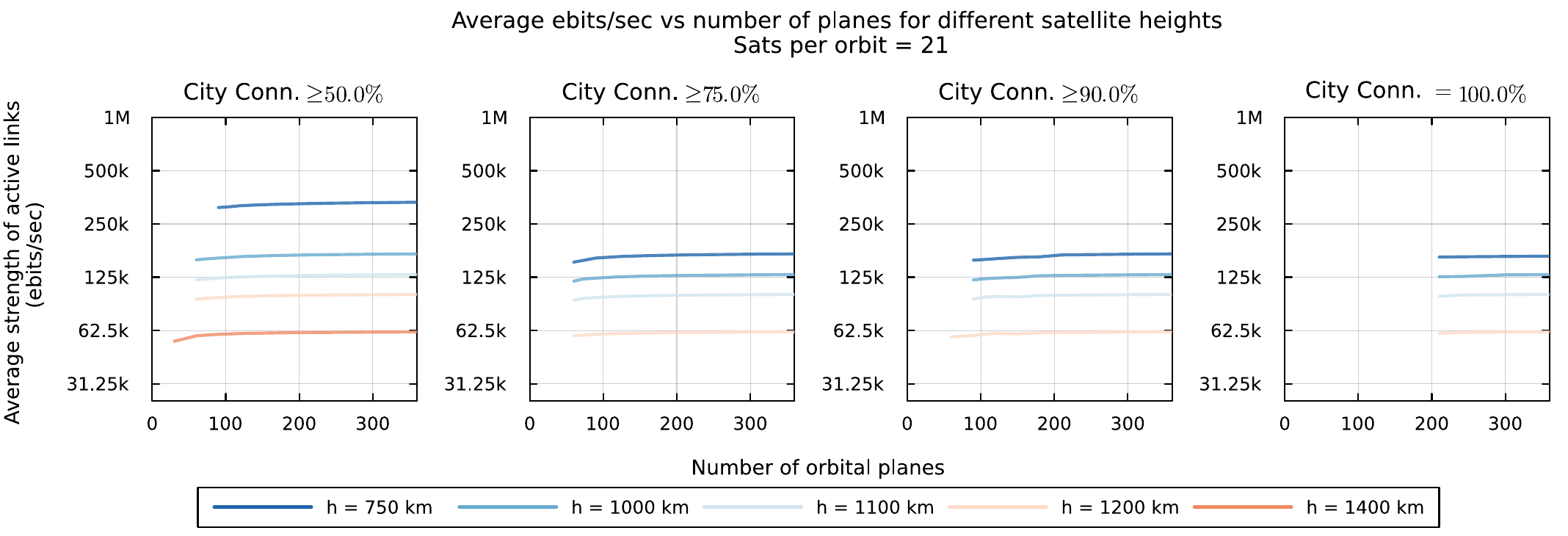}
            \put(15,89){\small \textbf{(b)} Latency-conditioned link strength for $W_\mathrm{max} = 1$~s}
                    \put(49, 80){\small $\vartheta \geq 50\%$}
                    \put(55,-5){\small Number of Orbital Planes}
        \end{overpic}
        \end{subfigure}
    \hspace{1em}
        \begin{subfigure}{0.45\linewidth}
            \centering
            \begin{overpic}[height = 3.7cm, trim = 27.5cm 2cm 0cm 2.4cm, clip]{figures/k_7_pf_0.1_ebits_vs_planes_spp21_panels.pdf}
            \put(39, 94){\small $\vartheta = 100\%$}
        \end{overpic}
                \end{subfigure}
        \end{subfigure}
            \begin{subfigure}{0.99\linewidth}
            \vspace{0.75em}
        \centering
        \includegraphics[height = 0.6cm, trim = 0 0cm 0cm 11cm, clip]{figures/k_7_pf_0.1_wait_vs_planes_spp21_panels.pdf}
    \end{subfigure}
\caption{
    \textbf{Effect of orbital-plane count and satellite altitude on connectivity latency and average link strength.}
    Panel (a) shows the mean wait time required to reach a specified city--city connectivity threshold $\vartheta$, while panel (b) shows the corresponding latency-conditioned average active-link strength for $W_{\max}=1\,\mathrm{s}$.
    The first and third plots correspond to $\vartheta \geq 50\%$, while the second and fourth plots correspond to $\vartheta=100\%$.
    Colors indicate satellite altitude.
    All results are shown for a fixed number of satellites per plane ($\mathrm{SPP}=21$), an anisotropic ground-station grid, and the nearest-neighbor multi-party connectivity (MPC) model.
    Increasing the number of orbital planes reduces wait times at low plane counts by improving longitudinal coverage, but the gains saturate once sufficient longitudinal diversity is available.
    Increasing altitude can also reduce wait time by enlarging each satellite footprint, especially at low plane counts and moderate connectivity thresholds.
    However, at stringent connectivity thresholds, this latency benefit becomes limited: the bottleneck is no longer only local visibility, but the simultaneous formation of a sufficiently connected global backbone.
    In contrast, altitude has a strong effect on active-link strength, with higher-altitude satellites producing weaker links because of increased channel loss.
    Thus, altitude introduces a visibility--loss trade-off: larger footprints can improve connectivity latency, but only at the cost of reduced link strength.
    This trade-off motivates an intermediate-altitude operating regime, rather than simply choosing the highest altitude or the largest number of orbital planes.
    }
\label{fig:planes_strength}
\end{figure}

\begin{figure}[bth!]
\vspace{2em}
    \centering
        \begin{subfigure}{0.48\linewidth}
        \begin{subfigure}{0.48\linewidth}
            \centering 
            \begin{overpic}[height = 3.7cm, trim = 0cm 1.6cm 26.2cm 2.4cm, clip]{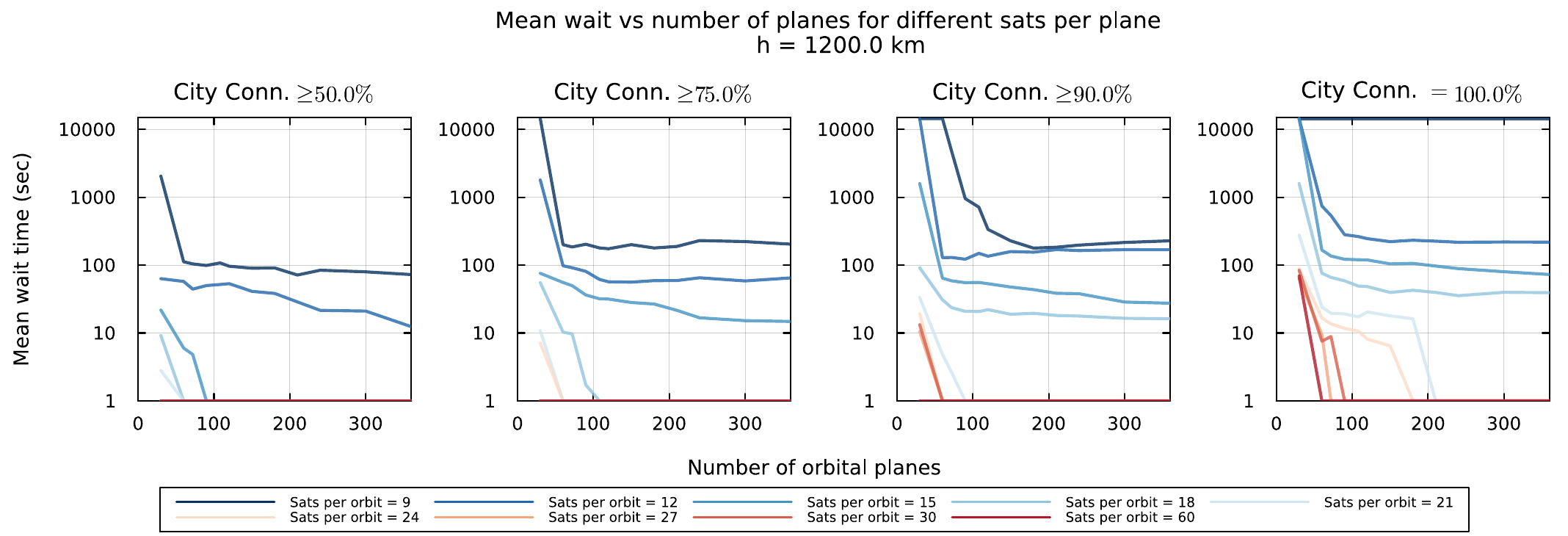}
            \put(60,97){\small \textbf{(a)} Mean Wait Time $W$ (s)}
            \put(45,86){\small $\vartheta \geq 50\%$}
            \put(55,-5){\small Number of Orbital Planes}
        \end{overpic}
        \end{subfigure}
    \hspace{1em}
        \begin{subfigure}{0.45\linewidth}
            \centering 
            \begin{overpic}[height = 3.7cm, trim = 27.6cm 1.6cm 0 2.4cm, clip]{figures/k_7_pf_0.1_wait_vs_planes_h1200km_panels.pdf}
            \put(35, 101){\small $\vartheta = 100\%$}
        \end{overpic}
        \end{subfigure}
    \end{subfigure}
    \hspace{1em}
    \begin{subfigure}{0.48\linewidth}
        \begin{subfigure}{0.48\linewidth}
            \centering
            \begin{overpic}[height = 3.7cm, trim = 0cm 2cm 26cm 2.4cm, clip]{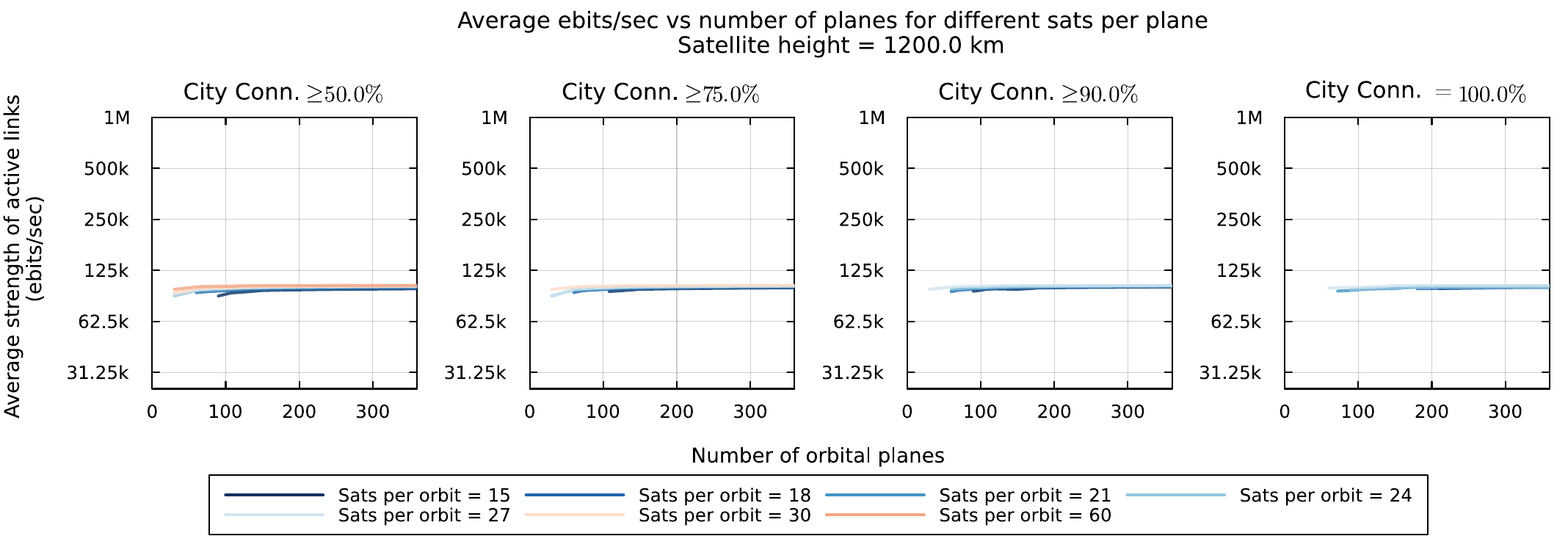}
            \put(15,89){\small \textbf{(b)} Latency-conditioned link strength for $W_\mathrm{max} = 1$~s}
                    \put(45, 80){\small $\vartheta \geq 50\%$}
                    \put(55,-5){\small Number of Orbital Planes}
        \end{overpic}
        \end{subfigure}
    \hspace{1em}
        \begin{subfigure}{0.45\linewidth}
            \centering
            \begin{overpic}[height = 3.7cm, trim = 27.5cm 2cm 0cm 2.4cm, clip]{figures/k_7_pf_0.1_ebits_vs_planes_h1200km_panels.pdf}
                        \put(35, 93){\small $\vartheta = 100\%$}
        \end{overpic}
                \end{subfigure}
        \end{subfigure}
            \begin{subfigure}{0.95\linewidth}
            \vspace{0.75em}
        \centering
        \includegraphics[height = 0.65cm, trim = 0 0cm 0cm 11cm, clip]{figures/k_7_pf_0.1_wait_vs_planes_h1200km_panels.pdf}
    \end{subfigure}
    \caption{
    \textbf{Effect of orbital-plane count and satellites per orbital plane on connectivity latency and link quality.}
    Panel (a) shows the mean wait time required to reach a specified city--city connectivity threshold $\vartheta$, while panel (b) shows the corresponding latency-conditioned average active-link strength for $W_{\max}=1\,\mathrm{s}$.
    The first and third plots correspond to $\vartheta \geq 50\%$, while the second and fourth plots correspond to $\vartheta=100\%$.
    Colors indicate the number of satellites per orbital plane (SPP), with blue curves denoting lower SPP and red curves denoting higher SPP.
    All results are shown for a fixed satellite altitude of $h=1200\,\mathrm{km}$, an anisotropic ground-station grid, and the nearest-neighbor multi-party connectivity (MPC) model.
    Increasing the number of orbital planes initially reduces wait time by improving longitudinal coverage and reducing visibility gaps.
    However, beyond a moderate number of planes, further subdivision yields diminishing returns because satellites are redistributed across additional planes without substantially increasing useful visibility opportunities.
    In contrast, the latency-conditioned active-link strength is nearly invariant across both the number of orbital planes and SPP at fixed altitude, indicating that link quality is dominated primarily by altitude-dependent channel loss rather than orbital-plane subdivision; this altitude dependence is examined separately in Fig.~\ref{fig:planes_strength}.
    Overall, these results show that, once a sufficient number of orbital planes is available, increasing the number of satellites per plane provides larger wait-time gains than further increasing the number of planes, especially for stringent connectivity thresholds.
    }
    \label{fig:planes_waittime}
\end{figure}

It is important to note that the secondary parameters discussed above are not independent: when the total satellite budget is constrained, increasing the number of orbital planes reduces satellites per plane, introducing a trade-off between longitudinal coverage and temporal continuity that can limit or even degrade latency performance. Finally, while careful tuning of altitude, plane count, and orbital packing can provide incremental improvements, these secondary knobs cannot compensate for suboptimal ground-station geometry, single-shell deployment, or single-link satellite operation. This reinforces an architecture-first design principle for satellite-serviced quantum backbones, with secondary parameters used for refinement rather than as primary drivers of global-scale, low-latency connectivity.
\section{Conclusion and Discussion}
\label{sec:Conclusion}

This work examined the architectural requirements for satellite-serviced quantum network backbones capable of supporting \emph{concurrent, large-scale} entanglement connectivity. Using a network-level simulation framework that couples orbital geometry, ground-station placement, and constrained satellite servicing, we evaluated how backbone design choices---including ground-station geometry and satellite constellation structure---determine connectivity latency, concurrency, and effective link availability. Rather than optimizing a specific hardware platform, our goal was to identify the architectural features that govern global-scale performance under realistic timing and resource constraints.

Three primary design insights emerge. First, ground-station geometry is a dominant determinant of connectivity latency. Longitudinally collapsed and uniform grids, commonly adopted in prior proposals for their simplicity, oversupply ground stations in regions that already experience high satellite visibility while under-serving regions with sparse orbital access. Anisotropic ground-station placement addresses this imbalance by redistributing station density as a function of latitude, accelerating the formation of large connected components across all connectivity thresholds. By reducing time-to-connectivity, anisotropic grids relax demands on ground-station quantum-memory coherence and make high-threshold connectivity achievable under near-term hardware assumptions.

Second, constellation structure plays a critical role through spatial coverage diversity. Single-shell constellations exhibit visibility gaps that delay global connectivity, particularly when near-complete traffic-matrix coverage is required. Redistributing a fixed satellite budget across diverse inclination shells mitigates these gaps, and enables earlier and more stable connectivity without increasing constellation size. This effect is absent at low connectivity thresholds with smaller constellations, but becomes pronounced for higher connectivity thresholds or with larger constellations. This indicates that augmented dual-shell architectures are essential for backbones intended to provide persistent global service rather than opportunistic long-distance links.

Third, per-satellite servicing capability imposes a fundamental concurrency constraint. Under bi-partite operation, each satellite can activate at most one ground-station pair at a time, limiting the rate at which connectivity accumulates even when many stations are simultaneously visible. Multi-party connectivity relaxes this constraint by allowing a single satellite to service multiple ground stations concurrently within its footprint. Our results show that this capability substantially reduces connectivity latency by increasing the rate of edge accumulation in the ground-station graph across all connectivity-thresholds. These gains arise from improved utilization of existing geometric visibility while providing improved per-link channel quality.

These results highlight a quantum-specific distinction from classical LEO satellite networking. While classical LEO networks also face dynamic topology, handovers, latency variation, and transient congestion, these effects can be mitigated through routing, buffering, retransmission, congestion control, and store-and-forward operation. In quantum networks, the transported resource is entanglement: it cannot be copied, and cannot be stored without loss in quality. Thus, useful end-to-end service requires simultaneous multi-hop path availability within finite memory lifetimes. Consequently, fixed ground-station geometry and per-satellite concurrency are not merely deployment or transport-layer details, but architectural constraints that determine whether a global quantum backbone can form.

Secondary analyses clarify how constellation-level tuning parameters shape performance within a fixed architecture. Satellite altitude emerges as the dominant physical lever, governing the trade-off between connectivity latency and average link strength by setting the scale of visibility footprints and propagation loss. In contrast, increasing the number of orbital planes or satellites per plane provides diminishing returns once moderate densities are reached, and can even degrade latency when total satellite budgets are constrained by reducing temporal continuity along each orbit. 

Together, these results support an \emph{architecture-first} perspective on satellite-serviced quantum backbones. Ground-station placement, constellation structure, and satellite servicing capability fundamentally shape achievable global performance, while secondary parameters serve primarily as refinement tools. The advantages of anisotropic grids, shell diversity, and multi-party connectivity are complementary rather than redundant: each addresses a distinct bottleneck that cannot be eliminated by scaling satellite count alone.

Several limitations and directions for future work remain. The present study deliberately focuses on \emph{architectural and geometric constraints}: how ground-station placement, constellation structure, and per-satellite servicing policies determine the \emph{availability}, \emph{concurrency}, and \emph{temporal accumulation} of entanglement opportunities at the network level. We model link feasibility and connectivity formation under explicit visibility and servicing constraints, but do not explicitly track detailed quantum- or physical-layer impairments such as atmospheric turbulence, cloud cover, pointing jitter, detector noise, Bell-state measurement success probability, memory decoherence, purification yield, or end-to-end fidelity evolution. These effects will shift absolute performance values by introducing additional loss and noise, intermittency, and variability in link availability. However, we expect the primary architectural orderings identified here to be robust, because they arise from geometric coverage patterns and concurrency bottlenecks---namely, how quickly independent entanglement edges can be activated and accumulated across the ground-station graph. Incorporating detailed physical-layer models and explicit quantum protocol dynamics, including fidelity evolution, purification strategies, and memory-limited scheduling, is therefore an important direction for deployment-level performance prediction.

A separate direction is to move beyond bipartite elementary links and explicitly model multi-partite entanglement distribution. The multi-terminal satellite architectures considered here naturally admit settings in which a single satellite could distribute GHZ states~\cite{greenberger1989going} or other multi-partite resources~\cite{W_state, DMBQC2007} among multiple ground stations within the same visibility footprint. Such resources may be useful for conference-key agreement~\cite{Pickston_2023, GHZ_secret}, distributed sensing~\cite{DRC17, ZZS18}, distributed computing~\cite{coffman_distributed_2000, caleffi2022_2} and multi-user entanglement services~\cite{pant_routing_2019, miguel2023quantum, DMBQC2007}. Multi-partite entanglement distribution could change both the service topology and the appropriate connectivity metrics. While studying multi-partite protocols is beyond the scope of the present work, the architectural framework developed here provides a direct basis for doing so.

We also do not perform a detailed techno-economic analysis. Launch cost, satellite manufacturing cost, ground-station deployment cost, and terminal size--weight--power trade-offs are highly platform-dependent and are evolving rapidly as optical payloads and satellite buses mature~\cite{christie2022launch, li2022techno, sweeting2018modern}. While our terminal-normalized comparisons provide a first-order proxy for payload and operational complexity, they do not capture broader system-level cost drivers such as constellation replenishment, ground infrastructure ownership models, or economies of scale. A comprehensive cost--performance analysis that jointly considers architectural design choices, hardware capabilities, and economic constraints will be essential for translating architectural feasibility into deployable global quantum network infrastructures. In particular, architectural choices that reduce time-to-connectivity and improve concurrency may translate directly into cost advantages by relaxing requirements on quantum memory lifetime, terminal count, and satellite dwell-time utilization.

Finally, while we focus on concurrent global connectivity as a stringent operating regime, this objective is directly motivated by emerging applications such as multi-user QKD, distributed quantum sensing, and entanglement-assisted network services, where isolated long-distance links are insufficient. Extending the present framework to heterogeneous traffic matrices, demand-weighted objectives, and integrated satellite--terrestrial repeater architectures represents a natural next step.

Overall, this study shows that scalable satellite-assisted quantum-network backbones require the joint design of latitude-aware ground-station lattices, shell-diverse constellations, and multi-point satellite service. These architectural variables determine not only link availability, but the waiting-time and concurrency regimes in which global entanglement connectivity becomes feasible.

\FloatBarrier

\section*{Code and Data availability}\label{sec:code_and_data}
The simulator and analysis scripts are implemented in Julia. To support reproducibility, all constellation parameters, ground-station configurations, and simulator settings are exported alongside the reported results. The codebase will be released publicly upon publication; in the interim it is available from the authors upon reasonable request.

\section*{Supplementary Materials}\label{sec:supplementary_materials}
Supplementary Materials are available at this persistant repository~\cite{supplementary_materials}.

\section*{Author Contributions}\label{sec:author_contributions}
DT and PM formulated the overall problem and design. DT supervised the project. PM developed the modeling framework, designed and implemented the simulations, performed the numerical evaluations, and analyzed the results. DT and SH contributed to the technical development of the model and interpretation of the results. AW contributed to initial discussions. PM wrote the manuscript with input from DT and SH. All authors reviewed and edited the manuscript.

\section*{Competing Interests}\label{sec:competing_interests}
The authors declare no competing interests.

\section*{Acknowledgment}\label{sec:acknowledgment}
We thank the Manning College of Information and Computer Sciences at the University of Massachusetts Amherst for providing access to their High Performance Computation Cluster. All authors acknowledge funding support from the NSF ERC Center for Quantum Networks, Grant No. EEC-1941583. DT, AW, and PM acknowledge NSF Grant No. CNS-2402861 and DOE Grant No. AK0000000018297. 
\appendix
\appendixpage
% ============================================================
% Appendix: Forward Wait Time
% ============================================================
\section{Forward Wait Time}
\label{app:forward_wait_time}

In time-varying quantum networks, links between nodes evolve stochastically due to physical processes such as satellite visibility windows, entanglement generation attempts, and environmental photon loss. To characterize how long a network remains below a connectivity threshold and how long a user must wait for the criterion to be satisfied, we define two related but distinct discrete-time quantities: the \emph{down-run duration} (the length of a contiguous episode in which the criterion fails) and the \emph{forward wait time} (the residual time from an arbitrary reference epoch until the criterion is next satisfied). We report down-run statistics as our primary metric and relate them to forward-wait behavior via renewal-theoretic identities.

Throughout this appendix we work in discrete time, consistent with the simulator described in Section~\ref{subsec:simulation_methodology} in the main text. Time is sampled at uniform epochs
\begin{align*}
    t_k := t_{\mathrm{start}} + k\,\delta t, \qquad k = 0,1,\dots,K,
\end{align*}
where $t_{\mathrm{start}}$ denotes the initial simulation epoch, $\delta t>0$ is the fixed time step, and $K$ is the final discrete-time index. The simulation horizon is $T_{\mathrm{sim}}=K\,\delta t$. All connectivity events and threshold evaluations are therefore indexed by the epoch counter $k$. A continuous-time analogue is obtained by replacing the discrete hitting index in Definition~\ref{def:fwd_wait_discrete} by the usual hitting time $\inf\{\tau \ge 0 : \mathcal{C}(G(t_0+\tau)) \text{ holds}\}$.

\paragraph{Discrete-time setup.}
Our simulations produce a discrete-time trace $\{G(t_k)\}_{k=0}^K$, where $G(t_k)=(V,E(t_k))$ is the graph of available entanglement links at epoch $t_k$. Let $\mathcal{C}(G(t_k))$ be a predicate describing the desired connectivity event, such as the largest connected component exceeding a prescribed fraction of nodes or the fraction of city pairs connected exceeding a threshold. Define the Boolean indicator
\begin{align*}
    U_k := \mathbf{1}\{\mathcal{C}(G(t_k))\} \in \{0,1\}, \qquad k=0,1,\dots,K.
\end{align*}
The network is \emph{up} at epoch $t_k$ if $U_k=1$ and \emph{down} otherwise.

% ------------------------------------------------------------
\subsection{Down-run duration: definition and reported metric}
\label{app:downrun}

\begin{definition}[Down-run duration]
\label{def:downrun}
A \emph{down run} is a maximal contiguous block of epochs at which $U_k=0$. Formally, a down run of length $\ell$ starting at index $k_0$ satisfies
\[
    U_{k_0-1}=1 \ \text{(or $k_0=0$)},\quad
    U_{k_0}=U_{k_0+1}=\cdots=U_{k_0+\ell-1}=0,\quad
    U_{k_0+\ell}=1 \ \text{(or $k_0+\ell-1=K$)}.
\]
Its duration is $L := \ell\,\delta t$ seconds.
\end{definition}

Let $L_1,L_2,\ldots,L_M$ denote the down-run durations extracted from the analysis window. A terminal down run that is truncated by the end of the window is right-censored; we exclude such truncated terminal runs from the reported down-run statistics.

We summarize down-run durations by reporting their empirical mean and standard deviation,
\begin{align*}
    \hat{\mu}_L &:= \frac{1}{M}\sum_{n=1}^M L_n, \qquad
    \hat{\sigma}_L := \left(\frac{1}{M-1}\sum_{n=1}^M (L_n - \hat{\mu}_L)^2\right)^{1/2},
\end{align*}
together with selected quantiles ($p_{10}$, $p_{50}$ (median), $p_{90}$).

% ------------------------------------------------------------
\subsection{Forward wait time: definition and relation to down runs}
\label{app:fwt}

\begin{definition}[Forward Wait Time]
\label{def:fwd_wait_discrete}
For an epoch $t_k$, define the hitting index
\[
    J_k := \min\{j \ge k : U_j = 1\},
\]
with $J_k=+\infty$ if the set is empty. The forward wait time is
\[
    \hat{W}_{\mathcal{C}}(t_k) := (J_k-k)\,\delta t,
\]
with $\hat{W}_{\mathcal{C}}(t_k)=+\infty$ when $J_k=+\infty$ (a right-censored observation under a finite trace).
\end{definition}

Note that $\hat{W}_{\mathcal{C}}(t_k)=0$ whenever $U_k=1$ (the criterion already holds). In contrast, the down-run duration $L_n$ is the \emph{total} length of a down episode. Thus, $L_n$ characterizes how long the network stays below threshold once it drops below it, while $\hat{W}_{\mathcal{C}}(t_k)$ is the \emph{residual} time until recovery from the particular epoch $t_k$.

% ------------------------------------------------------------

\subsection{Stopping time property}
Let $(\Xi,\mathcal{F},\mathbb{P})$ be a probability space on which the process $\{G(t_k)\}_{k\ge 0}$ is defined, and let $(\mathcal{F}_k)_{k\ge 0}$ be its natural filtration, $\mathcal{F}_k=\sigma(G(t_0),\ldots,G(t_k))$.

\begin{proposition}[Stopping time]
\label{prop:stopping_time_fwd_wait_discrete}
Assume $U_j$ is $\mathcal{F}_j$-measurable for all $j$. Then, for any fixed $k$, the random index $J_k$ is a stopping time with respect to $(\mathcal{F}_j)_{j \ge k}$, and hence $\hat{W}_{\mathcal{C}}(t_k)=(J_k-k)\delta t$ is measurable with respect to $\mathcal{F}_{J_k}$ on $\{J_k<\infty\}$.
\end{proposition}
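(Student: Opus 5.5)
The plan is to verify the stopping-time property of $J_k$ directly from the definition, by decomposing each event $\{J_k = n\}$ into finitely many events involving only the indicators $U_k,\dots,U_n$. Then I would read off the measurability of $\hat{W}_{\mathcal{C}}(t_k)$ from the standard fact that a stopping time is measurable with respect to its own stopped $\sigma$-algebra.

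For a fixed $k$ and any integer $n \ge k$, the definition $J_k=\min\{j\ge k: U_j=1\}$ gives
\[
\{J_k = n\} \;=\; \Bigl(\bigcap_{j=k}^{n-1}\{U_j = 0\}\Bigr)\cap\{U_n = 1\}.
\]
By hypothesis each $U_j$ is $\mathcal{F}_j$-measurable. The filtration is increasing, since $\mathcal{F}_j=\sigma(G(t_0),\ldots,G(t_j))\subseteq\mathcal{F}_n$ for $j\le n$. Hence every set on the right lies in $\mathcal{F}_n$, and so $\{J_k=n\}\in\mathcal{F}_n$ for all $n\ge k$. For $n<k$ the event is empty and trivially measurable. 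The event $\{J_k=\infty\}=\bigcap_{j\ge k}\{U_j=0\}$ lies in $\mathcal{F}_\infty=\sigma(\bigcup_j\mathcal{F}_j)$. This establishes that $J_k$ is a stopping time with respect to $(\mathcal{F}_j)_{j\ge k}$, allowing the value $+\infty$.

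For the second claim, recall $\mathcal{F}_{J_k}=\{A\in\mathcal{F}_\infty : A\cap\{J_k=n\}\in\mathcal{F}_n\ \forall n\}$. To see that $J_k$ is $\mathcal{F}_{J_k}$-measurable, take any $m$ and consider
\[
\{J_k=m\}\cap\{J_k=n\},
\]
which is either $\{J_k=n\}\in\mathcal{F}_n$ (when $m=n$) or empty. Thus $\{J_k=m\}\in\mathcal{F}_{J_k}$ for every $m$. Since $\hat{W}_{\mathcal{C}}(t_k)=(J_k-k)\,\delta t$ is a deterministic affine function of $J_k$ on $\{J_k<\infty\}$, it is $\mathcal{F}_{J_k}$-measurable there. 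Note that $\{J_k<\infty\}$ itself belongs to $\mathcal{F}_{J_k}$.

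No step is a genuine obstacle. The only care needed is bookkeeping: the filtration is indexed from $k$ (so we use $\mathcal{F}_n$ for $n\ge k$ together with monotonicity), and the value $J_k=+\infty$ must be handled separately, which is why the measurability claim is restricted to $\{J_k<\infty\}$.
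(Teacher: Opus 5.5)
Your proof is correct and follows the same direct route as the paper. The paper checks $\{J_k\le m\}=\bigcup_{j=k}^m\{U_j=1\}\in\mathcal{F}_m$, while you check the equivalent discrete-time condition $\{J_k=n\}=\bigcap_{j=k}^{n-1}\{U_j=0\}\cap\{U_n=1\}\in\mathcal{F}_n$; both rest on the same two ingredients, adaptedness of $U_j$ and monotonicity of the filtration. You also explicitly verify that $J_k$, and hence $(J_k-k)\,\delta t$ on $\{J_k<\infty\}$, is $\mathcal{F}_{J_k}$-measurable, which fills in the ``hence'' clause that the paper's proof leaves implicit.
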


\begin{proof}
For any $m\ge k$, the event $\{J_k\le m\}$ is equivalent to $\bigcup_{j=k}^m \{U_j=1\}$. Each event $\{U_j=1\}$ belongs to $\mathcal{F}_j \subseteq \mathcal{F}_m$, hence the finite union belongs to $\mathcal{F}_m$.
\end{proof}

% ------------------------------------------------------------
\subsection{Empirical estimation over a finite window}
Given $\hat{W}_{\mathcal{C}}(t_k)$ from Definition~\ref{def:fwd_wait_discrete}, we estimate forward-wait performance from a finite simulated trace by computing $\hat{W}_{\mathcal{C}}(t_k)$ at each epoch $t_k$ in the analysis window. If the connectivity criterion is never satisfied again within the remaining trace, we set $\hat{W}_{\mathcal{C}}(t_k)=+\infty$ (a right-censored observation). In our reported results, we discard right-censored samples and report the empirical mean of the remaining finite forward-wait values:
\[
    \hat{\bar{W}}
    :=
    \frac{1}{|\mathcal{T}|}\sum_{t_k\in\mathcal{T}} \hat{W}_{\mathcal{C}}(t_k),
    \qquad
    \mathcal{T}:=\{t_k:\hat{W}_{\mathcal{C}}(t_k)<\infty\}.
\]
Discarding $+\infty$ values yields a finite-horizon estimate and can bias $\hat{\bar{W}}$ downward when censoring is non-negligible.

% ------------------------------------------------------------
\subsection{Connection to renewal theory and the inspection paradox (stationary special case)}
\label{app:renewal}

This subsection links the discrete-time forward-wait definition to classical alternating-renewal intuition. We emphasize that the discussion below is a stationary special case and is not required for computing the empirical down-run statistics reported in the main text.

\paragraph{Renewal epochs and cycle decomposition.}
Let $\tau_1 < \tau_2 < \cdots$ denote successive \emph{renewal epochs} (upcrossings) at which the criterion becomes satisfied immediately after a period of non-satisfaction, i.e., indices $k$ such that $U_k=1$ and $U_{k-1}=0$ (with the convention $U_{-1}:=0$ so that an initial up epoch counts as a renewal). Define the $n$th inter-renewal (cycle) length in seconds as
\[
    \Delta_n := (\tau_{n+1}-\tau_n)\,\delta t .
\]
Within the $n$th cycle, the indicator process first remains \emph{high} (up) for an up-run duration $H_n$ and then remains \emph{low} (down) for a down-run duration $L_n$, so that
\[
    \Delta_n = H_n + L_n,
\]
where
\[
    H_n := \bigl|\{k\in\{\tau_n,\ldots,\tau_{n+1}-1\}:U_k=1\}\bigr|\,\delta t,\qquad
    L_n := \bigl|\{k\in\{\tau_n,\ldots,\tau_{n+1}-1\}:U_k=0\}\bigr|\,\delta t.
\]
Equivalently, $H_n$ is the maximal contiguous block of ones beginning at $\tau_n$, and $L_n$ is the maximal contiguous block of zeros that follows it and ends at $\tau_{n+1}-1$.

\paragraph{Mean residual wait when starting in a down epoch (inspection paradox).}
Under a stationary i.i.d.\ alternating-renewal approximation in which $\{(H_n,L_n)\}_{n\ge 1}$ are i.i.d.\ with finite second moments (see the cyclostationary caveat below), consider an observer who samples a uniformly random epoch \emph{conditioned on} currently being down (i.e., $U_k=0$). The remaining time until the end of the current down run is a residual-life random variable. The inspection paradox gives the exact mean residual wait
\begin{align*}
    \bar{W}_\downarrow
    := \mathbb{E}\!\left[W_{\mathcal{C}}(t_0)\mid U(t_0)=0\right]
    = \frac{\mathbb{E}[L^2]}{2\,\mathbb{E}[L]}
    = \frac{\mathbb{E}[L]}{2} + \frac{\mathrm{Var}[L]}{2\,\mathbb{E}[L]},
\end{align*}
where $L$ denotes a generic down-run duration distributed as $L_n$.

\paragraph{Unconditional mean forward wait for $W_{\mathcal{C}}$.}
Let
\[
    p := \mathbb{P}\{U(t_0)=1\} = \frac{\mathbb{E}[H]}{\mathbb{E}[H]+\mathbb{E}[L]}
\]
denote the long-run uptime fraction in the stationary alternating-renewal model. Since $W_{\mathcal{C}}(t_0)=0$ whenever $U(t_0)=1$ by definition, the unconditional mean forward wait satisfies
\[
    \bar{W} := \mathbb{E}\!\left[W_{\mathcal{C}}(t_0)\right] = (1-p)\,\bar{W}_\downarrow
    = \frac{\mathbb{E}[L^2]}{2\left(\mathbb{E}[H]+\mathbb{E}[L]\right)} .
\]

\paragraph{Inspection-paradox-corrected forward wait (derived; not directly reported).}
Under the stationary alternating-renewal approximation, $\bar{W}_\downarrow$ can be estimated from the down-run event sequence via the estimator
\[
    \hat{\bar{W}}_{\downarrow,\mathrm{th}}
    :=
    \frac{\hat{\mu}_L}{2} + \frac{\hat{\sigma}_L^2}{2\hat{\mu}_L},
\]
highlighting that variability in down-run lengths inflates the expected residual wait beyond $\hat{\mu}_L/2$.

\paragraph{Internal consistency check (not reported).}
As an internal validation under the stationary approximation, one can compute the epoch-level mean residual wait over down epochs,
\[
    \hat{\bar{W}}_\downarrow
    :=
    \frac{1}{|\mathcal{T}_\downarrow|}\sum_{t_k\in\mathcal{T}_\downarrow} \hat{W}_{\mathcal{C}}(t_k),
    \qquad
    \mathcal{T}_\downarrow := \{t_k : U_k = 0,\ \hat{W}_{\mathcal{C}}(t_k) < \infty\},
\]
and compare it against $\hat{\bar{W}}_{\downarrow,\mathrm{th}}$. Discrepancies quantify finite-window effects and trace-boundary censoring bias.

\paragraph{Remark (different ``forward'' time).}
The classical renewal-theory formula $\mathbb{E}[R]=\mathbb{E}[\Delta^2]/(2\mathbb{E}[\Delta])$ applies to the \emph{forward recurrence time to the next renewal epoch} (time until the next upcrossing) at a uniformly random epoch. This differs from $W_{\mathcal{C}}$, which is defined to be zero on up epochs and instead measures time until the criterion next holds.

\paragraph{Cyclostationary caveat.}
The i.i.d.\ renewal discussion above is an instructive stationary special case. In satellite-driven networks the process is typically cyclostationary: the deterministic geometry repeats with orbital period $P$ (approximately 90 minutes for LEO constellations) while stochastic link-generation success is superimposed. The cycle lengths $\Delta_n$ (and hence $H_n$ and $L_n$) are therefore not i.i.d., and renewal-theoretic identities serve only as qualitative guidance. The down-run duration statistics $\hat{\mu}_L$, $\hat{\sigma}_L$, and quantiles remain well-defined and directly estimable regardless.

% ------------------------------------------------------------
\subsection{Phase dependence in satellite-driven networks}
Cyclostationarity implies that connectivity (and hence waiting behavior) can depend on the reference orbital phase $\phi_k := t_k \bmod P$. Define the \emph{population} phase-conditional mean forward wait
\[
    \bar{W}(\phi) := \mathbb{E}\!\left[W_{\mathcal{C}}(t_0) \mid t_0 \bmod P = \phi\right],
\]
and the phase-averaged mean
\[
    \bar{W} = \frac{1}{P}\int_0^P \bar{W}(\phi)\,d\phi.
\]
In our evaluation we do not explicitly estimate the phase-resolved profile $\bar{W}(\phi)$. Instead, our reported down-run statistics aggregate all episodes extracted from the analysis window, yielding phase-averaged summaries over the phases represented in the trace. When the analysis window spans multiple orbital periods, this aggregation captures recurring coverage gaps through their contribution to the overall down-run distribution.

If desired, phase structure can be resolved by binning down runs by their start phase $\phi := t_{k_0}\bmod P$ (where $k_0$ is the start index of the run) and reporting down-run summaries within each phase bin. Such phase-resolved down-run statistics can reveal ``dead zones'' (phases with systematically long down episodes) that may be obscured by simple time-average connectivity metrics.

The renewal-theoretic identities in Appendix~\ref{app:forward_wait_time} describe population means under a stationary alternating-renewal approximation. Appendix~\ref{app:metrics_methodology} addresses a separate issue: how to attach correlation-adjusted standard errors to empirical estimators computed from a finite, temporally dependent trace (either per-second samples or event-indexed down-interval sequences).
% ============================================================
% Appendix: Metric Computation and Statistical Estimation
% (companion to Appendix~\ref{app:forward_wait_time})
% ============================================================
\section{Metric computation and statistical estimation}
\label{app:metrics_methodology}

This appendix documents how the simulator time series are converted into the scalar metrics reported in the main text, and how uncertainty estimates are computed in the presence of temporal dependence. It complements Appendix~\ref{app:forward_wait_time}, which defines down-run durations and forward-wait times.

\paragraph{Event sequences.}
In this appendix, an \emph{event} refers to a single contiguous \emph{down interval} for a specified threshold $\vartheta$: a maximal consecutive index set,
\begin{align*}
\{k_s,k_s+1,\ldots,k_e\}\subseteq\{1,\ldots,K\}
\end{align*}
such that $U_k^{(\vartheta)}=0$ for all $k\in\{k_s,\ldots,k_e\}$, with boundary conditions (when the indices exist within the analysis window) $U_{k_s-1}^{(\vartheta)}=1$ and $U_{k_e+1}^{(\vartheta)}=1$.
The corresponding down-interval duration (seconds) is
\begin{align*}
    L_n^{(\vartheta)} := (k_e-k_s+1)\,\delta t,
\end{align*}
and the \emph{event-level} sequence is $\{L_n^{(\vartheta)}\}_{n=1}^{M_\vartheta}$, where $M_\vartheta$ is the number of down intervals observed in the analysis window.
If the trace terminates while $U_k^{(\vartheta)}=0$, the final interval is right-censored and excluded from $\{L_n^{(\vartheta)}\}$.

\subsection{Discrete-time traces and notation}
\label{app:metrics_traces}

The simulator produces a discrete-time trace $\{G(t_k)\}_{k=1}^K$ at epochs $t_k=t_{\mathrm{start}}+k\,\delta t$. At each epoch, the network state includes a graph of available entanglement links $G(t_k)=(V,E(t_k))$. We use the following derived Boolean indicators:
\begin{align*}
    U_k^{(\vartheta)} &:= \mathbf{1}\{\mathcal{C}(G(t_k)) \ge \vartheta\}, \\
    U_{k,\mathrm{city}}^{(\vartheta)} &:= \mathbf{1}\{\mathcal{C}_{\mathrm{city}}(G(t_k)) \ge \vartheta\},
\end{align*}
where $\mathcal{C}$ is a network-wide connectivity measure (e.g., LCC fraction) and $\mathcal{C}_{\mathrm{city}}$ is the analogous city-pair connectivity measure.

All reported scalar metrics are computed over a fixed analysis window (typically the full trace after any warm-up), using a uniform sampling cadence $\delta t$.

\subsection{Connectivity metrics}
\label{app:metrics_connectivity}

\paragraph{Largest connected component (LCC) fraction.}
Let $\mathrm{LCC}(t_k)$ denote the size (node count) of the largest connected component of $G(t_k)$ at time $t_k$, and $|V|$ the total number of ground stations. The LCC fraction is
\begin{align*}
    \mathrm{LCCFrac}(t_k) := \frac{\mathrm{LCC}(t_k)}{|V|} \in [0,1].
\end{align*}

\paragraph{City-pair connectivity fraction.}
Let $\mathcal{L}$ denote the set of city pairs under consideration. At each epoch, define the indicator that a city pair $(i,j)\in\mathcal{L}$ is connected via the current graph $G(t_k)$; the city connectivity fraction is
\begin{align*}
    \mathrm{CityFrac}(t_k) := \frac{1}{|\mathcal{L}|}\sum_{(i,j)\in\mathcal{L}} \mathbf{1}\{i \leftrightarrow j \text{ in } G(t_k)\} \in [0,1].
\end{align*}

\paragraph{Connectivity thresholds.}
For each threshold $\vartheta\in\vartheta$ (e.g., $\vartheta=\{0.5,0.6,0.7,0.8,0.9\}$), we evaluate the induced up/down process $U_k^{(\vartheta)}$ (or $U_{k,\mathrm{city}}^{(\vartheta)}$). From this Boolean trace we extract the sequence of \emph{maximal down intervals} (events) and their durations, as defined above and in Appendix~\ref{app:forward_wait_time}. The main text reports \emph{down-run duration statistics} (quantiles and mean) computed from these event durations, rather than the epoch-level forward-wait average, since down-run durations remain directly interpretable under cyclostationarity.

\subsection{Elementary entanglement generation rate (time-averaged ebits)}
\label{app:metrics_ebits}

We consider an epoch elementary entanglement generation rate metric (ebits/s), denoted $E_k$. This metric only includes the links generated between two ground stations serviced directly through a satellite and without accounting for any entanglement swapping operation performed on the ground. The time-average is computed as the sample mean over the analysis window,
\begin{align*}
    \hat{\mu}_E := \frac{1}{K}\sum_{k=1}^K E_k,
\end{align*}
together with the sample median and sample standard deviation for descriptive purposes.

\subsection{Why naive standard errors are invalid under autocorrelation}
\label{app:metrics_autocorr_problem}

Many reported quantities are sample means computed from a time series (e.g., $\hat{\mu}_E$) or from a sequence of events extracted from that series (e.g., the list of down-run durations $L_1,\ldots,L_M$). In satellite-driven networks, these sequences are typically \emph{temporally dependent} due to orbital geometry and service-set overlap, so consecutive observations are not independent. The classical standard error formula
\begin{align*}
    \mathrm{SE}(\hat{\mu}) \approx \frac{\hat{\sigma}}{\sqrt{N}}
\end{align*}
is only justified under weak dependence assumptions that effectively approximate independence. When autocorrelation is present, it understates uncertainty by over-counting the number of independent samples.

\subsection{Autocorrelation-based effective sample size}
\label{app:metrics_tauint}

Let $\{X_i\}_{i=1}^N$ denote a real-valued sequence (either epoch-level samples such as $E_k$ or event-level samples such as $L_n$), with mean $\mu := \mathbb{E}[X_i]$, variance $\sigma^2 =: \mathrm{Var}(X_i)$, and the autocorrelation function $\gamma(\ell) =: \mathrm{Cov}(X_i, X_{i +\ell})$, for $\ell \ge 0$. The lag-$\ell$ autocorrelation function is defined as,
\begin{align*}
    \rho(\ell) := \frac{\mathrm{Cov}(X_i, X_{i+\ell})}{\mathrm{Var}(X_i)}.
\end{align*}
The sample mean is $\hat{\mu} := (\sum_{i=1}^N X_i)/N$.
For dependent samples, the variance of the sample mean is not $\sigma^2/N$. Under standard mixing conditions and a stationary approximation, its large-$N$ asymptotic form is~\cite{geyer1992practical}

\begin{align*}
    \mathrm{Var}(\hat{\mu}) \approx \frac{1}{N} \left(\gamma(0) + 2\sum_{\ell=1}^{\infty} \gamma(\ell) \right) = \frac{\sigma^2}{N} \left(1 + 2\sum_{\ell=1}^{\infty} \rho(\ell) \right),
\end{align*}
which motivates the \emph{integrated autocorrelation time}
\begin{align*}
    \tau_{\mathrm{int}} := \frac{1}{2} + \sum_{\ell=1}^\infty \rho(\ell),
\end{align*}
so that, 
\begin{align*}
    \mathrm{Var}(\hat{\mu})
\approx
\frac{\sigma^2}{N}\,(2\tau_{\mathrm{int}}).
\end{align*}
The effective sample size is therefore defined  as
\begin{align*}
    N_{\mathrm{eff}} := \frac{N}{2\tau_{\mathrm{int}}},
\end{align*}
which is the number of independent samples that would yield the same variance of the mean estimator.
In practice, we estimate $\sigma^2$ and $\tau_{\mathrm{int}}$ from the observed sequence and report the autocorrelation-corrected standard error
\begin{align*}
    \mathrm{SE}(\hat{\mu}) \approx \frac{\hat{\sigma}}{\sqrt{N_{\mathrm{eff}}}}.
\end{align*}

\begin{comment}
A standard approximation for the mean estimator is then
\begin{align*}
    \mathrm{Var}(\hat{\mu}) \;\approx\; \frac{\sigma^2}{N}\,(2\tau_{\mathrm{int}}),
    \qquad
    N_{\mathrm{eff}} := \frac{N}{2\tau_{\mathrm{int}}},
    \qquad
    \mathrm{SE}(\hat{\mu}) \;\approx\; \frac{\hat{\sigma}}{\sqrt{N_{\mathrm{eff}}}}.
\end{align*}
\end{comment}
We therefore report (or internally compute) an autocorrelation-corrected standard error using $N_{\mathrm{eff}}$ rather than $N$.
\subsection{Estimating $\tau_{\mathrm{int}}$ from finite data (IPS truncation)}
\label{app:metrics_ips}

In finite traces, naive estimation of $\tau_{\mathrm{int}}$ by summing empirical autocorrelations over many lags is unstable because the tail of the sample autocorrelation function is dominated by estimation noise and often fluctuates around zero. To obtain a robust estimate, we use the \emph{initial positive sequence (IPS)} truncation heuristic:

\begin{enumerate}
    \item Compute the sample autocorrelation $\widehat{\rho}(\ell)$ for lags $\ell=1,2,\ldots,\ell_{\max}$, where $\ell_{\max}$ is chosen as a modest fraction of the series length (e.g., $\ell_{\max}=\min\{N-1,\lfloor N/10\rfloor\}$).
    \item Define the IPS estimate
    \begin{align*}
        \widehat{\tau}_{\mathrm{int}}
        := \frac{1}{2} + \sum_{\ell=1}^{\ell^\star} \widehat{\rho}(\ell),
        \qquad
        \ell^\star := \min\{\ell \ge 1 : \widehat{\rho}(\ell) \le 0\}-1,
    \end{align*}
    i.e., we sum positive empirical autocorrelations and stop at the first non-positive lag.
\end{enumerate}

This truncation avoids adding a noisy tail whose expected contribution is approximately zero but whose variance can be large. The resulting $N_{\mathrm{eff}}=N/(2\widehat{\tau}_{\mathrm{int}})$ is a conservative summary of how many \emph{effectively independent} observations the correlated sequence contains.

\subsection{Which quantities use autocorrelation correction}
\label{app:metrics_where_used}

The forward wait time $\hat{W}_{\mathcal{C}}(t_k)$ in Definition~\ref{def:fwd_wait_discrete} is defined at \emph{every epoch} $t_k$ and measures the residual time (in seconds) until the criterion next holds starting from that epoch. Its epoch-level average,
\[
    \hat{\bar{W}}
    :=
    \frac{1}{|\mathcal{T}|}\sum_{t_k\in\mathcal{T}} \hat{W}_{\mathcal{C}}(t_k),
    \qquad
    \mathcal{T}:=\{t_k:\hat{W}_{\mathcal{C}}(t_k)<\infty\},
\]
weights long down episodes more heavily because many consecutive epochs $t_k$ lie inside the same down run, each contributing a (typically large) residual wait. By contrast, the down-run duration statistics $\{L_n\}_{n=1}^{M}$ treat each \emph{down interval} as a single event and summarize the \emph{episode lengths} directly (e.g., via $p10/p50/p90$ and the mean), independent of how many epochs fall within each episode. Thus, $\hat{\bar{W}}$ characterizes the expected wait from a uniformly sampled epoch, whereas the distribution of $\{L_n\}$ characterizes the durations of distinct below-threshold episodes once they occur. In a stationary alternating-renewal special case, these viewpoints are linked by the inspection-paradox identity $\mathbb{E}[W_{\downarrow}]=\mathbb{E}[L^2]/(2\mathbb{E}[L])$, illustrating how variability in down-run lengths inflates epoch-level waiting relative to the typical episode duration

\paragraph{Epoch-level time-series means.}
For quantities computed as means over epochs (e.g., $\hat{\mu}_E$ , or mean LCC fraction), we compute $\widehat{\tau}_{\mathrm{int}}$ from the epoch-level sequence and report an autocorrelation-corrected standard error via $N_{\mathrm{eff}}$.

\paragraph{Event-level down-run statistics.}
Down-run quantiles ($p_{10},p_{50},p_{90}$) are computed empirically from the list of run durations and do not require independence assumptions for their definition. For mean down-run durations, we optionally compute an autocorrelation-corrected standard error using the same $N_{\mathrm{eff}}$ methodology applied to the sequence of down-run lengths $\{L_n\}$ (noting that the sequence of episodes can remain correlated under cyclostationarity).

\subsection{Cyclostationarity caveat}
\label{app:metrics_cyclo_caveat}

The orbital geometry induces a cyclostationary structure: statistical properties can depend on orbital phase. The autocorrelation-based correction above is therefore best interpreted as a practical finite-trace adjustment rather than an exact stationary-theory guarantee. Nevertheless, it correctly captures the central issue for uncertainty quantification: temporally correlated samples contain less independent information than their raw count suggests, and $N_{\mathrm{eff}}$ provides a principled way to down-weight this over-count.

\pagenumbering{arabic}
\renewcommand{\thepage} {\arabic{page}}
\bibliographystyle{IEEEtran}
\bibliography{references}
\end{document}